\newenvironment{customthm}[1]
  {\innercustomthm}
  {\endinnercustomthm}
  \newenvironment{customcor}[1]
    {\innercustomcor}
    {\endinnercustomcor}
\definecolor{colabred}{RGB}{242, 60, 60}
\definecolor{colabgrey}{RGB}{98, 100, 112}
\definecolor{gogreen}{RGB}{14,173,105}
\definecolor{shinyshamrock}{RGB}{81,152,114}
\crefname{definition}{def.}{defs.}
\crefname{corollary}{cor.}{cors.}
\crefname{section}{\S}{\S}
\crefname{table}{tbl.}{tbls.}
\crefname{rule}{rule}{rules}
\crefname{case}{case}{cases}
\crefname{subcase}{subcase}{subcases}
\newcommand*{\eg}{\textit{e.g.}}
\newcommand*{\Eg}{\textit{E.g.}}
\newcommand*{\ie}{\textit{i.e.,}}
\newcommand*{\Ie}{\textit{I.e.,}}
\newcommand*{\etc}{\textit{etc.}}
\newcommand*{\etal}{\textit{et~al.}}
\newcommand*{\wrt}{w.r.t.}
\newcommand*{\Wrt}{W.r.t.}
\newcommand*{\st}{\textnormal{st:}}
\newcommand*{\ih}{\textnormal{i.h.}}
\newcommand{\selection}{\ensuremath{\oplus}}
\newcommand{\Selection}{\ensuremath{\bm \oplus}}
\newcommand{\role}[1]{\textsf{\color{colabred} \textbf{#1}}}
\newcommand{\msgLabel}[1]{\textsf{#1}}
\newcommand{\branch}{\ensuremath{\,\&}}
\newcommand{\Branch}{\ensuremath{\bm \,\&}}
\newcommand{\timeOut}{\ensuremath{{\scriptstyle\VarClock}}}
\newcommand{\emptyBuffer}{\ensuremath{\epsilon}}
\newcommand{\ctx}{\ensuremath{\mathbb{C}}}
\newcommand{\inaction}{\ensuremath{\bm 0}}
\newcommand{\roles}{\ensuremath{\bm\rho}}
\newcommand{\dom}{\textsf{dom}}
\newcommand{\rel}{\ensuremath{\mathcal{R}}}
\newcommand{\relSet}{\ensuremath{\mathfrak{R}}}
\newcommand{\recVar}[1]{\textnormal{\textrm{#1}}}
\newcommand{\bufferTracker}{\ensuremath{\Sigma}}
\newcommand{\safe}[3]{\ensuremath{(#1;#2)\textnormal{-}\fSafe(#3)}}
\newcommand{\reason}[1]{\textnormal{(#1)}}
\newcommand{\fv}{\texttt{fv}}
\newcommand{\fend}{\textnormal{\texttt{end}}}
\newcommand{\fBasic}{\textnormal{\texttt{basic}}}
\newcommand{\fPayloads}{\textnormal{\texttt{gc}}}
\newcommand{\fSafe}{\textnormal{\ensuremath{\varphi_\texttt{s}}}}
\newcommand{\fTCP}{\textnormal{\ensuremath{\varphi_\texttt{TCP}}}}
\newcommand{\fLive}{\textnormal{\ensuremath{\varphi_\texttt{L}}}}
\newcommand{\msgInsert}{\ensuremath{\rightsquigarrow}}
\newcommand{\cVar}[1]{\ensuremath{#1}}
\newcommand{\cSessionRole}[2]{\ensuremath{#1}[\role{#2}]}
\newcommand{\cInaction}{\ensuremath{\bm 0}}
\newcommand{\cRestriction}[2]{\ensuremath{(\nu #1)\,} #2}
\newcommand{\cPar}[2]{#1 \ensuremath{\,|\,} #2}
\newcommand{\cChoice}[2]{#1\ensuremath{+}#2}
\newcommand{\cSnd}[5]{#1\selection[\role{#2}]\ensuremath{\,!\,}\msgLabel{#3}\ensuremath{\langle}#4\ensuremath{\rangle .\,}#5}
\newcommand{\cBranch}[2]{#1\branch\ensuremath{\{}#2\ensuremath{\}}}
\newcommand{\cBranchL}[2]{#1\branch\ensuremath{\left\{}#2\ensuremath{\right.}}
\newcommand{\cBranchQuant}[2]{#1\branch\ensuremath{_{i\in I}\{}#2\ensuremath{\}}}
\newcommand{\cBranchQuantx}[4]{#1\branch\ensuremath{_{#3\in #4}\{}#2\ensuremath{\}}}
\newcommand{\cRcv}[4]{[\role{#1}]\ensuremath{\,?\,}\msgLabel{#2}\ensuremath{(}#3\ensuremath{).}#4}
\newcommand{\cTimeOut}[1]{\timeOut\ensuremath{.\,}#1}
\newcommand{\cDef}[2]{\textsf{def} #1 \textsf{in} #2}
\newcommand{\cDecl}[3]{\cVar{#1}\ensuremath{(}#2\ensuremath{) =} #3}
\newcommand{\cCall}[2]{\cVar{#1}\ensuremath{\langle}#2\ensuremath{\rangle}}
\newcommand{\cBuffer}[2]{\cVar{#1}\ensuremath{:}#2}
\newcommand{\cBufferCons}{\ensuremath{\cdot} }
\newcommand{\cBufferEntry}[4]{\ensuremath{(\role{#1}\triangleright\role{#2}\triangleleft\msgLabel{#3} \langle #4 \rangle)}}
\newcommand{\rSelection}{\textbf{[R-\Selection]}}
\newcommand{\rBranch}{\textbf{[R-\Branch]}}
\newcommand{\rTimeOut}{\textbf{[R-\timeOut]}}
\newcommand{\rChoice}{\textbf{[R-+]}}
\newcommand{\rCall}{\textbf{[R-\ensuremath{\bm X}]}}
\newcommand{\rCtx}{\textbf{[R-\ctx]}}
\newcommand{\rDrop}{\textbf{[R-\ensuremath{\bm\downarrow}]}}
\newcommand{\tInt}{\texttt{int}}
\newcommand{\tBool}{\texttt{bool}}
\newcommand{\tReal}{\texttt{real}}
\newcommand{\tUnit}{\texttt{unit}}
\newcommand{\tType}{\ensuremath{\mathbb{T}}}
\newcommand{\tBasic}{\ensuremath{\mathbb{B}}}
\newcommand{\tSession}{\ensuremath{\mathbb{S}}}
\newcommand{\tQueue}{\ensuremath{\mathbb{M}}}
\newcommand{\tSessionQueue}{\ensuremath{\tau}}
\newcommand{\tRcv}[4]{\role{#1}\ensuremath{\,?\,}\msgLabel{#2}\ensuremath{(}#3\ensuremath{).}#4}
\newcommand{\tSnd}[4]{\role{#1}\ensuremath{\,!\,}\msgLabel{#2}\ensuremath{(}#3\ensuremath{).}#4}
\newcommand{\tBranch}[1]{\branch\ensuremath{\left\{}#1\ensuremath{\right\}}}
\newcommand{\tBranchL}[1]{\branch\ensuremath{\left\{}#1\ensuremath{\right.}}
\newcommand{\tBranchQuant}[1]{\branch\ensuremath{_{i\in I}\{}#1\ensuremath{\}}}
\newcommand{\tBranchQuantx}[3]{\branch\ensuremath{_{#2\in #3}\{}#1\ensuremath{\}}}
\newcommand{\tSelect}[1]{\selection\ensuremath{\{}#1\ensuremath{\}}}
\newcommand{\tSelectL}[1]{\selection\ensuremath{\left\{}#1\ensuremath{\right.}}
\newcommand{\tSelectQuant}[1]{\selection\ensuremath{_{i\in I}\{}#1\ensuremath{\}}}
\newcommand{\tRec}[2]{\ensuremath{\mu}\recVar{#1}.#2}
\newcommand{\tEnd}{\textsf{\textbf{end}}}
\newcommand{\tBuffer}[4]{\role{#1}\ensuremath{\,!\,}\msgLabel{#2}\ensuremath{(}#3\ensuremath{)\cdot}#4}
\newcommand{\tTuple}[2]{(#1\ensuremath{\,;\,}#2)}
\newcommand{\tr}[1]{{\scriptsize\textnormal{\textup{[T-#1]}}}}
\newcommand{\trInaction}{\tr{\inaction}}
\newcommand{\trVar}{\tr{Var}}
\newcommand{\trVal}{\tr{Val}}
\newcommand{\trX}{\tr{X}}
\newcommand{\trSelection}{\tr{\selection}}
\newcommand{\trBranch}{\tr{\branch}}
\newcommand{\trCall}{\tr{Call}}
\newcommand{\trDef}{\tr{Def}}
\newcommand{\trPar}{\tr{\ensuremath{|}}}
\newcommand{\trChoice}{\tr{+}}
\newcommand{\trLift}{\tr{Lift}}
\newcommand{\trNew}{\tr{\ensuremath{\nu}}}
\newcommand{\trEmpty}{\tr{\emptyBuffer}}
\newcommand{\trBuffer}[1]{\tr{\ensuremath{\sigma_#1}}}
\newcommand{\crr}[1]{{\scriptsize\textnormal{\textup{[\contG-#1]}}}}
\newcommand{\crrSnd}[1]{\crr{Snd\ensuremath{_#1}}}
\newcommand{\crrCom}{\crr{Com}}
\newcommand{\crrTO}{\crr{\timeOut}}
\newcommand{\crrRec}{\crr{\tmu}}
\newcommand{\crrCong}{\crr{Cong}}
\newcommand{\contO}{\ensuremath{{\color{gogreen} \Theta}}}
\newcommand{\contG}{\ensuremath{{\color{gogreen} \Gamma}}}
\newcommand{\contGb}[1]{\ensuremath{{\color{gogreen} \Gamma_{#1}}}}
\newcommand{\contGbp}[1]{\ensuremath{{\color{gogreen} \Gamma_{#1}^\prime}}}
\newcommand{\contGp}{\ensuremath{{\color{gogreen} \Gamma^\prime}}}
\newcommand{\contGpp}{\ensuremath{{\color{gogreen} \Gamma^{\prime\prime}}}}
\newcommand{\contV}[2]{#1 : #2}
\newcommand{\contSnd}[5]{\cSessionRole{#1}{#2}\ensuremath{\,!\,}\role{#3}:\msgLabel{#4}(#5)}
\newcommand{\contCom}[4]{\ensuremath{#1}[\role{#2}][\role{#3}]:\msgLabel{#4}}
\newcommand{\contTimeout}[2]{\cSessionRole{#1}{#2}\smol\timeOut}
\newcommand{\spGen}[1]{{\color{colabred} \scriptsize\textnormal{\textup{[S-#1]}}}}
\newcommand{\spRed}{\spGen{\ensuremath{\rightarrow}}}
\newcommand{\spCom}{\spGen{Com}}
\newcommand{\spRec}{\spGen{\ensuremath{\mu}}}
\newcommand{\spR}[1]{\spGen{\ensuremath{\rel_{#1}}}}
\newcommand{\tpi}{\ensuremath{\pi}}
\newcommand{\tmu}{\ensuremath{\mu}}
\newcommand{\smol}{\ensuremath{\,}}
\newcommand{\sep}{\ensuremath{\:\:\:}}
\newcommand{\Sep}{\ensuremath{\:\:\:\:\:\:}}
\newcommand{\hl}[1]{\colorbox{lightgray}{#1}}
\newcommand{\raiseup}[1]{\raisebox{0.25ex}{#1}}
\newcommand{\quot}[1]{``#1''}
\newcommand{\raft}{\textsf{Raft}}
\newcommand{\paxos}{\textsf{Paxos}}
\newcommand{\tcp}{\texttt{TCP}}
\newcommand{\magpi}{MAG\tpi}
\begin{document}


\title{MAG\tpi: Types for Failure-Prone Communication}

\author{
  Matthew~Alan {Le Brun} \orcidlink{0000-0001-7394-0122} \and 
  Ornela Dardha \orcidlink{0000-0001-9927-7875}
}

\institute{
  University of Glasgow \\ \email{m.le-brun.1@research.gla.ac.uk} \\ \email{ornela.dardha@glasgow.ac.uk}
}

\maketitle


\begin{abstract}
\emph{Multiparty Session Types} (MPST) are a typing discipline for communication-centric systems, guaranteeing communication safety, deadlock freedom and protocol compliance.
Several works have emerged which model failures and introduce fault-tolerance techniques. However, such works often make assumptions on the underlying network, \eg, TCP-based communication where messages are guaranteed to be delivered; or adopt {centralised} reliable nodes and an ad-hoc notion of reliability; or address \emph{only} a single kind of failure, \eg, node crash failures.

In this work, we develop \magpi---a Multiparty, Asynchronous and Generalised \tpi-calculus, which is the \emph{first language and type system} to accommodate in unison: ($i$) the widest range of non-Byzantine faults, including \emph{message loss, delays} and \emph{reordering}; \emph{crash failures} and \emph{link failures}; and \emph{network partitioning}; ($ii$) a novel and most general notion of \emph{reliability}, taking into account the viewpoint of \emph{each} participant in the protocol; ($iii$) the spectrum of network assumptions from the lowest UDP-based network programming to the TCP-based application level.
We prove subject reduction and session fidelity; process properties, (deadlock freedom, termination); failure-handling safety and reliability adherence.
\end{abstract}

\keywords{Session types \and Distributed protocols \and Failures \and Timeouts}

\section{Introduction}\label{sec:intro}

Despite large investments into fault-prevention techniques, failures still regularly 
occur in complex distributed applications.
%
%
It is widely agreed that traditional methods of verification using software testing 
do not provide high levels of confidence in the correctness of 
distributed algorithms.
This is mainly due to the \emph{non-deterministic} behaviour inherent to these 
protocols, which makes it unfeasible to manually test for all edge cases.
This problem is bypassed by using exhaustive techniques such 
as model checking~\cite{DBLP:books/daglib/0007403,DBLP:journals/fac/Rossi21}, 
capable of exploring the entirety of the state space of a program to verify 
its correctness.
However, building suitable models for complex distributed algorithms is arduous, 
expensive, and often intractable (due to the state explosion 
problem~\cite{DBLP:conf/laser/ClarkeKNZ11}).
Furthermore, even if an algorithm is successfully encoded into a suitable model 
and checked, guarantees of correctness are on the \emph{design} of the algorithm, and not 
on the software \emph{implementation}; handwritten code is still prone to 
human error. 
Contrastively, types and type systems~\cite{DBLP:books/daglib/0005958} are
lightweight forms of verification. 
Baked in programming languages, types provide guarantees directly on handwritten code 
and aid developers in implementing software which is correct by construction.
Specific to concurrent and distributed computing, 
\emph{session types}~\cite{DBLP:conf/concur/Honda93,THK94,HVK98,V12,DBLP:journals/pacmpl/ScalasY19,DBLP:journals/jacm/HondaYC16}
have quickly grown in popularity since their initial conceptualisation 
\cite{DBLP:conf/concur/Honda93}, spanning from \emph{binary}--two participants, to \emph{multiparty}--many participants.

Session types enforce that processes communicate according to a protocol specification.
Consequently, desirable properties about communication, \eg, \emph{type safety} (communication occurs error-free), \emph{protocol compliance} (or session fidelity; processes behave according to their predefined protocol), 
and \emph{deadlock freedom} (processes do not get stuck waiting), can be \emph{statically} determined by 
a type checker. 
To this aim, session types have been implemented in various programming languages, 
including Java~\cite{KouzapasDPG16,Dezani_2006}, Go~\cite{DBLP:conf/popl/LangeNTY17}, Haskell~\cite{KokkeD21,orchard2017session}, Scala~\cite{ScalasDHY17}, Rust~\cite{DBLP:conf/ecoop/LagaillardieNY22}, Elixir~\cite{DBLP:conf/agere/TaboneF21}.

To date, most session type theories are designed for \emph{concurrent}, as opposed 
to \emph{distributed} processes---\ie\ it is commonly assumed that communication failures 
do not occur.
For the few (and rapidly increasing) works that do consider failures, heavy 
assumptions are made that impede their viability for realistic complex distributed 
applications.
\Eg, \emph{asynchronous} theories~\cite{DBLP:conf/concur/MajumdarMSZ21,DBLP:journals/jacm/HondaYC16,DBLP:journals/pacmpl/ScalasY19} 
use \emph{message buffers} to model distributed communication under \quot{TCP-like} assumptions: messages are 
guaranteed to be delivered and messages from a single sender do not get 
reordered.
\emph{Affine sessions}~\cite{DBLP:journals/lmcs/MostrousV18,DBLP:journals/pacmpl/FowlerLMD19,DBLP:journals/mscs/CapecchiGY16} 
only allow failure-handling of application level failures through \emph{try-catch} blocks; 
there is no support for \emph{arbitrary} failures that may stem from hardware 
faults, network inconsistencies \etc\
\emph{Coordinator model} approaches~\cite{DBLP:conf/forte/AdameitPN17,DBLP:conf/forte/ChenVBZE16,DBLP:conf/esop/VieringCEHZ18} 
assume some degree of reliability, be it as a 
central resilient process, a reliable broadcast, or fixed synchronisation 
points.

The harsh reality is that many real-world distributed protocols (\eg, consensus algorithms) 
cannot assume \emph{any} of these conditions.
Networks introduce many points of failure into a system: nodes may crash, 
messages can be dropped, delayed or duplicated, links between nodes may fail \etc\
Designers of distributed protocols have acknowledged that failure is inevitable,
and so algorithms are designed to withstand a threshold of failure whilst still 
achieving their expected behaviour---known as 
\emph{fault-tolerance}~\cite{laprie1985dependable}.
Examples of fault-tolerant protocols (extensively) used today
include the \paxos~\cite{DBLP:journals/tocs/Lamport98} and 
\raft~\cite{DBLP:conf/usenix/OngaroO14} 
consensus algorithms, which assume the possibility of \emph{all non-Byzantine} 
faults---\ie\ node crashes, link failures, network partitions, and message inconsistencies.

Although the correctness of these algorithms has been heavily studied,
many of them are developed with limited confidence in the correctness of 
the deployable artifact, due to the reasons previously outlined.
To fill this gap, we need type-based verification, which can be made available to programming languages, thus supporting designers and developers in designing and implementing correct distributed algorithms. 
While (multiparty) session types have made great impact in modelling structured communication and guaranteeing relevant properties, their theory is not yet expressive to model these complex algorithms.
%
%

In this paper, we take steps towards filling this gap by presenting \magpi---a Multiparty, Asynchronous and Generalised \tpi-calculus---the first language and type system able to accommodate: ($i$) the widest range of non-Byzantine faults, including \emph{message loss}, \emph{delays} and \emph{reordering}; \emph{crash} and \emph{link failures}; and \emph{network partitioning}---all by using \emph{timeouts}; ($ii$) a novel and most general notion of \emph{reliability}, taking into account the viewpoint of each participant in the protocol; and ($iii$) a spectrum of network assumptions---from the lowest level of network programming based on UDP, to application level based on TCP.
\begin{example}[Ping Pong: Types]
\label{ping-pong}
We illustrate \magpi\ with a simplified version of the ping utility from the 
Internet Control Message Protocol (ICMP\footnote{\url{https://www.rfc-editor.org/rfc/rfc792}}), which is our running example. 
The ping utility consists of a total of three \emph{roles} communicating amongst each other: two roles, \role p and \role r, communicate \emph{reliably} with each other, and both communicate \emph{unreliably} with a third role \role q. 
Our definition of reliability (\cref{subsec:types-reliability}) takes into account the viewpoint of each role, thus allowing roles to have their own (possibly empty) \emph{reliability set}.
Following the assumptions above, the reliability set for \role p is $\{\role r\}$, for \role r is $\{\role p\}$, and for \role q is $\emptyset$.

Below we give the session types, denoted $\tSession_\role{r}$, $\tSession_\role{p}$ and $\tSession_\role{q}$ for roles \role  r, \role p and \role q respectively.
\footnotesize{\begin{align*}
        \begin{array}{c}
            \tSession_\role{r} = \tBranch{\tRcv{p}{ok}{}{\tEnd},\ \tRcv{p}{ko}{}{\tEnd}}
            \\[1em]
            \tSession_\role{p} = \tSnd{q}{ping}{}{
                \tBranchL{
                    \begin{array}{l}
                        \tRcv{q}{pong}{}{\tSnd{r}{ok}{}{\tEnd}}, \\
                        \cTimeOut{\tSnd{q}{ping}{}{
                            \tBranchL{
                                \begin{array}{l}
                                    \tRcv{q}{pong}{}{\tSnd{r}{ok}{}{\tEnd}}, \\
                                    \cTimeOut{\tSnd{q}{ping}{}{
                                        \tBranchL{
                                            \begin{array}{l}
                                                \tRcv{q}{pong}{}{\tSnd{r}{ok}{}{\tEnd}}, \\
                                                \cTimeOut{\tSnd{r}{ko}{}{\tEnd}}
                                            \end{array}
                                        }
                                    }}
                                \end{array}
                            }
                        }}
                    \end{array}
                }
            }
            \\[3em]
            \tSession_\role{q} = \tBranchL{
                \begin{array}{l}
                    \tRcv{p}{ping}{}{\tSnd{p}{pong}{}{\tEnd}}, \\
                    \cTimeOut{\tBranchL{
                        \begin{array}{l}
                            \tRcv{p}{ping}{}{\tSnd{p}{pong}{}{\tEnd}}, \\
                            \cTimeOut{\tBranchL{
                                \begin{array}{l}
                                    \tRcv{p}{ping}{}{\tSnd{p}{pong}{}{\tEnd}}, \\
                                    \cTimeOut{\tEnd}
                                \end{array}}
                            }
                        \end{array}}
                    }
                \end{array}
            }
        \end{array}
    \end{align*}}

Role \role r is the \emph{receiver} ($\&$--called \emph{branching}), which waits on two options: it receives from \role p either the label \msgLabel{ok} or \msgLabel{ko} and then it terminates the protocol ($\tEnd$).
Role \role p is the \emph{sender} ($\oplus$\footnote{For readability, we adopt a shorthand notation for sending towards a 
single role and for payloads of type unit, such that $\tSelect{\tSnd{s}{m}{\tUnit}{\tSession}}$
is represented by $\tSnd{s}{m}{}{\tSession}$.}--called \emph{selection}), and it tries to obtain information on the status of \role q.
It begins by sending a \msgLabel{ping} message to \role q ($\role q \, ! \, \msgLabel{ping}()$), then waits to receive from \role q.
If a \msgLabel{pong} 
is received ($\role q\, ?\, \msgLabel{pong}()$) in the top branch, then it concludes that the status of \role q is \emph{reachable} and sends this information to \role r ($\role r\, !\, \msgLabel{ok}$()), after which it terminates.
Alternatively, \role p enters a \emph{timeout branch} (${\scriptstyle\VarClock}$). For simplicity, we assume \role p will attempt to communicate with \role q three times (shown in the three-time indentation of the timeout branch) before 
assuming \role q is \emph{unreachable}; after which the session will also terminate by sending \msgLabel{ko} to \role r, followed by \tEnd.
In the same lines, the protocol for role \role q is given by the session type $\tSession_\role{q}$, where its timeout branches match the timeouts from $\tSession_\role{p}$.
\end{example}






\subsection{Contributions}
We now present our contributions \wrt\ our Multiparty, Asynchronous, and Generalised \tpi-calculus (\magpi).

\begin{enumerate}
\item
\textbf{MAG\tpi} language (\cref{sec:calculus}):
\begin{itemize}
\item
\magpi\ is \emph{the first language} to support the widest set of non-Byzantine faults, including \role{message loss}, \role{message delays} and \role{message reordering}; \role{crash failures} and \role{link failures}; and \role{network partitioning}.
\item
\magpi\ is \emph{the first language} to introduce \textbf{timeouts} in receive branches
(used for handling network failures), as well as support
\textbf{undirected branching} in a \emph{generalised} setting---the ability to simultaneously expect an 
incoming message {from more than one sender}.
\end{itemize}

\item
\textbf{MAG\tpi} type system (\cref{sec:types}):
\begin{itemize}
\item
is a \emph{conservative extension} of a {generalised} asynchronous 
MPST theory~\cite{DBLP:journals/pacmpl/ScalasY19}, benefiting from: the ability to model \emph{more protocols} than traditional syntactic theories (\eg\ global types); and the flexibility of checking desired properties, such as deadlock freedom or termination, \emph{a posteriori}---as opposed to during the design phase.

\item
supports \textbf{undirected branching/selection} and is the first type system to introduce 
\textbf{timeout branches}.

\item
supports a novel and most general \emph{reliability} definition (\cref{subsec:types-reliability}), taking into account the viewpoint of \emph{each} participating role, and is built on \emph{optional role-dependant} 
{reliability} assumptions.
\end{itemize}

\item
\textbf{Type properties} (\cref{sec:types-props}):
We prove {subject reduction} (\cref{thm:sr}) and 
{session fidelity} (\cref{thm:sf}).
We show failure-handling safety (\cref{cor:failure-handling}) and its inverse, reliability adherence (\cref{cor:rel-adherence}), which strictly connect timeouts and reliability. We prove process properties (\cref{thm:proc-props}) \eg\ deadlock-freedom, termination, 
liveness, in line with~\cite{DBLP:journals/pacmpl/ScalasY19}. Finally, as our \magpi\ type system is Turing-complete, we prove decidable type checking (\cref{thm:decidability}) and decidability of process properties for finite message buffers (\cref{thm:decidable-props}).

\item
\textbf{TCP vs. UDP} (\cref{sec:network_assumptions}):
\magpi\ is expressive enough to capture a range of network assumptions: from low-level network programming operating over the User Datagram Protocol (UDP); to application-level software operating over the Transmission Control Protocol (TCP).

\item
\textbf{Case study} (\cref{sec:examples}):
we further demonstrate the use of timeouts and undirected branching to model a Domain Name System ({DNS}) distributed protocol with a cache and in-built load-balancer; we also show the properties it satisfies, including safety and deadlock freedom.
Further examples are available in \cref{app:examples}, including a prototype specification 
of a leader election algorithm used by consensus protocols.
\end{enumerate}

\section{MAG\tpi: Language}\label{sec:calculus}
We present a multiparty session \tpi-calculus, based on the 
theory of Scalas and Yoshida~\cite{DBLP:journals/pacmpl/ScalasY19}, extended to 
accurately model real-world distributed network environments.
%
We assume the \emph{lowest 
level} of abstraction---the only \emph{failure detection mechanism} available 
to a process is an upper-bound wait limit, \ie\ a \emph{timeout}.

Our calculus presents three novel features:
\begin{enumerate*}[label=\textit{(\roman*)}]
  \item the new \emph{timeout} primitive; 
  \item the capability of expecting a message from \emph{different senders}; and 
  \item operational semantics which can model \emph{various non-Byzantine failures}.
\end{enumerate*}
Timeouts can be attached to receive actions---henceforth referred to as \emph{branches}---and are used to describe an alternative process to be executed in case failures are \emph{assumed} to occur (akin to error handlers).

Failures are said to be \emph{assumed}, as opposed to detected, since we model the impossibility 
result of distinguishing between a delayed \emph{vs} lost message.
Thus, it is possible for a processes to prematurely timeout without its corresponding message 
having been lost---just like the real-world!

The benefit of our approach is that the \emph{failure detection mechanism is agnostic to 
the type of fault}, allowing us to model in unison \textbf{message loss}, \textbf{message delay}, \textbf{crash-stop failures},
\textbf{link failures}, and \textbf{network partitions}.

\subsection{Syntax}
\begin{definition}[Language Syntax]\label{def:calc-syntax}
The Multiparty, Asynchronous and Generalised \tpi-calculus syntax is defined by the grammar in \cref{fig:calc-syntax}.
\begin{figure}[t]
\textnormal{
\begin{align*}
    c &::= \sep x         \sep|\sep \cSessionRole{s}{p}  & (\textit{variable, session w/ role})\\
    d &::= \sep v         \sep|\sep c                    & (\textit{basic value, variable, session w/ role})\\
    w &::= \sep v         \sep|\sep \cSessionRole{s}{p}  & (\textit{basic value, session w/ role}) \\
 P, Q &::= \sep \inaction \sep|\sep \cRestriction{s}{P}  & (\textit{inaction, restriction}) \\
      & \!\sep|\sep \cPar{P}{Q} \sep|\sep \cChoice{P}{Q} & (\textit{composition, non-deterministic choice}) \\
      & \!\sep|\sep \cSnd{c}{q}{m}{d}{P} & (\textit{selection towards role }\role{q})\\
      & \!\sep|\sep \cBranchQuant{c}{\cRcv{q$_i$}{m$_i$}{x_i}{P_i}} & (\textit{reliable branching from roles }\role{q$_i$}) \\
      & \!\sep|\sep \cBranchQuant{c}{\cRcv{q$_i$}{m$_i$}{x_i}{P_i},\;\;\cTimeOut{Q}} & (\textit{branching from roles }\role{q$_i$}\textit{ w/ \textbf{timeout} Q}) \\
      & \!\sep|\sep \cDef{\;D\;}{\;P\;} \sep|\sep \cCall{X}{\tilde{d}} & (\textit{process definition, process call}) \\
      & \!\sep|\sep \cBuffer{s}{\sigma}   & (\textit{session buffer}) \\
    D &::= \sep \cDecl{X}{\tilde{x}}{P} & (\textit{process declaration}) \\
\sigma &::= \cBufferEntry{p}{q}{m}{w}\cBufferCons\sigma \sep|\sep \emptyBuffer & (\textit{session queue, empty session queue})
\end{align*}}
\caption{Syntax for MAG\tpi}
\label{fig:calc-syntax}
\end{figure}
\end{definition}
Communication happens over sessions ($s,s^\prime$) between a number 
of roles (\role{p}, \role{q}) ranging over set \roles.
The primitives of the calculus are \emph{sessions with roles} \cSessionRole{s}{p}, and 
basic values $v$, both of which can be abstracted using \emph{variables} ($x,y$).
Processes ($P,Q$), include the following standard constructs:
\begin{enumerate*}[label=\textit{(\roman*)}]
  \item \emph{inaction} \inaction\ represents process termination;
  \item session \emph{restriction} \cRestriction{s}{\cVar{\!\!P}} binds a new session $s$ in $P$;
  \item parallel \emph{composition} declares two concurrent processes;
  \item \emph{selection} \cSnd{\cVar{c}}{q}{m}{\cVar{d}}{\cVar{P}} uses channel $c$
        to send a message to \role{q} with label \msgLabel{m} and payload \cVar{d}---after sending, the 
        process continues according to $P$;
  \item \emph{definition} and \emph{declaration} allow processes to be assigned names, modelling recursion 
        through the use of process \emph{calls}. 
\end{enumerate*}
We now elaborate on the novelties in our language.
\begin{description}
  \item[\hl{\cBranchQuant{c}{\cRcv{q$_i$}{m$_i$}{\cVar{d}}{\cVar{P}}}}]
    Quantification over roles in a branch allows processes to receive from one in a range of other roles.
    This has practical applications in a multitude of distributed protocols, \eg\ 
    \emph{load balancers}.
    
  \item[\cBranchQuant{c}{\cRcv{q$_i$}{m$_i$}{$d$}{$P$}, \hl{\cTimeOut{$Q$}}}]
    Timeouts are used as a \emph{failure detection mechanism} in receive branches.
    If a failure is assumed to have lost or prevented the incoming message, then process $Q$ is initiated.
    It is key to note that timeouts are non-deterministic---they model an arbitrary and unknown duration of time 
    a process waits before assuming a failure has occurred.
  
  \item[\hl{$P+Q$}] 
    Non-deterministic choice randomly picks between two possible process continuations.
    We use this construct to simplify examples for better presentation.
    Concretely, it replaces the need for \emph{expressions} and \emph{if-then-else}
    constructs, which are routine and orthogonal to our formulation.
  
  \item[\hl{$s:\sigma$}]
    Message buffer for session $s$.
    An entry in the buffer \cBufferEntry{p}{q}{m}{w} models a message ``in transit'' from 
    role \role{p} to \role{q} with label \msgLabel{m} and payload $w$. This is needed to accommodate asynchrony in our language.
\end{description}

\begin{figure}[t]
\begin{tabular}{l l l}
  \rSelection & \cPar{\cSnd{\cSessionRole{s}{p}}{q}{m}{\cVar{w}}{\cVar{P}}}{\cBuffer{s}{$\sigma$}} 
                $\longrightarrow$ 
                \cPar{\cVar{P}}{\cBuffer{s}{$\sigma$ \cBufferCons \cBufferEntry{p}{q}{m}{\cVar{w}} \cBufferCons \emptyBuffer}}\\[0.8em]
  \rBranch    & \cPar{\cBranchQuant{\cSessionRole{s}{q}}{\cRcv{p$_i$}{m$_i$}{\cVar{x_i}}{\cVar{P_i}} [, \cTimeOut{\cVar{Q}}]}}{\cBuffer{s}{\cBufferEntry{p$_k$}{q}{m$_k$}{\cVar{w}} \cBufferCons $\sigma$}} \\[0.2em]
              & $\Sep\longrightarrow$
                \cPar{\cVar{P_k[^w/_{x_k}]}}{\cBuffer{s}{$\sigma$}} & for $k \in I$\\[0.8em]
  \rTimeOut   & \cPar{\cBranchQuant{\cSessionRole{s}{q}}{\cRcv{p$_i$}{m$_i$}{\cVar{x_i}}{\cVar{P_i}}, \cTimeOut{\cVar{Q}}}}{\cBuffer{s}{$\sigma$}} 
                $\longrightarrow$ 
                \cPar{\cVar{Q}}{\cBuffer{s}{$\sigma$}}\\[0.8em]
  \rChoice    & \cChoice{\cVar{P_1}}{\cVar{P_2}} $\longrightarrow$ \cVar{P_i} & for $i \in \{1,2\}$\\[0.8em]
  \rCall      & \cDef{\cDecl{X}{\cVar{x_1},$\dotsc$,\cVar{x_n}}{\cVar{P}}}{(\cPar{\cCall{X}{$w_1$,$\dotsc$,$w_n$}}{\cVar{Q}})} \\[0.2em]
              & $\Sep\longrightarrow$
                \cDef{\cDecl{X}{\cVar{x_1},$\dotsc$,\cVar{x_n}}{\cVar{P}}}{(\cPar{\cVar{P}[$^{w_1}$/$_{x_1}$]$\,\dotsb$[$^{w_n}$/$_{x_n}$]}{\cVar{Q}})} \\[0.8em]
  \rCtx       & \cVar{P} $\longrightarrow$ \cVar{P^\prime} $\implies$ \ctx[\cVar{P}] $\longrightarrow$ \ctx[\cVar{P^\prime}] \\[0.8em]
  \rDrop      & \cBuffer{s}{\cVar{h} \cBufferCons $\sigma$} $\longrightarrow$ \cBuffer{s}{$\sigma$}
\end{tabular}
\caption{Reduction rules for MAG\tpi}
\label{fig:calc-semantics}
\end{figure}

\subsection{Operational Semantics}

We begin with definitions of a reduction context and buffer congruence.

\begin{definition}[Reduction Context]\label{def:calc-semantics}
A \textbf{reduction context}\ \ctx\ abstracts away an outer environment from a process, 
and is given by:
\[
  \ctx\ ::=\ \cPar{\ctx}{P}\sep |\sep \cRestriction{s}{\ctx}\sep |\sep \textnormal{\cDef{$D$}{\ctx}}\sep |\sep [\;]
\]
Hence, $\ctx [P]$ refers to process $P$ under some arbitrary context \ctx.
\end{definition}

\begin{definition}[Buffer Congruence]
\label{def:buffer_cong}
A process containing \textbf{only} a buffer under its restriction is congruent to inaction.
Message buffers observe total reordering.
\[
  \cRestriction{s}{\cBuffer{s\!}{\!\sigma}} \;\equiv\; \inaction 
  \Sep\Sep\Sep\Sep
  \cBuffer{s\!}{\!\sigma_1\cBufferCons h_1 \cBufferCons h_2 \cBufferCons\sigma_2} \;\equiv\; \cBuffer{s\!}{\!\sigma_1\cBufferCons h_2 \cBufferCons h_1 \cBufferCons\sigma_2}
\]
\end{definition}

\begin{definition}[OS]
The operational semantics for \magpi\ is given via a \emph{reduction relation} $\longrightarrow$ inductively defined in \cref{fig:calc-semantics}, together with standard \emph{structural congruence rules} \cite{DBLP:journals/pacmpl/ScalasY19} and two \emph{buffer congruence rules} defined in \cref{def:buffer_cong}.
\end{definition}

Let us now comment on the reduction rules (\cref{fig:calc-semantics}).
Processes send messages using the \textbf{selection rule} \rSelection; this adds the sent 
message as a new entry in the session buffer, and advances the process to its continuation.
Sent messages are read from the buffer using the \textbf{branching rule} \rBranch. 
If the receiver has a valid branch matching the sender and message label, then it advances to 
the specific continuation of said branch (a timeout branch for this rule is optional).
The substitution \cVar{P_k[^w/_{x_k}]} denotes the replacement of variable $x_k$ with the payload 
value $w$ in the continuation process $P_k$.
The \textbf{timeout rule} \rTimeOut\ advances processes to their timeout branch \emph{without} changing 
the buffer.
Non-deterministic choice is resolved using the \textbf{choice rule} \rChoice, which advances the process 
to one of the two possible continuations.
The \textbf{call rule} \rCall\ replaces a process call with its defined process, substituting each parameter.
Processes can reduce under a context using the \textbf{context rule} \rCtx.
Lastly, messages can be lost from the buffer with the \textbf{drop rule} \rDrop.

We now unpack how our semantics deals with failures. The reduction rules in \cref{fig:calc-semantics} allow various forms of failures to be modelled, stemming from the versatility and elegance of the drop rule \rDrop.
The following elaborates on how this rule can be utilised to model different types of failure:
\begin{itemize}
  \item \role{Message loss} is modelled directly through the reduction rule \rDrop.
  \item \role{Crash-failure} is modelled through repeated applications of \rDrop\ for a particular role.
        \Eg, to model a crash of role \role{p}, the reduction step \rDrop\ should be applied 
        to all messages that enter the buffer matching the pattern \cBufferEntry{p}{\_}{\_}{\_} 
        (\_~symbolises a ``don't care'' value).
  \item \role{Link-failure} is modelled using a similar method; the difference being that messages between 
        \emph{two} specific recipients are dropped.
        \Eg, modelling a link-failure between roles \role{p} and \role{q} requires \rDrop\ to be applied 
        to all messages entering the buffer with the patterns \cBufferEntry{p}{q}{\_}{\_} and 
        \cBufferEntry{q}{p}{\_}{\_}.
  \item \role{Message delay} is modelled by applying rule \rTimeOut\ to a branch whilst a valid message 
        resides in the buffer.
        \Eg:
        \begin{align*}
          &\cPar{\cBranchQuant{\cSessionRole{s}{q}}{\cRcv{p$_i$}{m$_i$}{x_i}{P_i},\ \cTimeOut{Q}}\ }{\ \cBuffer{s\!}{\!\cBufferEntry{p$_k$}{q}{m$_k$}{w} \cBufferCons \sigma}} \\
          &\Sep\longrightarrow\ 
          \cPar{Q\ }{\ \cBuffer{s\!}{\!\cBufferEntry{p$_k$}{q}{m$_k$}{w} \cBufferCons \sigma\:}} &\text{for}\ k \in I.
        \end{align*}
  \item Total \role{message reordering} is modelled via \emph{buffer congruence rules} (\cref{def:buffer_cong}).
  \item \role{Network partitions} can be represented using multiple \emph{link failures}.
\end{itemize}

The granularity at which we model failures allows for degrees of customisation.
\Eg, benign fault-tolerant consensus algorithms typically assume the possibility of \emph{all} non-Byzantine 
faults, therefore all the aforementioned failures are required.
Alternatively, an application assumed to run over a trusted TCP network need not worry about single message drops, 
and hence \rDrop\ should only be applied to model node crash and link failures.

\begin{definition}[Well-formedness]\label{def:well-formed}
  To ensure that communication is possible, we require that a well-formed process has a 
  buffer for each session, \ie
  \textnormal{\[
    P = \cRestriction{s}{Q} \implies Q\equiv \cRestriction{\tilde{s^\prime}}{(\cPar{Q^\prime\:}{\cBuffer{\:s\!}{\!\sigma}})}
  \]}
\end{definition}

\Cref{def:well-formed} introduces a well-formedness condition to guarantee that 
a session always guards its buffer, hence ensuring that messages always have a queue 
to be placed in. From now on, we will only consider well-formed processes.

Before concluding this section, we recall our ping pong running example from the introduction, and present below the processes for roles \role p, \role q and \role r.
\begin{example}[Ping Pong: Processes]\label{ex:ping-procs}
\footnotesize{\begin{align*}
        \begin{array}{c}
            P_\role{p} = \cSnd{\cSessionRole{s}{p}}{q}{ping}{}{
                \cBranchL{\cSessionRole{s}{p}}{
                    \begin{array}{l}
                        \cRcv{q}{pong}{}{P_\role{p}^\textit{ok}}, \\
                        \cTimeOut{\cSnd{\cSessionRole{s}{p}}{q}{ping}{}{
                            \cBranchL{\cSessionRole{s}{p}}{
                                \begin{array}{l}
                                    \cRcv{q}{pong}{}{P_\role{p}^\textit{ok}}, \\
                                    \cTimeOut{\cSnd{\cSessionRole{s}{p}}{q}{ping}{}{
                                        P^\prime_\role{p}
                                    }}
                                \end{array}
                            }
                        }}
                    \end{array}
                }
            }
            \\[2em]
            P_\role{p}^\textit{ok} = \cSnd{\cSessionRole{s}{p}}{r}{ok}{}{\inaction}
            \Sep
            P_\role{p}^\prime = \cBranchL{\cSessionRole{s}{p}}{
                \begin{array}{l}
                    \cRcv{q}{pong}{}{P_\role{p}^\textit{ok}}, \\
                    \cTimeOut{\cSnd{\cSessionRole{s}{p}}{r}{ko}{}{\inaction}}
                \end{array}
            }
            \\[2em]
            P_\role{q} = \cBranchL{\cSessionRole{s}{q}}{
                \begin{array}{l}
                    \cRcv{p}{ping}{}{P_\role{q}^\textit{pong}},\\
                    \cTimeOut{\cBranchL{\cSessionRole{s}{q}}{
                        \begin{array}{l}
                            \cRcv{p}{ping}{}{P_\role{q}^\textit{pong}}, \\
                            \cTimeOut{\cBranch{\cSessionRole{s}{q}}{
                                \cRcv{p}{ping}{}{P_\role{q}^\textit{pong}},\ 
                                \cTimeOut{\inaction}
                            }}
                        \end{array}
                    }}
                \end{array}
            }
            \\[2em]
            P_\role{q}^\textit{pong} = \cSnd{\cSessionRole{s}{q}}{p}{pong}{}{\inaction}
            \\[1em]
            P_\role{r} = \cBranch{\cSessionRole{s}{r}}{\cRcv{p}{ok}{}{\inaction}, \cRcv{p}{ko}{}{\inaction}}
        \end{array}
    \end{align*}}
\end{example}

\section{MAG\tpi: Type System}\label{sec:types}

We introduce the type system for MAG\tpi,
which is a conservative extension of the \emph{generalised} asynchronous 
MPST theory~\cite[sec. 7]{DBLP:journals/pacmpl/ScalasY19}.
\emph{Generalised} MPST stray away from global protocol specifications 
(global types) and instead operate on user-defined localised specifications of each 
participating role (local types).
The benefits of working with such theory include:
\begin{enumerate*}[label=\textit{(\roman*)}]
\item
the ability to capture a larger set of viable protocols compared to traditional syntactic methods (\eg\ global types) of enforcing consistent communication;
\item
the ability to model protocols of different requirements. 
\end{enumerate*}
In particular, instead of syntactically enforcing programmers to write, \eg, deadlock-free code, a generalised theory allows programmers to unrestrictedly design protocols that are checked \emph{a posteriori} against any number of required properties, such as deadlock-freedom, termination \etc\

The novelties of our type system include:
\begin{enumerate*}[label=\textit{(\roman*)}]
  \item \emph{undirected branching/selection};
  \item \emph{timeout branches} (syntax in \cref{subsec:types-syntax}); and
  \item \emph{reliability sets}---sets of roles assumed to not fail, from the perspective of \emph{each} role (\cref{subsec:types-reliability}). 
\end{enumerate*}
Reliability sets (possibly empty) enforce the use of \emph{timeouts} for all failure-prone  
communication.

As in~\cite{DBLP:journals/pacmpl/ScalasY19}, our type system does \emph{not} use
global types, but solely relies on local types.
Consequently, typing contexts must obey a safety property to ensure subject 
reduction (\cref{subsec:types-contexts}).
Finally, we present the rules for our type system in \cref{subsec:types-rules}, and 
discuss its key properties in \cref{subsec:types-props}.

\subsection{Types}\label{subsec:types-syntax}
Our MPST theory is designed for the distributed 
computing setting.
Concretely, our type system (\cref{def:type-syntax}) is \emph{asynchronous}; it allows 
\emph{branching} (resp. \emph{selection}) from (resp. to) multiple roles; and supports 
\emph{timeout} continuation types.

\begin{definition}[Typing syntax]\label{def:type-syntax}
The typing syntax is defined using the grammar in \cref{session_types}.
\begin{figure}[t]
\centering
\begin{minipage}[t]{0.55\textwidth}
    \textbf{Basic and Session Types}\\[1em]
    \begin{tabular}{r c l}
        \tType    & ::= & \tBasic\ $|$ \tSession\ \\[0.5em]
        \tBasic   & ::= & \tInt\ $|$ \tBool\ $|$ \tReal\ $|$ \tUnit\ $|$ $\dotsb$ \\[0.5em]
        \tSession & ::= & \tBranchQuant{\tRcv{\role{p$_i$}}{\msgLabel{m$_i$}}{\tType$_i$}{\tSession$_i$}[, \cTimeOut{\tSession$^\prime$}]} \\[0.5em]
                    & $|$ & \tSelectQuant{\tSnd{\role{p$_i$}}{\msgLabel{m$_i$}}{\tType$_i$}{\tSession$_i$}} \\[0.5em]
                    & $|$ & \tRec{t}{\tSession} $|$ \recVar{t} $|$ \tEnd
    \end{tabular}
\end{minipage}
\begin{minipage}[t]{0.35\textwidth}
    \textbf{Buffer Types}\\[1em]
    \begin{tabular}{r c l}
        \tQueue   & ::= & \tBuffer{p}{m}{\tType}{\tQueue} $|$ \emptyBuffer\ \\[2em]
    \end{tabular}

    \textbf{Session-Buffer Types}\\[1em]
    \begin{tabular}{r c l}
        \tSessionQueue & ::= & \tQueue\ $|$ \tSession\ $|$ \tTuple{\tQueue}{\tSession}\\[0.5em]
    \end{tabular}
\end{minipage}
\caption{Basic Types, Session Types, Buffer Types and Session-Buffer Types}
\label{session_types}
\end{figure}
\begin{figure}[t]
    \begin{mathpar}
    \inferrule
        {\ }
        {\tType \equiv \tType}
    \and
    \inferrule
        {\ }
        {\tQueue_1\cBufferCons\tQueue_2 \equiv \tQueue_2\cBufferCons\tQueue_1}
    \and
    \inferrule
        {\ }
        {\emptyBuffer\cBufferCons\emptyBuffer \equiv \emptyBuffer}
    \and
    \inferrule
        {\tQueue \equiv \tQueue^\prime \\ \tSession \equiv \tSession^\prime}
        {\tTuple{\tQueue}{\tSession} \equiv \tTuple{\tQueue^\prime}{\tSession^\prime}}
    \end{mathpar}
    \caption{Type congruence rules}
    \label{fig:type-cong}
\end{figure}
For undirected branching and selection, $I \neq \emptyset$ and role-label tuples 
\textnormal{$(\role{p$_i$},\msgLabel{m$_i$})$} must be pairwise distinct.
Recursion variables cannot be free and must appear guarded under branching/selection types.
\end{definition}
Type \tType\ denotes either a \emph{basic} type \tBasic, or a 
\emph{session type} \tSession, and is used to type variables.
Session types describe how a channel should be used: 
\begin{enumerate*}[label=(\textit{\roman*})]
    \item \emph{undirected branching} (external choice) 
          \tBranchQuant{\tRcv{\role{p$_i$}}{\msgLabel{m$_i$}}{\tType$_i$}{\tSession$_i$}[, \cTimeOut{\tSession$^\prime$}]}
          denotes \emph{receiving} a message with label \msgLabel{m$_i$} and payload of type \tType$_i$ 
          from role \role{p$_i$}, then continuing according to \tSession$_i$. 
          The (optional) timeout continuation type \tSession$^\prime$ describes the protocol for handling 
          failure on that branch;
    \item \emph{undirected selection} (internal choice) 
          \tSelectQuant{\tSnd{\role{p$_i$}}{\msgLabel{m$_i$}}{\tType$_i$}{\tSession$_i$}}
          denotes \emph{sending} a message with label \msgLabel{m$_i$} and payload \tType$_i$
          to role \role{p$_i$}, then continuing according to \tSession$_i$;
    \item type \tEnd\ marks a channel as closed, and terminates communication.
\end{enumerate*}
A session buffer is typed using the \emph{buffer type} \tQueue.
Entries in the buffer must correspond to the type \tBuffer{p}{m}{\tType}{\tQueue}, denoting 
a message sent to \role{p} with label \msgLabel{m} and payload of type \tType.
A session with role is typed using \emph{session-buffer} types, combining
a session type and a buffer type.

Type \emph{congruence} $\equiv$ is defined in \cref{fig:type-cong}.
Notably, buffer types can be reordered,
and two session-buffer types are congruent if their individual buffer and session types are congruent.
Buffer type reordering is necessary to match the \emph{total message reordering} 
supported by the language (\cref{def:buffer_cong}).

\subsection{Reliability}\label{subsec:types-reliability}
We go on a short detour and talk about reliability.
Previous related work~\cite{DBLP:journals/corr/abs-2207-02015,DBLP:conf/forte/AdameitPN17,DBLP:journals/pacmpl/VieringHEZ21} 
have included the notion of \emph{reliability} into their type systems.
Generally, either one specific role, or a pre-defined set of roles, are 
assumed to be reliable---\ie\ no failures occur for communication involving the identified set of roles.

Our definition of reliability (\cref{def:reliability}) is the most general and the first to take into account the viewpoint of each role.
We argue that this is necessary in a distributed setting since reliability in networks is 
dependant on the physical topology of processes.
Recalling the ping utility (\cref{ping-pong}), we could imagine the processes representing roles \role{p} and \role{r} reside on the same physical hardware, thus their communication cannot be affected by network faults; and the process for \role{q} resides on geographically separated hardware, therefore its communication with both \role{p} and \role{r} is vulnerable to failure.


\begin{definition}[Reliability]\label{def:reliability}
The \textbf{reliability set} \relSet\ for a role \textnormal{$\role{p}\in\roles$} is defined as 
\textnormal{$\relSet \subseteq \roles\setminus\{\role{p}\}$}, capturing the viewpoint of \textnormal{$\role{p}$}.
\textbf{Reliability} \rel\ is defined as a function mapping roles to their reliability set, 
\ie\ \textnormal{$\rel : \roles \to \relSet$}.
\end{definition}

To better model real distributed environments, our definition of reliability allows each role to 
have its own (possibly empty) reliability set.
\begin{example}[Ping Pong: Reliability Sets]\label{ex:ping-rel}
\Wrt\ \cref{ping-pong}, as the three roles have different viewpoints on each other, then the reliability set for each of them is different. In particular, we have
$\rel(\role{p}) = \{\role r\},\  \rel(\role{r}) = \{\role p\}, \ \rel(\role{q}) = \emptyset$.
\end{example}
%
Investigating the extremes, we have:
for a set of roles \roles, {if} for all $\role{p}\in\roles\cdot\rel(\role{p})=\emptyset$, 
{then} \emph{no} communication is reliable;
conversely, {if} for all $\role{p}\in\roles\cdot\rel(\role{p})=\roles\setminus\{\role{p}\}$, 
{then} \emph{all} communication is reliable---referred to as a \emph{reliable network}.
%
%
This work only considers static configurations for \rel, thus reliability sets 
cannot change at runtime.
We find that even with this restriction, our definition is the most general compared to 
related work.

\subsection{Contexts}\label{subsec:types-contexts}
%

\begin{definition}[Type contexts]\label{def:ctx-syntax}
Context \contO\ is a partial mapping from process variables to $n$-tuples of types and context \contG\ is a partial mapping from variables to types, and sessions with roles to 
session-buffer types, both defined below:
\textnormal{
    \[
        \contO ::= \;\emptyset \sep|\sep \contO,\contV{X}{\tType_1,\dotsc,\tType_n} \Sep\Sep\Sep
        \contG ::= \;\emptyset \sep|\sep \contG,\contV{x}{\tType} \sep|\sep \contG,\contV{\cSessionRole{s}{p}}{\tSessionQueue}
    \]
}\noindent
The \textbf{composition} of contexts $(\contGb{1},\contGb{2})$ is defined iff:
\textnormal{\[
    \forall c \in \dom(\contGb{1}) \cap \dom(\contGb{2}) : \sep \contGb{i}(c) = \tQueue\ \land\ \contGb{j}(c) = \tSession 
\]}\noindent
For such $c$, \textnormal{(\contGb{1},\contGb{2})(c) = \tTuple{\tQueue}{\tSession}}.

\noindent
Contexts are \textbf{congruent} $\contGb{1} \equiv \contGb{2}$ iff: 
\textnormal{\[
    \dom(\contGb{1}) = \dom(\contGb{2})\ \land\ \forall c \in \dom(\contGb{1}): \contGb{1}(c) \equiv \contGb{2}(c)
\]}
\end{definition}
Context \contO\ is \emph{non-linear} and types \emph{process variables} by tracking the types of 
their parameters.
Context \contG\ is \emph{linear} and allows \emph{variables} to have \emph{basic} or \emph{session types}, 
and \emph{sessions with roles} to have \emph{session-buffer types}; as a program 
progresses, a role may simultaneously have both an active session type and 
messages queued in the message buffer.

Context \emph{composition} allows two contexts to coexist as long as their common 
channels map to buffer types in one context, and session types in the other.

Context \emph{congruence} holds if two contexts have the same domain and the types 
of their channels are congruent.
It is key to note that by the definitions of context composition and congruence we have
$\contV{\cSessionRole{s}{p}}{\tTuple{\tQueue}{\tSession}} \equiv 
\contV{\cSessionRole{s}{p}}{\tQueue},\contV{\cSessionRole{s}{p}}{\tSession}$.
Buffer types (resp. session-buffer types) are only used internally by the type 
system; end-users are not expected to explicitly define these types.


\begin{definition}[Context reduction]\label{def:ctx-reduction}
An \textbf{action} $\alpha$ is given as:
\textnormal{\[
    \alpha\ ::=\ \contSnd{s}{p}{q}{m}{\tType} \sep|\sep \contCom{s}{p}{q}{m} \sep|\sep \contTimeout{s}{p}
\]}\noindent
From left to right, this reads as 
\begin{enumerate*}[label=\textnormal{(\textit{\roman*})}]
    \item a \textbf{sent message};
    \item \textbf{communication} of a message; and
    \item the \textbf{timeout} of a channel.
\end{enumerate*}
\textbf{Context transition} $\xrightarrow{\alpha}_{(\bufferTracker;\rel)}$ is defined in 
\cref{fig:ctx-reduction}.
We write $\contG\xrightarrow{\alpha}_{(\bufferTracker;\rel)}\ $ iff 
$\ \exists\,\contGp : \contG\xrightarrow{\alpha}_{(\bufferTracker;\rel)}\contGp$.
We define two \textbf{context reductions} $\rightarrow_{(\bufferTracker;\rel)}$ and $\rightarrow_\bufferTracker$ as: 
\textnormal{
\begin{mathpar}
    \contG\rightarrow_{(\bufferTracker;\rel)}\contGp \textit{ holds iff } \contG\xrightarrow{\alpha}_{(\bufferTracker;\rel)} \contGp
    \and\
    \contG\rightarrow_{\bufferTracker}\contGp \textit{ holds iff } \contG\xrightarrow{\alpha}_\bufferTracker\contGp \textit{ for } \alpha \in \{\contSnd{s}{p}{q}{m}{\tType},\ \contCom{s}{p}{q}{m}\}
\end{mathpar}
}\noindent
We write $\rightarrow_{(\bufferTracker;\rel)}^+$ (resp. $\rightarrow_{\bufferTracker}^+$)
and $\rightarrow_{(\bufferTracker;\rel)}^*$ (resp. $\rightarrow_{\bufferTracker}^*$) for their transitive 
and reflexive/transitive closures respectively.
\end{definition}

A context \contG\ keeps track of open buffers using a \emph{buffer-tracker} \bufferTracker.
Whenever a new session is initialised, it is added to \bufferTracker,
details in \cref{subsec:types-rules} item \trNew. 
For now it suffices to know that 
buffer trackers restrict communication to occur only over restricted sessions, thus 
by \cref{def:well-formed} (well-formedness), it guarantees that a session buffer exists for all sessions in \bufferTracker.

\begin{figure}[t]
    \begin{mathpar}
    \inferrule [\crrTO]
        {s \in \bufferTracker \\ \exists\, k \in I : \role{q$_k$} \not\in \rel(\role{p})}
        {\cSessionRole{s}{p}:\!\tBranchQuant{\tRcv{q$_i$}{m$_i$}{\tType_i}{\tSession_i},\ \cTimeOut{\tSession^\prime}} \xrightarrow{\contTimeout{s}{p}}_{(\bufferTracker;\rel)} \cSessionRole{s}{p}:\tSession^\prime}
    \and
    \inferrule [\crrSnd{1}]
        {s \in \bufferTracker \\ k \in I}
        {\cSessionRole{s}{p}:\tSelectQuant{\tSnd{q$_i$}{m$_i$}{\tType_i}{\tSession_i}} \xrightarrow{\contSnd{s}{p}{q$_k$}{m$_k$}{\tType_k}}_{(\bufferTracker;\rel)} \cSessionRole{s}{p}:\tTuple{\tBuffer{q$_k$}{m$_k$}{\tType_k}{\emptyBuffer}}{\tSession_k}}
    \and
    \inferrule [\crrSnd{2}]
        {s \in \bufferTracker \\ k \in I}
        {\cSessionRole{s}{p}:\tTuple{\tQueue}{\tSelectQuant{\tSnd{q$_i$}{m$_i$}{\tType_i}{\tSession_i}}} \xrightarrow{\contSnd{s}{p}{q$_k$}{m$_k$}{\tType_k}}_{(\bufferTracker;\rel)} \cSessionRole{s}{p}:\tTuple{\tQueue\cBufferCons\tBuffer{q$_k$}{m$_k$}{\tType_k}{\emptyBuffer}}{\tSession_k}}
    \and
    \inferrule [\crrCom]
        {s \in \bufferTracker \\ \exists\, k \in I : (\role{p},\msgLabel{m},\tType) = (\role{p$_k$},\msgLabel{m$_k$},\tType_k)}
        {\cSessionRole{s}{p}:\tBuffer{q}{m}{\tType}{\tQueue},\ \cSessionRole{s}{q}:\!\tBranchQuant{\tRcv{p$_i$}{m$_i$}{\tType_i}{\tSession_i}\ [,\cTimeOut{\tSession^\prime}]} \xrightarrow{\contCom{s}{p}{q}{m}}_{(\bufferTracker;\rel)} \cSessionRole{s}{p}:\tQueue,\ \cSessionRole{s}{q}:\tSession_k}
    \and
    \inferrule [\crrRec]
        {\cSessionRole{s}{p}:\tSession[^{\tRec{t}{\tSession}}/_\recVar{t}] \xrightarrow{\alpha}_{(\bufferTracker;\rel)} \contGp}
        {\cSessionRole{s}{p}:\tRec{t}{\tSession} \xrightarrow{\alpha}_{(\bufferTracker;\rel)} \contGp}
    \and
    \inferrule [\crrCong]
        {\contGb{1}\xrightarrow{\alpha}_{(\bufferTracker;\rel)} \contGb{2}}
        {\contGb,\contGb{1} \xrightarrow{\alpha}_{(\bufferTracker;\rel)} \contG,\contGb{2}}
    \end{mathpar}
    \caption{Context reduction rules}
    \label{fig:ctx-reduction}
\end{figure}

Context reduction (\cref{def:ctx-reduction}) models communication at the \emph{type-level}.
Context \contG\ can reduce by sending, communicating, or timing out.
By \crrTO, $\contG = \cSessionRole{s}{p}:\!\tBranchQuant{\tRcv{q$_i$}{m$_i$}{\tType_i}{\tSession_i},\ \cTimeOut{\tSession^\prime}}$ 
can reduce to a \emph{timeout} branch continuation type $\tSession^\prime$ if $s$ is 
in the buffer-tracker (\ie\ a buffer exists for session $s$), 
and \emph{at least one} of the roles in the branch is \emph{unreliable}.
The latter prevents taking a timeout for communication that is sure to be delivered.
Reductions \crrSnd{1} and \crrSnd{2} simulate sending a message by reducing the 
selection type $\tSelectQuant{\tSnd{q$_i$}{m$_i$}{\tType_i}{\tSession_i}}$ to 
one of its continuations $\tSession_i$, and by inserting the sent message into 
the buffer type.
The difference is that \crrSnd{1} creates the buffer type if it was 
previously not specified, whereas 
\crrSnd{2} appends the message to an already existing buffer type.
Communication between two roles is simulated through \crrCom, 
where a branch type $\cSessionRole{s}{q}:\!\tBranchQuant{\tRcv{p$_i$}{m$_i$}{\tType_i}{\tSession_i}\ [,\cTimeOut{\tSession^\prime}]}$
consumes the message from a buffer type $\cSessionRole{s}{p}:\tBuffer{q}{m}{\tType}{\tQueue}$,
reducing to the continuations $\cSessionRole{s}{p}:\tQueue,\ \cSessionRole{s}{q}:\tSession_k$.
Lastly, \crrRec\ allows reduction through recursion and \crrCong\ reduces substructures of 
compatibly composed contexts.

\begin{definition}\label{def:safety-prop}
Property \fSafe\ is a $(\bufferTracker;\rel)$-\textbf{safety} property on typing contexts iff: 

\textnormal{\\
\begin{tabular}{l l}
    \spR{1} & $\fSafe(\contG,\ \cSessionRole{s}{p}:\tBranchQuant{\tRcv{q$_i$}{m$_i$}{\tType_i}{\tSession_i}}) \ \implies \ \forall i \in I : \role{q$_i$} \in \rel(\role{p})$ \\[0.2em]
    \spR{2} & $\fSafe(\contG,\ \cSessionRole{s}{p}:\tBranchQuant{\tRcv{q$_i$}{m$_i$}{\tType_i}{\tSession_i,\ \cTimeOut{\tSession^\prime}}}) \ \implies \ \exists i \in I : \role{q$_i$} \not\in \rel(\role{p})$ \\[0.2em]
    \spCom & $\fSafe(\contG,\ \contV{\cSessionRole{s}{p}}{\tBranchQuant{\tRcv{q$_i$}{m$_i$}{\tType_i}{\tSession_i}\,[,\, \cTimeOut{\tSession^\prime}]}},\, \contV{\cSessionRole{s}{q}\!}{\!\tQueue})$ \\
           & $\Sep\text{and } \tQueue \equiv \tBuffer{p}{m}{\tType}{\tQueue^\prime}$ \\
           & $\Sep\text{and } \exists\, k \in I : \role{q$_k$} = \role{q} \land \msgLabel{m$_k$} = \msgLabel{m} \ \implies \ \tType_k = \tType$ \\[0.2em]
    \spRec & $\fSafe(\contG,\ \cSessionRole{s}{p}:\tRec{t}{\tSession}) \ \implies \ \fSafe(\contG,\ \cSessionRole{s}{p}:\tSession[^{\tRec{t}{\tSession}}/_\recVar{t}])$ \\[0.2em]
    \spRed & $\fSafe(\contG) \text{ and } \contG\rightarrow_{(\bufferTracker;\rel)}\contGp \ \implies \ \fSafe(\contGp)$
\end{tabular}
}
\end{definition}

As previously mentioned, our type system is a \emph{generic} one that does not use 
syntactic methods of enforcing consistent communication.
Therefore, we define a \emph{safety} property in \cref{def:safety-prop} on type 
contexts that is used to guarantee subject reduction and other theorems (presented in \cref{sec:types-props}).

We say \fSafe\ is the \emph{largest} safety property required to guarantee subject 
reduction.
The property can be re-instantiated with more specific conditions (as 
demonstrated in \cref{sec:network_assumptions}) as per the requirements of 
the implementation.
Concretely, \spR{1} and \spR{2} ensure that timeouts are only not defined if 
communication is reliable and that timeouts are defined if communication is 
unreliable respectively.
Condition \spCom\ ensures that communicating messages have matching payload types.
Lastly, \spRec\ preserves \fSafe\ through recursion unfolding and \spRed\ requires 
safety to hold after context reduction.

\subsection{Typing Rules}\label{subsec:types-rules}
\begin{figure*}
\begin{mathpar}
\inferrule [\trInaction]
    {\fend(\contG)}
    {\contO\cBufferCons\contG\ \vdash\ \cInaction}
\and
\inferrule [\trVar]
    {\ }
    {c:\tType\ \vdash\ c:\tType}
\and
\inferrule [\trVal]
    {v \in \tBasic}
    {\emptyset\ \vdash\ v:\tBasic}
\and
\inferrule [\trX]
    {\contO(X)=\tType_1,\dotsc,\tType_n}
    {\contO\ \vdash\ X:\tType_1,\dotsc,\tType_n}
\and
\inferrule [\trSelection]
    {\contGb{1}\ \vdash\ c:\tSelectQuant{\tSnd{q$_i$}{m$_i$}{\tType_i}{\tSession_i}} \\ k \in I \\ \contGb{2}\ \vdash\ d:\tType_k \\ \contO\cdot\contG,c:\tSession_k\ \vdash\ P}
    {\contO\cdot\contG,\contGb{1},\contGb{2}\ \vdash\ \cSnd{c}{q$_k$}{m$_k$}{d}{P}}
\and
\inferrule [\trBranch]
    {\contGp\ \vdash\ c:\tBranchQuant{\tRcv{p$_i$}{m$_i$}{\tType_i}{\tSession_i}\,[,\ \cTimeOut{\tSession^\prime}]} \\ [\contO\cdot\contG,c:\tSession^\prime\ \vdash\ Q] \\ \forall i \in I \cdot \contO\cdot\contG, x_i:\tType_i, c:\tSession_i\ \vdash\ P_i}
    {\contO\cdot\contG,\contGp\ \vdash\ \cBranchQuant{c}{\cRcv{p$_i$}{m$_i$}{x_i}{P_i}\,[,\ \cTimeOut{Q}]}}
\and
\inferrule [\trCall]
    {\contO\ \vdash\ X:\tType_1,\dotsc,\tType_n \\ \fend(\contGp) \\ \forall i \in 1..n \cdot \contGb{i}\ \vdash\ d_i:\tType_i}
    {\contO\cdot\contGb{1},\dotsc,\contGb{n},\contGp\ \vdash\ \cCall{X}{d_1,\dotsc,d_n}}
\and
\inferrule [\trDef]
    {\contO,X:\tType_1,\dotsc,\tType_n\cdot x_1:\tType_1,\dotsc,x_n:\tType_n\ \vdash\ P \\ \contO,X:\tType_1,\dotsc,\tType_n\cdot\contG\ \vdash\ Q}
    {\contO\cdot\contG\ \vdash\ \cDef{\ \cDecl{X}{x_1:\tType_1,\dotsc,x_n:\tType_n}{P\ }}{\ Q}}
\and
\inferrule [\trChoice]
    {\contO\cdot\contG\ \vdash\ P_1 \\ \contO\cdot\contG\ \vdash\ P_2}
    {\contO\cdot\contG\ \vdash\ \cChoice{P_1}{P_2}}
\and
\inferrule [\trLift]
    {\contO\cdot\contG\ \vdash\ P}
    {\contO\cdot\contG\ \vdash_{\emptyset}\ P}
\and
\inferrule [\trEmpty]
    {\fPayloads(\contG)}
    {\contO\cdot\contG\ \vdash_{\{s\}}\ \cBuffer{s}{\emptyBuffer}}
\and
\inferrule [\trBuffer{1}]
    {\contO\cdot\contGp \vdash_{\{s\}}\ \cBuffer{s}{\sigma} \\ \contG\ \vdash\ w:\tType}
    {\contO\cdot\contG,\contGp,\cSessionRole{s}{p}:\tBuffer{q}{m}{\tType}{\emptyBuffer}\ \vdash_{\{s\}}\ \cBuffer{s}{\cBufferEntry{p}{q}{m}{w}\cBufferCons\sigma}}
\and
\inferrule [\trBuffer{2}]
    {\contO\cdot\contGp,\cSessionRole{s}{p}:\tQueue\ \vdash_{\{s\}}\ \cBuffer{s}{\sigma} \\ \contG\ \vdash\ w:\tType}
    {\contO\cdot\contG,\contGp,\cSessionRole{s}{p}:\tBuffer{q}{m}{\tType}{\tQueue}\ \vdash_{\{s\}}\ \cBuffer{s}{\cBufferEntry{p}{q}{m}{w}\cBufferCons\sigma}}
\and
\inferrule [\trBuffer{w}]
    {\contG = (\contGb{0}\msgInsert\contGb{1}),\contGb{2} \\ \contO\cdot\contGb{1}\ \vdash_\bufferTracker\ \cBuffer{s}{\sigma} \\ \fPayloads(\contGb{0},\contGb{2})}
    {\contO\cdot\contG\ \vdash_\bufferTracker\ \cBuffer{s}{\sigma}}
\and
\inferrule [\trPar]
    {\contO\cdot\contGb{1}\ \vdash_{\bufferTracker_1}\ P_1 \\ \contO\cdot\contGb{2}\ \vdash_{\bufferTracker_2}\ P_2 \\ \bufferTracker_1 \cap \bufferTracker_2 = \emptyset}
    {\contO\cdot\contGb{1},\contGb{2}\ \vdash_{\bufferTracker_1 \cup \bufferTracker_2}\ \cPar{P_1\ }{\ P_2}}
\and
\inferrule [\trNew]
    {\contGp = \{\cSessionRole{s}{p}:\tSessionQueue_\role{p}\}_{\role{p}\in\roles} \\ s\not\in\contG \\ (\{s\}\,;\rel)$-$\fSafe(\contGp) \\ \contO\cdot\contG,\contGp\ \vdash_{\bufferTracker}\ P}
    {\contO\cdot\contG\ \vdash_{\bufferTracker\setminus\{s\}}\ \cRestriction{s\!:\!\contGp}{P}}
\end{mathpar}
\caption{Typing rules}
\label{fig:type-rules}
\end{figure*}

\begin{figure}[t]
\begin{mathpar}
\inferrule
    {\ }
    {\fend(\emptyset)}
\and
\inferrule
    {\forall i \in 1..n \ \cdot\  \fBasic(\tType_i)\sep \lor\sep x_i:\tType_i\ \vdash\ x_i:\tEnd}
    {\fend(x:\tType_1,\dotsc,x_n:\tType_n)}
\\
\inferrule
    {\fend(\contGb{1}) \\ \fend(\contGb{2})}
    {\fend(\contGb{1},\contGb{2})}
\and
\inferrule
    {\forall i \in 1..n, \role{p} \in \roles \ \cdot\  \cSessionRole{s_i}{p}:\tSessionQueue_i\ \vdash\ \cSessionRole{s_i}{p}:\tEnd}
    {\fend(\cSessionRole{s_i}{p}:\tSessionQueue_1,\dotsc,\cSessionRole{s_i}{p}:\tSessionQueue_n)}
\end{mathpar}
\caption{Predicate \fend(\contG)}
\label{fig:end}
\end{figure}
\begin{figure}[t]
\begin{mathpar}
\inferrule
    {\ }
    {\fPayloads(\emptyset)}
\and
\inferrule
    {\fPayloads(\contG)}
    {\fPayloads(\contG,\cSessionRole{s}{p}:\emptyBuffer)}
\and
\inferrule
    {\fBasic(\tType) \\ \fPayloads(\contG,\cSessionRole{s}{p}:\tQueue)}
    {\fPayloads(\contG,\cSessionRole{s}{p}:\tBuffer{q}{m}{\tType}{\tQueue})}
\and
\inferrule
    {\contG = \contGp, \cSessionRole{s^\prime}{p$^\prime$}: \tType \\ \fPayloads(\contGp,\cSessionRole{s}{p}:\tQueue)}
    {\fPayloads(\contG,\cSessionRole{s}{p}:\tBuffer{q}{m}{\tType}{\tQueue})}
\end{mathpar}
\caption{The garbage collector predicate \fPayloads(\contG)}
\label{fig:payloads}
\end{figure}
\begin{figure}[t!]
\begin{align*}
    \cSessionRole{s}{p}:\tBuffer{q}{m}{\tType}{\emptyBuffer}\msgInsert\contG,\cSessionRole{s}{p}:\tQueue &= \contG,\cSessionRole{s}{p}:\tBuffer{q}{m}{\tType}{\tQueue} \\
    \cSessionRole{s}{p}:\tBuffer{q}{m}{\tType}{\emptyBuffer}\msgInsert\contG \textit{ when } \cSessionRole{s}{p}:\tQueue\not\in\contG &= \contG,\cSessionRole{s}{p}:\tBuffer{q}{m}{\tType}{\emptyBuffer} 
\end{align*}
\caption{Message insertion function \contGp\msgInsert\contG}
\label{fig:msg-insert}
\end{figure}

Our type system is defined by the typing rules in \cref{fig:type-rules}. Below we explain them in detail.
Typing judgements are of the form:
$\contO\cBufferCons\contG\ \vdash\ P$ reading ``process $P$ is well typed under type contexts \contO\ and \contG"; and $\contG\ \vdash\ d: \tType$ reading ``value (or variable, or channel) $d$ is of type $\tType$ under type context $\contG$".

\begin{description}
    \item[\raiseup{\trInaction}] The inaction process \inaction\ is typed by a context that is ``\fend\ typed'', determined by the 
                       predicate \fend(\contG)---defined in \cref{fig:end}.
                       The predicate holds: \begin{enumerate*}[label=(\textit{\roman*})]
                           \item if $\contG = \emptyset$; 
                           \item if \contG\ consists of variables, then it holds if all the variables are either of a \fBasic\ type,
                                 or can be typed by \tEnd; and
                           \item if \contG\ consists of sessions with roles, then it holds if all the channels can be typed by \tEnd.
                       \end{enumerate*}
    \item[\raiseup{\trVar}] A variable or session with role $c$ has type \tType\ in a context only containing the mapping of $c$ to \tType.
    \item[\raiseup{\trVal}] A value $v$ is typed by a \fBasic\ type \tBasic\ if $v$ is contained in the set of that \fBasic\ type. 
                  \Eg, $42 : \mathbb{N}$ is typed under an empty context $\emptyset$ since $42\in\mathbb{N}$.
    \item[\raiseup{\trX}] A process variable $X$ is typed to an $n$-tuple of types $\tType_1,\dotsc,\tType_n$ under context \contO, if \contO\ maps 
                the process variable to the same $n$-tuple of types.
    \item[\raiseup{\trSelection}] The \emph{selection} process $\cSnd{c}{q$_k$}{m$_k$}{d}{P}$ is typed under a context which maps the sending channel 
                        $c$ to a {selection} session type $\tSelectQuant{\tSnd{q$_i$}{m$_i$}{\tType_i}{\tSession_i}}$, where a 
                        selection option matches the send process, \ie\ $k\in I$.
                        The context should match the payload $d$ to the type indicated in the selection ($\tType_k$), and 
                        continuation process $P$ should be typed under the continuation type $\tSession_k$.
    \item[\raiseup{\trBranch}] The \emph{branching} process $\cBranchQuant{c}{\cRcv{p$_i$}{m$_i$}{x_i}{P_i}}$ is typed under a context 
                               which maps the receiving channel $c$ to a branch type $\tBranchQuant{\tRcv{p$_i$}{m$_i$}{\tType_i}{\tSession_i}}$,
                               where all roles and message labels of each branch match.
                               Every continuation process $P_i$ must be typed under the continuation type $\tSession_i$ and
                               payload typed by $\tType_i$.
                               If the process is a \emph{timeout} branch $\cBranchQuant{c}{\cRcv{p$_i$}{m$_i$}{x_i}{P_i},\ \cTimeOut{Q}}$,
                               then it should be typed by a session type also containing a timeout continuation
                               $\tBranchQuant{\tRcv{p$_i$}{m$_i$}{\tType_i}{\tSession_i},\ \cTimeOut{\tSession^\prime}}$, and the 
                               timeout process $Q$ must be typed by $\tSession^\prime$.
    \item[\raiseup{\trCall}] A process \emph{call} $\cCall{X}{d_1,\dotsc,d_n}$ is correctly typed if \contO\ types the 
                             process variable to a $n$-tuple of types $\tType_1,\dotsc,\tType_n$ and \contG\ maps each 
                             parameter $d_i$ to the corresponding $\tType_i$ (for $i \in 1..n$).
                             Any remaining channels in \contG\ cannot be open, and hence must be \fend\ typed.
    \item[\raiseup{\trDef}] Process \emph{declaration} $\cDecl{X}{x_1:\tType_1,\dotsc,x_n:\tType_n}{P\ }$ is well typed if 
                            $P$ is self-contained, \ie\ contexts containing the types of the declaration parameters (along with any 
                            previous \contO) should type $P$.
                            Process \emph{definition} $\cDef{\ \cDecl{X}{x_1:\tType_1,\dotsc,x_n:\tType_n}{P\ }}{\ Q}$ is
                            typed under $\contO\cdot\contG$ if its declaration is well typed and $Q$ is typed under \contG\ and
                            \contO\ composed with the new process variable.
    \item[\raiseup{\trChoice}] {Non-deterministic choice} is well typed if processes are typed by $\contO\cdot\contG$ in isolation. This is in line with how \emph{case} or \emph{if-then-else} processes are typed.
    \item[\raiseup{\trLift}] We annotate the typing judgement $\contO\cdot\contG \vdash P$ with the buffer-tracker to obtain $\contO\cdot\contG\ \vdash_\bufferTracker P$,
                             denoting that the sessions in \bufferTracker\ occur in $P$.
                             The lifting rule annotates the typing judgement with an empty buffer-tracker if the buffer-less judgement 
                             ($\vdash$) types $P$ (using the rules mentioned thus far). 
    \item[\raiseup{\trEmpty}] In standard asynchronous MPST theory, the empty buffer $\cBuffer{s}{\emptyBuffer}$ is typed under the empty context 
    $\emptyset$, ensuring a one-to-one correlation between buffer types in the context and messages in a session buffer.
    However, since our calculus \emph{models message loss}, it is possible that a context contains buffer types 
    for messages that have been dropped from the process buffer.
    Thus, our theory types $\cBuffer{s}{\emptyBuffer}$ under a \emph{garbage collected} \contG.
    The predicate \fPayloads\ is defined in \cref{fig:payloads}, and states that valid leftover types $\fPayloads(\contG)$ are:
    \begin{enumerate*}[label=(\textit{\roman*})]
        \item empty;
        \item empty buffer types;
        \item message buffer types with basic-type payloads; or
        \item message buffer types with channel payloads that are typed under \contG.
    \end{enumerate*}
    \item[\raiseup{\trBuffer{1}} \raiseup{\trBuffer{2}}] An entry in a session buffer $\cBuffer{s}{\cBufferEntry{p}{q}{m}{w}\cBufferCons\sigma}$ is typed 
    under a context containing a mapping from \cSessionRole{s}{p} to a buffer type \tBuffer{q}{m}{\tType}{\tQueue}, matching 
    the recipient and message label.
    The message payload $w$ must be of type \tType, indicated by the buffer type, and buffer continuation $s:\sigma$
    should be typed under the buffer continuation type \tQueue\ in the case that it is not empty (\trBuffer{2}). 
    \item[\raiseup{\trBuffer{w}}] Weakening allows a session buffer to be typed under a larger context if the addition can be garbage collected 
                                  and inserted into the original context using the message insertion function (\cref{fig:msg-insert}).
                                  This is partial function that either appends a message to an existing buffer type, or inserts it as the head of a new buffer type.
                                  Put differently, weakening allows a buffer to be typed under a larger context containing irrelevant types that can be garbage collected.
    \item[\raiseup{\trPar}] If a process $P_1$ is typed by \contGb{1}, and process $P_2$ is typed by \contGb{2}, then the composition
                            \contGb{1},\contGb{2} types the \emph{parallel} composition \cPar{$P_1$}{$P_2$}.
                            It is also required that parallel processes \emph{cannot} each contain a buffer for the same session $s$. 
                            This guarantees the uniqueness of one session buffer per restricted session.
    \item[\raiseup{\trNew}] {Session restriction} $\cRestriction{s\!:\!\contGp}{P}$ requires session $s$ to be instantiated with a 
                            \contGp\ mapping each session with role to its session-buffer type. 
                            \fSafe(\contGp) must hold to ensure subject reduction, as discussed in \cref{subsec:types-contexts}.
                            Session $s$ should not be present in a previous context \contG, and process $P$ should be typed 
                            under the composition of the previous and newly instantiated context with the updated buffer-tracker $\contO\cdot\contG,\contGp\ \vdash_\bufferTracker P$ (since the buffer for $s$ is contained in $P$).
\end{description}

\begin{example}[Ping Pong: Type Context]
\label{ping:types}
Recalling the ping pong example, the whole system can then be described by a parallel composition of the three processes representing each role \role p, \role q, \role r together with an empty buffer, which is closed under a type context $\contG$ with the following typing assumptions.
\footnotesize{\begin{align*}
        \begin{array}{c}
            \contG = \{\cSessionRole{s}{p}:\tSession_\role{p},\ \cSessionRole{s}{q}:\tSession_\role{q}, \cSessionRole{s}{r}:\tSession_\role{r}\}
            \\[1em]
            P_{\textit{ping}} = \cRestriction{s:\contG}{\ \cPar{P_\role{p}\ }{\ \cPar{P_\role{q}\ }{\ \cPar{P_\role{r}\ }{\ \cBuffer{s}{\emptyBuffer}}}}}
         \end{array}
    \end{align*}}
\end{example}

\section{Type Properties}
\label{sec:types-props}
\label{subsec:types-props}

The main results of our MPST system for \magpi\ processes are \emph{subject reduction}
(\cref{thm:sr}) and \emph{session fidelity} (\cref{thm:sf}).
%
%
It is key to note that our results are parametric on the reliability function \rel.
Thus, the theorems we present hold for \emph{any} configuration of reliability, \ie\ 
from \emph{no} reliable communication all the way to \emph{completely} reliable networks.

In order to synchronise reliability assumptions between types and processes, we define the \emph{reliable process reduction} $\longrightarrow_\rel$, such that $\longrightarrow_\rel\ \subseteq\ \longrightarrow$.
\begin{definition}[Reliable process reduction]\label{def:rel-red}
    The reliable process reduction $\longrightarrow_\rel$ is inductively defined by the 
    same reduction rules for $\longrightarrow$  (in \cref{fig:calc-semantics}), with the following changes
    \footnote{For a fully unreliable network, \ie\ $\forall\role{p}\in\roles\cdot\rel(\role{p})=\emptyset$, 
    $\longrightarrow_\rel$ is equivalent to $\longrightarrow$.}:
    \textnormal{\footnotesize \begin{align*}
        \begin{array}{lcr}
            \rTimeOut \Sep
            & \cPar{\cBranchQuant{\cSessionRole{s}{q}}{\cRcv{p$_i$}{m$_i$}{\cVar{x_i}}{\cVar{P_i}}, \cTimeOut{\cVar{Q}}}}{\cBuffer{s}{\sigma}} 
              \longrightarrow_\rel
              \cPar{\cVar{Q}}{\cBuffer{s}{\sigma}}\Sep
            & \text{if } \exists k \in I : \role{p$_k$} \not\in \rel(\role{q}) \\\\
            \rDrop
            & \cBuffer{s}{\cBufferEntry{p}{q}{m}{w}\cBufferCons\sigma} \longrightarrow_\rel \cBuffer{s}{\sigma} 
            & \text{for } \role{q} \not\in \rel(\role{p})
        \end{array}
    \end{align*}}
\end{definition}
Intuitively, the reliable process reduction disregards network faults for reliable communication.
Concretely, a timeout reduction \rTimeOut\ is only possible if \emph{at least one role} in the branch is unreliable;
and message loss \rDrop\ can only occur for messages that are \emph{not} reliable from the viewpoint of the sender.
This ensures that no messages are ignored or lost for reliable communication.
Proofs of our technical results, along with 
any auxiliary results, are given in \cref{app:SR,app:SF,app:proc-props}.

\subsection{Subject Reduction}

Using $\longrightarrow_\rel$, we now present our result of \emph{subject reduction}.
Intuitively, subject reduction states that, if a process $P$ is typed under a safe context, and $P$ reliably reduces to some process $P'$, 
then the context also reduces (in 0 or 1 steps) to a safe context, which types the new process $P'$.

\begin{theorem}[Subject Reduction]\label{thm:sr}
    \textnormal{\begin{multline*}
        \contO\cdot\contG\ \vdash_\bufferTracker\ P\ \ \textit{and }\ 
        \safe{\bufferTracker}{\rel}{\contG}\ \ \textit{and }\ 
        P \rightarrow_\rel P^\prime\ \implies\\
        \exists \contGp: \contG \rightarrow_{(\bufferTracker;\rel)}^{\{0,1\}} \contGp\ \ \textit{and }\ 
        \safe{\bufferTracker}{\rel}{\contGp}\ \ \textit{and }\ 
        \contO\cdot\contGp \vdash_\bufferTracker P^\prime
    \end{multline*}}
\end{theorem}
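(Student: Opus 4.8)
The plan is to prove subject reduction by induction on the derivation of the reduction $P \rightarrow_\rel P'$, with the standard complications that arise from the typing rules \trPar, \trNew, \trBuffer{1}, \trBuffer{2}, \trBuffer{w}, and the context/structural congruences. First I would establish the necessary auxiliary results: an \emph{inversion lemma} for each typing rule (so that from $\contO\cdot\contG \vdash_\bufferTracker P$ and a syntactic shape for $P$ we can recover the shape of $\contG$), a \emph{substitution lemma} (typing is preserved under substituting a value/channel $w$ of type $\tType$ for $x$), a lemma that typing is invariant under process structural congruence and buffer congruence (using type congruence $\equiv$, \cref{fig:type-cong}, and the congruence rule \crrCong\ for contexts), and a lemma relating the buffer-tracker to well-formedness (if $s \in \bufferTracker$ then a buffer for $s$ is present, so the side-conditions $s \in \bufferTracker$ on the context-reduction rules in \cref{fig:ctx-reduction} are discharged). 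I would also need a lemma showing the message-insertion function $\msgInsert$ and the predicate $\fPayloads$ behave well under the buffer typing rules, so that garbage-collected leftover buffer types do not obstruct the argument.

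Then I would proceed case-by-case on the last rule used in $P \rightarrow_\rel P'$. For \rSelection, inversion on \trPar\ splits $\contG$ into the part typing the sender and the part typing $\cBuffer{s}{\sigma}$; inversion on \trSelection\ exposes a selection type $\tSelectQuant{\tSnd{q$_i$}{m$_i$}{\tType_i}{\tSession_i}}$ at $\cSessionRole{s}{p}$ with $k \in I$; I take the context step \crrSnd{1} or \crrSnd{2} (depending on whether a buffer type already exists for $\cSessionRole{s}{p}$), which produces exactly $\cSessionRole{s}{p}:\tTuple{\cdots}{\tSession_k}$; retyping $P'$ then uses \trBuffer{1}/\trBuffer{2} to absorb the new buffer entry into the buffer typing, and \fSafe\ of the result follows from \spRed. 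The \rBranch\ case is dual: inversion gives a branch type at $\cSessionRole{s}{q}$ and a buffer type at $\cSessionRole{s}{p}$ whose head matches $(\role{p$_k$},\msgLabel{m$_k$})$; condition \spCom\ guarantees the payload types agree, so the substitution lemma applies to type $P_k[^w/_{x_k}]$; the matching context step is \crrCom. The \rTimeOut\ case is where the $\rel$-refinement matters: because we reduce with $\longrightarrow_\rel$, we have $\exists k \in I : \role{p$_k$} \not\in \rel(\role{q})$, which is exactly the side-condition of \crrTO, so the context can take the timeout step to $\cSessionRole{s}{q}:\tSession'$, and \spR{2} guarantees $\tSession'$ exists; retyping $P'=Q$ uses the bracketed premise of \trBranch. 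For \rDrop, the context does \emph{not} move ($0$ steps): we must show $\contO\cdot\contG \vdash_\bufferTracker \cBuffer{s}{\sigma}$ still holds after dropping the head $h$; inversion on \trBuffer{1}/\trBuffer{2} peels off the buffer type for $h$, and \trBuffer{w} together with $\fPayloads$ re-absorbs that now-orphaned buffer type as garbage-collectable leftover — this is precisely why \trEmpty/\trBuffer{w} were designed to type under a garbage-collected $\contG$. The \rChoice, \rCall, and \rCtx\ cases are routine: \rChoice\ uses \trChoice\ directly with $0$ context steps; \rCall\ uses \trDef/\trCall\ and the substitution lemma with $0$ context steps; \rCtx\ is the induction step, using closure of $\rightarrow_{(\bufferTracker;\rel)}^{\{0,1\}}$ under the context constructors and the fact that \fSafe\ and the buffer-tracker compose well through \trPar/\trNew.

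The main obstacle I expect is the bookkeeping around \trBuffer{w}, $\fPayloads$, and $\msgInsert$ in the \rDrop\ and \rSelection\ cases — i.e.\ keeping the correspondence between context buffer types and actual buffer entries \emph{inexact} in exactly the controlled way the type system permits (a context may carry buffer types for already-dropped messages, but never vice versa), and showing this invariant is preserved. Concretely, the delicate point is that after \rDrop\ the context is unchanged while the process buffer shrank, so we need to re-derive the buffer typing by promoting a live buffer type to garbage; and after \rSelection\ the context \emph{grows} a buffer type while the process buffer also grows by one entry, so we need $\msgInsert$ to line them up. A secondary subtlety is the interaction of buffer congruence (total reordering, \cref{def:buffer_cong}) with buffer-type congruence $\tQueue_1\cBufferCons\tQueue_2 \equiv \tQueue_2\cBufferCons\tQueue_1$: when \rBranch\ consumes a non-head entry (legal after a congruence step), we must first congruence-rewrite both the buffer and its type to bring the matched entry to the front, which is handled by the congruence-invariance lemma but must be invoked carefully. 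Everything else reduces to the standard MPST subject-reduction skeleton of~\cite{DBLP:journals/pacmpl/ScalasY19}, now instrumented with the $\rel$-side-conditions that make the timeout and drop rules "reliability-aware".
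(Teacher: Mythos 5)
Your proposal is correct and follows essentially the same route as the paper: induction on the derivation of $P \rightarrow_\rel P'$, matching each process step with \crrSnd{1}/\crrSnd{2}, \crrCom, \crrTO\ or zero context steps, preserving \fSafe\ via \spRed, and handling \rDrop\ through the \trBuffer{w}/\fPayloads/\msgInsert\ garbage-collection machinery (which the paper isolates as a separate Message Garbage Collection lemma, and whose \rCtx\ bookkeeping it packages as safety-decomposition and safety-composition lemmas). The only cosmetic divergences are that you obtain the unreliability side-condition for \crrTO\ from the $\longrightarrow_\rel$ side-condition rather than from \spR{2}, and the timeout continuation type comes from inversion of \trBranch\ rather than from \spR{2}; neither affects the argument.
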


A key novel result of our type system is that no unexpected network failures can occur at runtime, \ie\ a 
process always has a failure-handling subprotocol defined for unreliable communication.
This follows from the definition of our safety property $\fSafe$ (\cref{def:safety-prop}) and holds 
through subject reduction.
We state the result in \cref{cor:failure-handling}.
More precisely, this corollary states that timeout branches are guaranteed to be defined for unreliable 
communication.
The inverse is stated in \cref{cor:rel-adherence}, \ie\ timeouts are not defined for branches containing 
\emph{only} reliable sources.
\begin{corollary}[Failure handling safety]\label{cor:failure-handling}
    Given a reliability function \rel\ \textnormal{:}\\
    \textnormal{$\role{p}\not\in \rel(\role{q})$} and $\contO\cdot\contG\ \vdash_\bufferTracker\ P$ with 
    $(\bufferTracker;\rel)$-$\fSafe(\contG)$ and $P\longrightarrow_\rel^* P^\prime \equiv \ctx[Q]$
    implies \textnormal{$Q \neq \cBranchQuant{\cSessionRole{s}{q}}{\dots,\cRcv{p}{m}{x}{Q^\prime}}$}.
    \Ie\ $Q$ cannot be a branch at \textnormal{\role{q}} receiving from \textnormal{\role{p}} and not define a timeout.
\end{corollary}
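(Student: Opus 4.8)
I would argue by contradiction, pushing the typing and safety hypotheses through the reduction with subject reduction, localising the typing of $P^\prime$ to the exposed branch $Q$, and then using clause \spR{1}\ of the safety property to force \role{p}\ into $\rel(\role{q})$, contradicting $\role{p}\notin\rel(\role{q})$.

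First I would iterate \cref{thm:sr} along $P\longrightarrow_\rel^* P^\prime$. Since subject reduction leaves the buffer-tracker $\bufferTracker$ and the reliability function $\rel$ fixed, an induction on the length of the reduction sequence yields a context $\contGp$ with $\contO\cdot\contGp \vdash_\bufferTracker P^\prime$ and \safe{\bufferTracker}{\rel}{\contGp}. Using the (routine) invariance of typing under structural and buffer congruence, from $P^\prime \equiv \ctx[Q]$ I get a derivation of $\contO\cdot\contGp \vdash_\bufferTracker \ctx[Q]$. Because a reduction context is built only from parallel composition, restriction and process definition, and \trPar, \trNew, \trDef\ are syntax-directed, this derivation contains a sub-derivation $\contO^\prime\cdot\contGpp \vdash_{\bufferTracker^\prime} Q$; moreover the fragment of $\contGpp$ governing session $s$ is safe --- either because it is a sub-context of the safe $\contGp$ (when $s$ is free), or because it descends from the safe session context required as a premise of the \trNew\ that restricts $s$ (by well-formedness such a \trNew\ lies above $Q$ whenever $Q$ uses a channel of $s$).

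Now suppose, towards a contradiction, that $Q = \cBranchQuant{\cSessionRole{s}{q}}{\dots,\cRcv{p}{m}{x}{Q^\prime}}$ with \emph{no} timeout branch. Inversion on \trBranch\ (the case without a timeout continuation) forces $\contGpp$ to type $\cSessionRole{s}{q}$ by a branch type $\tBranchQuant{\tRcv{p$_i$}{m$_i$}{\tType_i}{\tSession_i}}$ carrying no timeout, and $(\role{p$_k$},\msgLabel{m$_k$}) = (\role{p},\msgLabel{m})$ for some $k\in I$, so \role{p}\ occurs among the senders \role{p$_i$}. Threading this type up the sub-derivation --- the intervening uses of \trPar, \trDef\ and restrictions of other sessions leave $s$-entries untouched, and \trBuffer{w}\ only rewrites buffer components --- it is precisely the session-type component of the entry for $\cSessionRole{s}{q}$ in the safe $s$-fragment. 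Stripping a possible tuple component via type congruence and the congruence-invariance of \fSafe, and appealing to downward closure of \fSafe\ under sub-contexts --- which holds because the map $\contGb{1}\mapsto(\exists\,\contGb{2}:\safe{\bufferTracker}{\rel}{\contGb{1},\contGb{2}})$ is itself a $(\bufferTracker;\rel)$-safety property, its \spRed\ clause being discharged via \crrCong, hence is contained in $\fSafe$ --- one obtains \safe{\bufferTracker}{\rel}{\{\cSessionRole{s}{q}:\tBranchQuant{\tRcv{p$_i$}{m$_i$}{\tType_i}{\tSession_i}}\}}. Clause \spR{1} then yields $\role{p$_i$}\in\rel(\role{q})$ for all $i\in I$, hence $\role{p}\in\rel(\role{q})$, contradicting the hypothesis. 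Therefore $Q$ cannot be a branch at \role{q}\ receiving from \role{p}\ without a timeout, as claimed.

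The step I expect to be the main obstacle is the localisation of paragraphs two and three: extracting from a derivation of $\ctx[Q]$ a typing of $Q$ whose $s$-fragment is still safe, and faithfully tracking how the type assigned to $\cSessionRole{s}{q}$ travels from the \trNew\ premise (or from $\contGp$) down through \trPar, \trDef, nested restrictions and buffer weakening to the \trVar\ leaf inside \trBranch. Once this bookkeeping --- plus the two supporting facts that \fSafe\ respects context congruence and is downward closed under sub-contexts --- is in place, the contradiction via \spR{1} is immediate and the corollary follows.
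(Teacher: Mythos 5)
Your proof is correct and takes essentially the same route as the paper's: argue by contradiction, use subject reduction to obtain a safe context typing the exposed branch, invert \trBranch\ to force a timeout-free branch type at $\cSessionRole{s}{q}$ with \role{p} among the senders, and contradict \spR{1}. The additional bookkeeping you supply---localising the typing of $Q$ inside $\ctx[Q]$ (including the restricted-session case via \trNew) and the downward closure of \fSafe\ under sub-contexts---is precisely what the paper leaves implicit, the latter being its safety-preservation lemma (\cref{lem:safety-preservation}), so no genuinely different idea is involved.
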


\begin{corollary}[Reliability adherence]\label{cor:rel-adherence}
    Given a reliability function \rel\ \textnormal{:}\\
    \textnormal{$\rel(\role{q}) = \relSet_\role{q}$} and $\contO\cdot\contG\ \vdash_\bufferTracker\ P$ with 
    $(\bufferTracker;\rel)$-$\fSafe(\contG)$ and $P\longrightarrow_\rel^* P^\prime \equiv \ctx[Q]$
    implies \textnormal{$Q \neq \cBranchQuant{\cSessionRole{s}{q}}{\cRcv{p$_i$}{m$_i$}{x_i}{Q_i},\ \cTimeOut{Q^\prime}}$}
    \st\ \textnormal{$\forall i \in I : \role{p$_i$} \in \relSet_\role{q}$}.
    \Ie\ $Q$ cannot be a branch at \textnormal{\role{q}} only receiving from reliable roles \textnormal{\role{p$_i$}} and define a timeout.
\end{corollary}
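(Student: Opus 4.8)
The plan is to prove \cref{cor:rel-adherence} by contradiction, leveraging \cref{cor:failure-handling} as a warm-up template and relying crucially on \cref{thm:sr} (subject reduction) together with condition \spR{2} of the safety property \fSafe. Suppose, for contradiction, that $P \longrightarrow_\rel^* P^\prime \equiv \ctx[Q]$ with $Q = \cBranchQuant{\cSessionRole{s}{q}}{\cRcv{p$_i$}{m$_i$}{x_i}{Q_i},\ \cTimeOut{Q^\prime}}$ and $\forall i \in I : \role{p$_i$} \in \relSet_\role{q} = \rel(\role{q})$. First I would iterate subject reduction along the reduction sequence $P \longrightarrow_\rel^* P^\prime$: since $\contO\cdot\contG\ \vdash_\bufferTracker\ P$ and $(\bufferTracker;\rel)$-$\fSafe(\contG)$, repeated application of \cref{thm:sr} yields some $\contGp$ with $\contG \rightarrow_{(\bufferTracker;\rel)}^* \contGp$, $(\bufferTracker;\rel)$-$\fSafe(\contGp)$, and $\contO\cdot\contGp\ \vdash_\bufferTracker\ P^\prime$.

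Next I would perform inversion on the typing derivation $\contO\cdot\contGp\ \vdash_\bufferTracker\ P^\prime$. Since $P^\prime \equiv \ctx[Q]$, by the structure of reduction contexts (\cref{def:calc-semantics}) and the context-related typing rules (\trPar, \trNew, \trDef), the subderivation typing $Q$ must itself use \trBranch\ (the only rule matching a timeout branch process). Hence $\contGp$ contains a component $\cSessionRole{s}{q}:\tBranchQuant{\tRcv{p$_i$}{m$_i$}{\tType_i}{\tSession_i},\ \cTimeOut{\tSession^\prime}}$ — that is, a branch type \emph{with} a timeout continuation, whose roles $\role{p$_i$}$ match those of $Q$. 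I would need a small auxiliary argument (or invocation of \fSafe\ being closed under the subcontext obtained by splitting off the irrelevant part via context composition/congruence, cf. \cref{def:ctx-syntax}) to conclude that $\fSafe$ also holds of the relevant component $\contG^{\prime\prime} = \contG^{\prime\prime\prime},\ \cSessionRole{s}{q}:\tBranchQuant{\tRcv{p$_i$}{m$_i$}{\tType_i}{\tSession_i},\ \cTimeOut{\tSession^\prime}}$.

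Then the contradiction is immediate from \spR{2}: $\fSafe$ of a context containing a timeout branch type at $\cSessionRole{s}{q}$ implies $\exists i \in I : \role{p$_i$} \not\in \rel(\role{q})$. But our assumption was precisely $\forall i \in I : \role{p$_i$} \in \rel(\role{q}) = \relSet_\role{q}$, a contradiction. Therefore no such $Q$ can occur, which is the claim.

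I expect the main obstacle to be the inversion/context-extraction step: going from "$\contO\cdot\contGp\ \vdash_\bufferTracker\ P^\prime$ and $P^\prime\equiv\ctx[Q]$" to "$\fSafe$ holds of a context containing the timeout branch type for $Q$". This requires (i) a routine but careful induction on the structure of $\ctx$ to peel off \trPar, \trNew, \trDef\ while tracking how $\contGp$ decomposes, noting that \trNew\ only re-checks \fSafe\ for freshly restricted sessions but \spRed\ plus \spGen{$\rel$}-compatible reductions preserve it globally; and (ii) the observation that \fSafe, as the \emph{largest} safety property, is preserved under the relevant context projections so that the subcontext typing $Q$ inherits safety. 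Modulo this bookkeeping — essentially the same bookkeeping already needed for \cref{cor:failure-handling}, just with \spR{2} in place of \spR{1} and the quantifier flipped — the argument is short. One should also double-check the edge case where $\ctx$ places $Q$ under a process definition $\cDef{D}{\ctx}$: here \contO\ grows but \contG\ (hence the \fSafe\ obligation) is unaffected, so no difficulty arises.
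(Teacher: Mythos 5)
Your proposal is correct and follows essentially the same route as the paper: the paper proves this corollary by the same contradiction argument as \cref{cor:failure-handling} (subject reduction to obtain a safe context typing the residual, inversion of \trBranch\ to expose the timeout-branch type, then a violation of \spR{2}). The context-extraction and subcontext-safety bookkeeping you flag as the main obstacle is exactly what the paper handles (somewhat tersely) via its safety-preservation lemma and subject congruence, so no new idea is needed.
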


\subsection{Session Fidelity}
\label{subset: SF}

\emph{Session fidelity} states the opposite implication of subject 
reduction,  \ie\ if \contG\ types a process $P$, and \contG\ can reduce, then 
$P$ can match at least one of the context reductions.

Consequently, relevant properties of process $P$ can be deduced from the behaviour of its type context \contG\ (as we will see in 
\cref{thm:proc-props}).
However, as shown by Scalas and Yoshida \cite[sec. 5.2]{DBLP:journals/pacmpl/ScalasY19}, the result 
does \emph{not} hold for all well-typed processes.
Concretely, session fidelity is violated by: 
\begin{enumerate*}[label=(\textit{\roman*})]
    \item processes that recurse infinitely without being productive 
    (\eg\ $\cDef{\;\cDecl{X}{x}{\cCall{X}{x}}\;}{\;\cCall{X}{\cSessionRole{s}{p}}}$); and
    \item processes that deadlock by interleaving communication across multiparty sessions.
\end{enumerate*}
Hence, we assume the necessary conditions on processes to restrict the aforementioned violations, by adapting \cite[def. 5.3]{DBLP:journals/pacmpl/ScalasY19}.

\begin{definition}[Conditions for session fidelity]\label{def:sf-conds}
    Assuming $\emptyset\cdot\contG\ \vdash_{\{s\}} P$. We say that P:
    \begin{enumerate}
        \item \textbf{has guarded definitions} iff each process definition in P of the form 
        \textnormal{\[\cDef{\ \cDecl{X}{x_1:\tType,\dots,x_n:\tType}{Q}\ }{\ P^\prime}\]}
        $\forall j\in 1..n :$ if $\;\tType_j$ is a session type, then a process call 
        $\;\cCall{Y}{\dots,x_j,\dots}$ can only occur in $Q$ as a subterm of
        \textnormal{\[\ \cBranchQuant{x_j}{\cRcv{p$_i$}{m$_i$}{y_i}{P_i[,\ \cTimeOut{P_t}]}}\  \text{ or } \ \cSnd{x_j}{p}{m}{y}{P^{\prime\prime}},\]}
        \ie\ after $x_j$ is used for input or output.\\
        \label{def:has-guarded}

        \item \textbf{only plays role \textnormal{\role{p}} in} $s$\textbf{, by} \contG\ iff:
            \begin{enumerate*}[label=\textbf{(\roman*)}]
                \item $P$ has guarded definitions (from \ref{def:has-guarded});
                \item \textnormal{$\fv(P) = \emptyset$};
                \item \textnormal{$\contG = \contGb{0},\cSessionRole{s}{p}:\tSessionQueue$} with \textnormal{$\tSessionQueue \neq \tEnd$} and \textnormal{$\fend(\contGb{0})$}; and
                \item for all $\cRestriction{s^\prime\!:\!\contGp}{P^\prime}$ subterm of $P$, $\fend(\contGp)$.
            \end{enumerate*}
        \label{def:plays-by}
    \end{enumerate}
    We say \quot{$P$ only plays role \textnormal{\role{p}} in $s$} iff 
    $\ \exists \contG: \emptyset\cdot\contG\ \vdash_{\{s\}} P$ and condition \ref{def:plays-by} holds.
\end{definition}

\Cref{def:sf-conds} formalises guarded recursion in condition \ref{def:has-guarded}, and the notion 
of only playing a single role for a given session in condition \ref{def:plays-by}.
Together, these conditions ensure that session fidelity, stated in \cref{thm:sf}, holds for 
all well-typed processes.

\begin{theorem}[Session Fidelity]\label{thm:sf}
    Assuming
        $\emptyset\cdot\contG\ \vdash_{\bufferTracker} P$ with $(\bufferTracker;\rel)$-$\fSafe(\contG)$,
        \textnormal{$P \equiv \cPar{(\Pi_{\role{p}\in I}\, P_\role{p})}{\cBuffer{s}{\sigma}}$} and \textnormal{$\contG = \bigcup_{\role{p}\in I}\, \contGb{\role{p}}$},
        and for each \textnormal{$P_\role{p}$}:
        \begin{enumerate*}[label=(\roman*)]
            \item \textnormal{$\emptyset\cdot\contGb{\role{p}}\ \vdash_{\bufferTracker} P_\role{p}$}, and
            \item \textnormal{$P_\role{p}$} being \inaction\ (up-to-$\equiv$) \emph{or} only plays role \textnormal{\role{p}} in $s$, by \textnormal{\contGb{\role{p}}}.
        \end{enumerate*}
    Then,\\[0.5em]
    $\contG\longrightarrow_{(\bufferTracker;\rel)}$ \ implies \ $\exists \contGp,P^\prime$: 
    \begin{enumerate*}[label=(\roman*)]
        \item $\contG\longrightarrow_{(\bufferTracker;\rel)}\contGp$,
        \item $P\longrightarrow^+_\rel P^\prime$,
        \item $\emptyset\cdot\contGp\ \vdash_{\bufferTracker} P^\prime$ with $(\bufferTracker;\rel)$-$\fSafe(\contGp)$,
        \item \textnormal{$P^\prime = \cPar{(\Pi_{\role{p}\in I}\, P_\role{p}^\prime)}{\cBuffer{s}{\sigma^\prime}}$} and \textnormal{$\contGp = \bigcup_{\role{p}\in I}\, \contGbp{\role{p}}$}, and 
        \item for each \textnormal{$P_\role{p}^\prime$}: \textnormal{$\emptyset\cdot\contGbp{\role{p}}\ \vdash_{\bufferTracker} P_\role{p}^\prime$}, and \textnormal{$P_\role{p}^\prime$} is \inaction\ (up-to-$\equiv$) \emph{or} only plays role \textnormal{\role{p}} in $s$, by \textnormal{\contGbp{\role{p}}}.
    \end{enumerate*}
\end{theorem}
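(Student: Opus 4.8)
The plan is to prove \cref{thm:sf} by reduction on the type-level transition that witnesses $\contG\longrightarrow_{(\bufferTracker;\rel)}$, mirroring the standard session-fidelity argument of Scalas and Yoshida \cite[sec. 5.2]{DBLP:journals/pacmpl/ScalasY19} but carefully threading through the two novel features of \magpi: timeout branches and undirected branching/selection. First I would invoke \cref{def:ctx-reduction} to obtain an action $\alpha$ with $\contG\xrightarrow{\alpha}_{(\bufferTracker;\rel)}\contGp$, and case-split on which of \crrSnd{1}, \crrSnd{2}, \crrCom, \crrTO\ (possibly under \crrRec\ and \crrCong) fired. In each case $\alpha$ identifies a distinguished role $\role{p}$ (and in the \crrCom\ case a pair $\role{p},\role{q}$) whose context component $\contGb{\role{p}}$ drives the reduction; by hypothesis $(ii)$, $P_\role{p}$ is either $\inaction$ (impossible here, since $\inaction$ can only be typed by an \fend-context, which cannot perform $\alpha$) or only plays role $\role{p}$ in $s$ by $\contGb{\role{p}}$. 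The latter, together with \trBranch/\trSelection/\trBuffer{1}/\trBuffer{2} inversion and the guarded-definitions condition \ref{def:has-guarded}, forces $P_\role{p}$ (after unfolding finitely many process calls and definitions, which is where condition \ref{def:has-guarded} guarantees productivity) to have a top-level communication prefix matching $\alpha$. Then I exhibit the corresponding process step: \rSelection\ for a send, \rBranch\ for a communication, and \rTimeOut\ for a timeout — noting that $\longrightarrow_\rel$ (\cref{def:rel-red}) permits the \rTimeOut\ step precisely because \crrTO\ requires $\exists k\in I:\role{q$_k$}\notin\rel(\role{p})$, which is exactly the side condition on $\longrightarrow_\rel$.

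The core of the argument is re-establishing the invariant, \ie\ showing the resulting $P^\prime$ and $\contGp$ still satisfy clauses $(iii)$–$(v)$. For $(iii)$, safety of $\contGp$ is immediate from \spRed\ in \cref{def:safety-prop}, and typability $\emptyset\cdot\contGp\vdash_\bufferTracker P^\prime$ follows by re-assembling the per-role typing derivations: the distinguished component becomes $\contGbp{\role{p}}$ (the continuation session-buffer type, \eg\ $\tTuple{\tBuffer{q$_k$}{m$_k$}{\tType_k}{\emptyBuffer}}{\tSession_k}$ for a send), all other $\contGb{\role{q}}$ are untouched, and I must check the buffer component $s:\sigma^\prime$ re-typechecks against the updated buffer types via \trBuffer{1}/\trBuffer{2}/\trBuffer{w} — this is where message insertion $\msgInsert$ (\cref{fig:msg-insert}) and the garbage-collector predicate \fPayloads\ (\cref{fig:payloads}) do their work, since a send adds a buffer entry and a communication removes one. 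For $(iv)$ the shape $P^\prime=\cPar{(\Pi_{\role{p}\in I}P^\prime_\role{p})}{\cBuffer{s}{\sigma^\prime}}$ is preserved because reductions act on a single role's process (and the buffer), and $\contGp=\bigcup\contGbp{\role{p}}$ by construction. For $(v)$, I must re-verify that each $P^\prime_\role{p}$ is still $\inaction$-or-only-plays-$\role{p}$: the unchanged components are trivial; for the stepped component I check the four sub-conditions of \ref{def:plays-by} — guarded definitions are preserved under the reduction rules (definitions are never discarded by \rSelection/\rBranch/\rTimeOut/\rCall), $\fv$ stays empty, the session-buffer type stays $\neq\tEnd$ as long as the process is not $\inaction$ (and if it did reach $\tEnd$, the process is congruent to $\inaction$ and falls into the first disjunct), and no new restrictions with non-\fend\ contexts are introduced.

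The main obstacle I anticipate is the interaction between \emph{undirected} branching and the \crrCom\ rule in the presence of \emph{multiple} senders: when $\contGb{\role{q}}$ is a branch type $\tBranchQuant{\tRcv{p$_i$}{m$_i$}{\tType_i}{\tSession_i}[,\cTimeOut{\tSession^\prime}]}$ and $\contGb{\role{p}}$ carries a buffer entry $\tBuffer{q}{m}{\tType}{\tQueue}$, I need the $k\in I$ selected by \crrCom\ (matching $(\role{p},\msgLabel{m},\tType)$) to coincide with a branch index that the process $P_\role{q}$ actually offers — which follows from \trBranch\ inversion forcing the process branch set to match the type branch set exactly — and I need the payload types to agree, which is exactly what \spCom\ guarantees. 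A second subtlety is that the existential $\exists k\in I$ in \crrCom\ (and the pairwise-distinctness of $(\role{p$_i$},\msgLabel{m$_i$})$ from \cref{def:type-syntax}) make the matching branch \emph{unique}, so the process step $\rBranch$ is deterministic in its choice of continuation, keeping the per-role correspondence tight. A final delicate point, shared with \cite{DBLP:journals/pacmpl/ScalasY19}, is that a single context reduction may require \emph{several} process reduction steps (hence $P\longrightarrow^+_\rel P^\prime$, not $\longrightarrow_\rel$): unfolding recursion via \rCall\ and resolving non-deterministic choices via \rChoice\ may be needed before the matching prefix is exposed, and condition \ref{def:has-guarded} is precisely what bounds this to finitely many steps; I would make this explicit by an inner induction on the structure of the (guarded) process up to the first communication prefix.
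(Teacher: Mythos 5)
Your proposal is correct and takes essentially the same route as the paper: its proof likewise proceeds by case analysis on the context reduction, uses a Session Inversion lemma (built on a normal-form proposition and subject congruence) to show that the distinguished $P_\role{p}$ exposes the matching prefix possibly after finitely many \rCall\ unfoldings (hence $\longrightarrow^+_\rel$), and then checks that the parallel/buffer shape and the ``only plays role \role{p}'' conditions are preserved. The only divergence is one of packaging: where you re-derive typability of $P^\prime$ by manually re-assembling the buffer typing (via \trBuffer{1}, \trBuffer{2}, \trBuffer{w}, \msgInsert\ and \fPayloads) and invoke \spRed\ for safety, the paper discharges clause (iii) wholesale by appealing to subject reduction (\cref{thm:sr}), and it only details the timeout case, treating the send and communication cases as analogous.
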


\subsection{Process Properties}
%
%
Our result of session fidelity (\cref{subset: SF}) allows us to infer runtime properties about programs in \magpi\ from their 
types.
We proceed by defining desirable runtime properties on processes (\cref{def:proc-props}); expressing the 
equivalence of these properties at type-level (\cref{def:type-props}); and presenting our 
result of \emph{process properties verification} (\cref{thm:proc-props}), linking process properties to their 
type-level equivalences.

From \cref{def:proc-props} below, a process is:
\ref{prop:rel-com-safe}~\emph{$\rel_F$-communication-safe} (new \wrt \cite{DBLP:journals/pacmpl/ScalasY19}) if it reaches the end of communication over reliable reductions and has \emph{no leftover messages} in the buffer;
\ref{prop:proc-df}~\emph{deadlock-free} if it either reduces or it is inaction;
\ref{prop:proc-term}~\emph{terminating} if it is deadlock free and can reach inaction in a finite number of steps;
\ref{prop:proc-nterm}~\emph{never-terminating} if it can always infinitely reduce; and
\ref{prop:proc-live}~\emph{live} if, for every reliable branch it can reduce to, it can 
eventually reduce to some branch continuation.
We need not consider branches with timeouts since these are trivially live, given that a process can always 
reduce over the timeout.
\begin{definition}[Process properties\textnormal{, adapted from \cite{DBLP:journals/pacmpl/ScalasY19}}] \label{def:proc-props}
    For some reliability function \rel, and full reliability function $\rel_F$, a process $P$ is said to be:
    \begin{enumerate}[label=\textbf{(\roman*)}]
        \item \label{prop:rel-com-safe}
              \textbf{$\rel_F$-communication-safe} iff  
              \begin{gather*}
                P\longrightarrow_{\rel_F}^* P^\prime \not\longrightarrow_{\rel_F} \textit{ and }\ P^\prime = \ctx[\cBuffer{s}{\sigma}]\ \textit{ implies }\ \sigma = \emptyBuffer;
              \end{gather*}

        \item \label{prop:proc-df}   
              \textbf{deadlock-free} iff $P\longrightarrow_\rel^* P^\prime \not\longrightarrow_\rel\ $ implies $\ P^\prime \equiv \inaction$;
        
        \item \label{prop:proc-term} 
              \textbf{terminating} iff it is deadlock free, and 
                \[\exists\, i\ \text{finite}\ \st\ \forall n \geq i : P = P_0 \longrightarrow_\rel P_1 \longrightarrow_\rel \dots \longrightarrow_\rel P_n\ \text{ implies }\ P_n \not\longrightarrow_\rel\textit{;}\] 
        
        \item \label{prop:proc-nterm}
              \textbf{never-terminating} iff $P\longrightarrow_\rel^* P^\prime\ $ implies $\ P^\prime \longrightarrow_\rel$;
        
        \item \label{prop:proc-live} 
              \textbf{live} iff $P\longrightarrow_\rel^* P^\prime \equiv \ctx[Q]\ $ implies
              \textnormal{\begin{multline*}
                \textit{if }\ Q = \cBranchQuant{c}{\cRcv{q$_i$}{m$_i$}{x_i}{Q^\prime_i}}, \textit{ then} \\
                \exists\, \ctx^\prime,k\in I, w : P^\prime\longrightarrow^*_{\rel} \ctx^\prime[\,Q^\prime_k[^w/_{x_k}]\,].
              \end{multline*}}
    \end{enumerate}
\end{definition}

Note that, differently from other works~\cite{DBLP:journals/corr/abs-2207-02015,DBLP:journals/pacmpl/ScalasY19}, our definition 
of liveness only speaks about receiving processes, and not sending.
Typically, liveness also requires that a sent message---in the case of \magpi, any message 
in a session buffer---is always eventually consumed.
However, because of the failures that our calculus models, it is possible that a process is 
live and still have unconsumed messages in the buffer (\eg, as a result of timing out due to a message delay).
Additionally, for a \emph{$\rel_F$-communication-safe} process it follows that all sent messages are 
consumed in the reliable case.
Hence, the traditional definition of liveness still holds for reliable network configurations, 
and our new definition provides the largest guarantees possible given the failure assumptions.

We now present the type-level equivalences of the above process properties.
For liveness, we generalise to the largest liveness property, as done with safety 
in \cref{def:safety-prop}, allowing users to define more fine-grained notions of 
liveness, if required.

From \cref{def:type-props} below, a type context is:
\ref{prop:type-rel-com-safe}~\emph{$\rel_F$-communication-safe} if it has no populated buffer types when it can no longer reliably reduce;
\ref{prop:type-df}~\emph{deadlock-free} if the reason why it can no longer reduce is because it is \fend\ typed (and 
possibly, as a result of network failures, has some leftover types that can be garbage collected);
\ref{prop:type-term}~\emph{terminating} if it is deadlock free and can reach the end of the protocol in a finite number of steps;
\ref{prop:type-nterm}~\emph{never-terminating} if it can always infinitely reduce; and
\ref{prop:type-live}~\emph{live} if, for every reliable branch it can reduce to, there is a series of steps that 
can reduce to a continuation of that branch.

\begin{definition}[Type context properties]\label{def:type-props}
    For some reliability function \rel, a full reliability function $\rel_F$, and a set of sessions \bufferTracker, we say context \contG\ is:
    \begin{enumerate}[label=\textbf{(\roman*)}]
        \item \label{prop:type-rel-com-safe}
              $(\bufferTracker;\rel_F)$\textbf{-communication-safe} iff 
              \begin{center}
              $\contG\longrightarrow_{(\bufferTracker;\rel_F)}^* \contGp \not\longrightarrow_{(\bufferTracker;\rel_F)}$ and \textnormal{$\cSessionRole{s}{p}:\tQueue \in \contGp$} implies $\tQueue = \emptyBuffer$;
              \end{center}

        \item \label{prop:type-df}   
              $(\bufferTracker;\rel)$\textbf{-deadlock-free} iff 
              \[\contG\longrightarrow_{(\bufferTracker;\rel)}^* \contGp \not\longrightarrow_{(\bufferTracker;\rel)}\ \text{ implies }\ \contGp = \contGbp{0},\contGpp\ \st\ \fend(\contGbp{0}) \text{ and } \fPayloads(\contGpp)\textit{;}\]
        
        \item \label{prop:type-term} 
            $(\bufferTracker;\rel)$\textbf{-terminating} iff it is $(\bufferTracker;\rel)$-deadlock-free, and $\exists\, i\ \text{finite}\ \st$
                \[\forall n \geq i : \contG = \contGb{0} \longrightarrow_{(\bufferTracker;\rel)} \contGb{1} \longrightarrow_{(\bufferTracker;\rel)} \dots \longrightarrow_{(\bufferTracker;\rel)} \contGb{n}\ \text{implies}\ \contGb{n} \not\longrightarrow_{(\bufferTracker;\rel)}\textit{;}\]
        
        \item \label{prop:type-nterm}
            $(\bufferTracker;\rel)$\textbf{-never-terminating} iff $\contG\longrightarrow_{(\bufferTracker;\rel)}^* \contGp\ $ implies $\ \contGp \longrightarrow_{(\bufferTracker;\rel)}$;
        
        \item \label{prop:type-live} 
              $(\bufferTracker;\rel)$\textbf{-live} iff it obeys some liveness property $(\bufferTracker;\rel)$-\fLive\ \st
                \begin{center}
                    \begin{tabular}{l}
                        \textnormal{$(\bufferTracker;\rel)$-$\fLive(\contG,\ \contV{\cSessionRole{s}{p}}{\tSession})$} and \textnormal{$\tSession = \tBranchQuant{\tRcv{q$_i$}{m$_i$}{\tType_i}{\tSession_i}}$} \\
                            \textnormal{$\Sep\Sep\Sep\implies \ \exists \contGp, k \in I : \contG,\cSessionRole{s}{p}:\tSession \longrightarrow_{(\bufferTracker;\rel)}^* \contGp,\cSessionRole{s}{p}:\tSession_k$} \\[0.2em]
                        \textnormal{$(\bufferTracker;\rel)$-$\fLive(\contG,\ \cSessionRole{s}{p}:\tRec{t}{\tSession}) \ \implies \ (\bufferTracker;\rel)$-$\fLive(\contG,\ \cSessionRole{s}{p}:\tSession[^{\tRec{t}{\tSession}}/_\recVar{t}])$} \\[0.2em]
                        $(\bufferTracker;\rel)$-$\fLive(\contG) \text{ and } \contG\rightarrow_{(\bufferTracker;\rel)}\contGp \ \implies \ (\bufferTracker;\rel)$-$\fLive(\contG)$
                    \end{tabular}
                \end{center}
    \end{enumerate}
\end{definition}

We are now ready to use these type-level equivalent properties to infer behaviours of the processes 
they type.
We present our result in \cref{thm:proc-props} which formally states that, under the same 
assumptions given in session fidelity (\cref{thm:sf}), if a process is typed under some type context, and a 
property holds on that context, then the same property holds for the process itself.

\begin{theorem}[Process properties verification]\label{thm:proc-props}
    Assuming:
        $\emptyset\cdot\contG \vdash_\bufferTracker P$ with $(\bufferTracker;\rel)$-$\fSafe(\contG)$,
        \textnormal{$P \equiv \cPar{(\Pi_{\role{p}\in I}\, P_\role{p})}{\cBuffer{s}{\sigma}}$} and \textnormal{$\contG = \bigcup_{\role{p}\in I}\, \contGb{\role{p}}$}.
        Further, for each \textnormal{$P_\role{p}$}:
        \begin{enumerate*}[label=(\roman*)]
            \item \textnormal{$\emptyset\cdot\contGb{\role{p}}\ \vdash_\bufferTracker P_\role{p}$}, and
            \item \textnormal{$P_\role{p} \equiv \inaction\ $ or $\ P_\role{p}$} only plays role \textnormal{\role{p}} in $s$, by \textnormal{\contGb{\role{p}}}.
        \end{enumerate*}
        Then, $\forall \phi \in$ \{$\rel_F$-communication-safe, deadlock-free, terminating, never-terminating, live\}, 
        if $(\bufferTracker;\rel)$-$\phi(\contG)$, then P is $\phi$.
\end{theorem}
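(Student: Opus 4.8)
The plan is to prove the statement by case analysis on $\phi$, in each case deriving the process-level property from its type-level counterpart in \cref{def:type-props} by combining subject reduction (\cref{thm:sr}) and session fidelity (\cref{thm:sf}). Two preliminaries are needed throughout. First, a bundle of \emph{invariance lemmas}: along any $\longrightarrow_\rel$ reduction the hypotheses ``$P$ has guarded definitions'', ``each $P_\role{p}$ is $\inaction$ up-to-$\equiv$ or only plays role $\role{p}$ in $s$'', the buffer-tracker $\bufferTracker$, and $(\bufferTracker;\rel)$-$\fSafe$ of the typing context (the latter being precisely \cref{thm:sr} together with clause \spRed\ of \cref{def:safety-prop}) are all preserved; these are what let \cref{thm:sf} be iterated from any reachable configuration. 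Second, two \emph{shape lemmas} about non-reducing contexts: (a) if $\emptyset\cdot\contGp\ \vdash_\bufferTracker P'$ with the above conditions and $\contGp = \contGbp{0},\contGpp$ with $\fend(\contGbp{0})$ and $\fPayloads(\contGpp)$, then $P'\equiv\inaction$ --- the typing rules force such a $P'$ to be, up to $\equiv$, nested restrictions, parallel compositions and definitions wrapping only $\inaction$ and empty buffers; and (b) if every buffer type occurring in $\contGp$ is $\emptyBuffer$, then any session buffer $\cBuffer{s}{\sigma}$ typed under $\contGp$ has $\sigma=\emptyBuffer$, which is immediate from \trBuffer{1} and \trBuffer{2}.

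\textbf{Direct cases (never-termination, deadlock-freedom, $\rel_F$-communication-safety).} For never-termination, given $P\longrightarrow_\rel^* P'$, iterating \cref{thm:sr} yields $\contGp$ with $\contG\longrightarrow_{(\bufferTracker;\rel)}^* \contGp$, $(\bufferTracker;\rel)$-$\fSafe(\contGp)$ and $\emptyset\cdot\contGp\ \vdash_\bufferTracker P'$; clause \ref{prop:type-nterm} gives $\contGp\longrightarrow_{(\bufferTracker;\rel)}$, so \cref{thm:sf} produces $P'\longrightarrow^+_\rel$ and hence $P'\longrightarrow_\rel$. For deadlock-freedom, given $P\longrightarrow_\rel^* P'\not\longrightarrow_\rel$, take $\contGp$ as above; by the contrapositive of \cref{thm:sf} no context step is enabled, i.e.\ $\contGp\not\longrightarrow_{(\bufferTracker;\rel)}$, so clause \ref{prop:type-df} splits $\contGp$ as $\contGbp{0},\contGpp$ with $\fend(\contGbp{0})$, $\fPayloads(\contGpp)$, and shape lemma (a) gives $P'\equiv\inaction$. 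The $\rel_F$-communication-safe case is the same argument run with $\rel_F$ in place of $\rel$: from $P\longrightarrow_{\rel_F}^* P'\not\longrightarrow_{\rel_F}$ with $P'=\ctx[\cBuffer{s}{\sigma}]$ we obtain a non-reducing safe $\contGp$ typing $P'$, clause \ref{prop:type-rel-com-safe} forces all its buffer types to be $\emptyBuffer$, and shape lemma (b) gives $\sigma=\emptyBuffer$.

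\textbf{Termination.} Deadlock-freedom is already in hand, so the remaining task is to bound the length of $\longrightarrow_\rel$ sequences from $P$. The obstacle is that \rChoice, \rCall\ and \rDrop\ are mapped to \emph{zero} context steps by \cref{thm:sr}, so merely projecting process reductions onto context reductions need not terminate. I would instead exhibit a lexicographic, well-founded measure on configurations: (i) the length of the longest remaining $\longrightarrow_{(\bufferTracker;\rel)}$ sequence from the typing context, finite by clause \ref{prop:type-term}, which strictly decreases on every genuine communication step (\rSelection, \rBranch, \rTimeOut) and never increases; (ii) the total size of the session buffer, which strictly decreases on \rDrop\ and is unaffected by \rChoice\ and \rCall; and (iii) a structural measure on the process term that strictly decreases on \rChoice\ and, by the ``guarded definitions'' hypothesis, cannot be indefinitely restored by \rCall\ without an intervening communication step (which already decreases~(i)). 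Well-foundedness then yields the required finite $i$.

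\textbf{Liveness.} Given $P\longrightarrow_\rel^* P'\equiv\ctx[Q]$ with $Q=\cBranchQuant{c}{\dots}$ a timeout-free branch, write $c=\cSessionRole{s}{p}$ and let $\contGp$ type $P'$; then $\contGp$ maps $\cSessionRole{s}{p}$ to a branch type $\tSession$ of the matching shape, and clause \ref{prop:type-live} provides a context reduction $\contGp\longrightarrow_{(\bufferTracker;\rel)}^* \contGpp,\cSessionRole{s}{p}:\tSession_k$ reaching a continuation of that branch. The crux is to replay \emph{this particular} context reduction at the process level, since \cref{thm:sf} only guarantees that $P'$ matches \emph{some} enabled context step. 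I would establish a commutation property of $\longrightarrow_{(\bufferTracker;\rel)}$ (independent actions on disjoint sessions/roles commute), use it to rearrange the witnessing sequence so that the step consuming the branch at $\cSessionRole{s}{p}$ is taken as soon as it becomes enabled, and then combine \cref{thm:sf} with \cref{thm:sr} to pull the rearranged sequence back to a reduction $P'\longrightarrow^+_\rel \ctx^\prime[\,Q^\prime_k[^w/_{x_k}]\,]$ for suitable $\ctx^\prime$, $k\in I$ and $w$. This commutation argument, together with the measure argument of the termination case, is where the main work lies; the rest is bookkeeping over \cref{thm:sr}, \cref{thm:sf} and the shape lemmas.
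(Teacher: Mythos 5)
Your high-level route is exactly the one the paper takes: the paper's entire proof of this theorem is a one-sentence appeal to session fidelity (\cref{thm:sf}), implicitly iterated together with subject reduction (\cref{thm:sr}); everything else in your proposal is material the paper leaves unsaid. Most of that material is the right material. The invariance lemmas are genuinely needed (\cref{thm:sr} alone does not preserve the decomposition into single-role players, so you cannot re-apply \cref{thm:sf} at an arbitrary reachable $P'$ without them --- this is the analogue of the preservation lemma in \cite{DBLP:journals/pacmpl/ScalasY19}), and your treatment of never-termination, deadlock-freedom and $\rel_F$-communication-safety is the argument one would expect, with the caveat that your shape lemma (a) is stated a little too strongly: $\fPayloads$ tolerates non-empty buffer types, so a non-reducing well-typed $P'$ may still carry leftover messages, and you must discharge this either via the buffer congruence $\cRestriction{s}{\cBuffer{s}{\sigma}}\equiv\inaction$ (which requires $s$ to be restricted) or by arguing such leftovers can be dropped or cannot survive under $\longrightarrow_\rel$.

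The two places where your plan does real work beyond the paper are also the two places where it is still only a sketch. For termination, item (iii) of your lexicographic measure leans on guarded definitions, but \cref{def:sf-conds} only guards calls that pass \emph{session-typed} parameters; a recursive definition whose parameters are all of basic type can fire \rCall\ forever with no intervening communication, so (iii) needs either a stronger argument that the ``only plays role \role{p} in $s$'' discipline excludes such unproductive loops, or an explicit acknowledgement that this caveat is inherited from \cite{DBLP:journals/pacmpl/ScalasY19}. For liveness, you correctly identify the crux --- \cref{thm:sf} only guarantees that the process matches \emph{some} enabled context step, not the particular witnessing sequence supplied by the liveness property --- and the commutation/replay lemma you invoke is precisely the missing ingredient; it is plausible but not established, and the paper offers no guidance on it. In short: same strategy as the paper, considerably more explicit, but the proposal is a plan rather than a proof at exactly those two named points.
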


\subsection{Decidability}

Since \magpi\ is Turing-complete, determining the properties listed in 
\cref{def:proc-props} from \emph{processes} is \emph{undecidable}~\cite{DBLP:journals/mscs/BusiGZ09}.
A benefit of our generalised 
theory 
is that undecidable process properties can be inferred from 
\emph{decidable} type-level properties.

\begin{theorem}[Decidability]\label{thm:decidability}
    If $(\bufferTracker;\rel)$-$\phi(\contG)$ is decidable, then \quot{$\contO\cdot\contG\ \vdash_\bufferTracker P$ with $(\bufferTracker;\rel)$-$\phi(\contG)$} is decidable.
\end{theorem}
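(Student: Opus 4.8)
The plan is to show that type checking a process against a decidable context property $\phi$ reduces to two independent decision problems: (a) deciding whether $\contO\cdot\contG\ \vdash_\bufferTracker P$, and (b) deciding $(\bufferTracker;\rel)$-$\phi(\contG)$. Since (b) is decidable by hypothesis, it suffices to establish that (a) is decidable --- i.e.\ that the typing judgement itself is decidable. The statement then follows because ``$\contO\cdot\contG\ \vdash_\bufferTracker P$ with $(\bufferTracker;\rel)$-$\phi(\contG)$'' is the conjunction of (a) and (b), and the conjunction of two decidable predicates is decidable.

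First I would argue decidability of the typing judgement by exhibiting a terminating, deterministic (up to congruence) type-checking procedure that is syntax-directed on $P$. The key observation is that every typing rule in \cref{fig:type-rules} is structurally recursive: each premise mentions a strict subterm of the process in the conclusion (for \trInaction, \trVar, \trVal, \trX\ there are no process premises at all; for \trSelection, \trBranch, \trChoice, \trPar\ the continuations/components are strict subterms; for \trDef\ the body and continuation are subterms; for \trCall\ and the buffer rules \trEmpty, \trBuffer{1}, \trBuffer{2}\ the buffer tail $s{:}\sigma$ is strictly smaller; \trLift\ and \trNew\ strip a single outer constructor). Thus a recursion on the structure of $P$ terminates. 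At each node the procedure must (i) pick which rule applies --- this is determined by the top-level constructor of $P$, with only \trBuffer{w}\ (weakening) being non-syntax-directed, so I would argue it can be pushed to the leaves or absorbed by working with contexts up to garbage collection; and (ii) reconstruct the context splits (e.g.\ $\contG,\contGb{1},\contGb{2}$ in \trSelection, or $\contGb{1},\contGb{2}$ in \trPar) and the index $k\in I$ and buffer-tracker partitions. The splits are finitely many (a finite context has finitely many decompositions, $I$ is finite, $\bufferTracker$ is finite), so each node branches finitely. The side conditions --- $\fend(\contG)$, $\fPayloads(\contG)$, context composition definedness, congruence $\equiv$ (buffer reordering is just multiset equality), and the freshness/$(\{s\};\rel)$-$\fSafe$ check in \trNew\ --- are each decidable: \fend\ and \fPayloads\ are defined by finite inductive rules over finite contexts, and $\fSafe(\contGp)$ for the \emph{explicitly annotated, finite} context $\contGp$ appearing in a restriction is decidable because $\fSafe$ is characterised by the conditions \spR{1}, \spR{2}, \spCom, \spRec, \spRed\ of \cref{def:safety-prop}, whose closure over reachable contexts is computable when buffers are bounded --- and in \trNew\ the initial $\contGp$ has empty buffer types so its reachable set is finite. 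Putting this together yields a decision procedure for (a).

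The main obstacle I anticipate is the non-syntax-directedness introduced by the weakening rule \trBuffer{w}\ together with the ``garbage-collected'' buffer typing (\trEmpty, and the $\fPayloads$ side conditions), which means the context in a buffer-typing derivation is not uniquely determined by the buffer term. I would handle this by proving a normalisation/inversion lemma: a buffer $\cBuffer{s}{\sigma}$ is typable under $\contG$ iff $\contG$ decomposes (via the message-insertion function $\msgInsert$ of \cref{fig:msg-insert}) into a ``minimal'' context exactly matching $\sigma$ together with a $\fPayloads$-garbage part, and this decomposition is computable. With that lemma, weakening is eliminated from the search and the procedure becomes genuinely syntax-directed. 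A secondary subtlety is that $\contO$ and $\contG$ are finitary data and all operations on them (lookup, composition, congruence, the predicates) are plainly computable; I would state this as a routine preliminary fact. Finally, I note the theorem does not claim decidability of \emph{type inference} (finding $\contG$), only of \emph{checking} against a given $\contG$, so I need not worry about synthesising recursion types or reliability data --- $\contG$, $\bufferTracker$, $\rel$ and $\contO$ are all inputs. Combining decidability of (a) with the hypothesis on (b) closes the proof.
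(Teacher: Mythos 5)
Your overall decomposition --- (a) decidability of the typing judgement via a terminating, syntax-directed procedure plus (b) the hypothesis on $\phi$, with the weakening rule \trBuffer{w} eliminated in favour of garbage-collected buffer typing --- is exactly the shape of the paper's argument. The problem is your treatment of the side condition $(\{s\};\rel)$-$\fSafe(\contGp)$ in \trNew. You try to establish its decidability outright, claiming that because the annotated context $\contGp$ in a restriction has empty buffer types, ``its reachable set is finite'' and the closure under \spRed\ is computable. This is false in general: context reduction grows buffer types via \crrSnd{1}/\crrSnd{2}, and with recursive session types a context starting from empty buffers can reach infinitely many contexts (e.g.\ $\cSessionRole{s}{p}:\tRec{t}{\tSnd{q}{m}{\tInt}{\recVar{t}}}$ pumps an unbounded queue type). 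Indeed the paper stresses precisely this point: pairing buffer types with session types makes the type-level system Turing-powerful, type-level properties are in general undecidable for the asynchronous theory, and that is why boundedness (\cref{def:boundedness}, \cref{thm:decidable-props}) is introduced as a separate result. Boundedness is not assumed in \cref{thm:decidability}, so your appeal to it (``computable when buffers are bounded'') is not available, and the finite-reachability claim does not rescue it.

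The paper closes this hole differently: decidability of the \trNew\ check is \emph{not} proved, it is discharged by the theorem's hypothesis --- the safety check is an instance of the decidable type-level property assumed in the statement (``decidability of rule \trNew\ is dependant on decidability of \fSafe, which is decidable by the hypothesis''). In other words, \cref{thm:decidability} is a conditional reduction of type checking to the decidability of the context property, including the $\fSafe$ checks arising at restrictions; it does not assert that those checks are decidable absolutely. If you replace your finite-reachability argument for \trNew\ with an appeal to the hypothesis (treating the restriction-annotated safety checks as instances of the assumed-decidable property), the rest of your proposal --- structural recursion, finitely many context splits, decidability of \fend, \fPayloads, congruence, and the inversion lemma eliminating \trBuffer{w} --- matches the paper's proof and goes through.
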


Our decidability result (\cref{thm:decidability}) states that for any decidable 
type-level property, type-checking with that property is decidable.
However, since \magpi\ is \emph{asynchronous}, we have \emph{no} results on decidability 
of $\phi$.
On the contrary, as discussed in~\cite[sec. 7]{DBLP:journals/pacmpl/ScalasY19}, 
type-level properties for \emph{asynchronous} type theories are, in some cases,
\emph{undecidable}.
This is a result of pairing buffer types with session types---which makes the 
type system Turing-powerful~\cite[thm. 2.5]{DBLP:journals/corr/abs-1211-2609}.
Scalas and Yoshida~\cite{DBLP:journals/pacmpl/ScalasY19} address this issue 
through two methods:
\begin{enumerate*}[label=(\textit{\roman*})]
    \item standard global types produce type contexts that can be captured through a 
          \emph{decidable \textbf{consistency}} property; and 
    \item restricting the size of the message buffer to make properties decidable. 
\end{enumerate*} 
The former ensures decidability by restricting communication to match the expressivity 
of global types.
For the latter, they show that any type context that remains bound within a finite-sized 
buffer is decidable (since the type has a finite state transition system representation).
In line with their results, we lift their definition of \emph{boundedness}, \ie\ a restriction 
on the size of a buffer, to \magpi's type system.

\begin{definition}[Boundedness\textnormal{, from~\cite{DBLP:journals/pacmpl/ScalasY19}}]\label{def:boundedness}
    We say \contG\ is $(\bufferTracker;\rel)$-\textbf{bound}$_k$ iff
    \textnormal{$\exists k \in \mathbb{N} : \contG \longrightarrow_{(\bufferTracker;\rel)}^* \contGp,\cSessionRole{s}{p}:\tQueue\ $} 
    implies $\ |\tQueue| < k$. \\
    \noindent
    We say \contG\ is $(\bufferTracker;\rel)$-\textbf{bounded} iff $\ \exists k\ \text{finite} : (\bufferTracker;\rel)$-\textbf{bound}$_k(\contG)$.
\end{definition}

\noindent
Using \cref{def:boundedness}, we present our result of decidable bounded properties in 
\cref{thm:decidable-props}.
\begin{theorem}[Decidable bounded properties]\label{thm:decidable-props}
    $(\bufferTracker;\rel)$-bound$_k(\contG)$ is decidable for all $\bufferTracker, \rel,$ and $k$.
    Furthermore, if $(\bufferTracker;\rel)$-bounded$(\contG)$, then 
    $\forall \phi \in $ \{$\rel_F$-communication-safe, deadlock-free, terminating, never-terminating, live\}, 
    it holds that $(\bufferTracker;\rel)$-$\phi(\contG)$ is decidable.
\end{theorem}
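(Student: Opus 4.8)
The plan is to prove the two parts of \cref{thm:decidable-props} separately, since the first (decidability of boundedness itself) is a prerequisite for the second.

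\textbf{Part 1: $(\bufferTracker;\rel)$-bound$_k(\contG)$ is decidable.} The key observation is that, for a \emph{fixed} bound $k$, the set of typing contexts reachable from \contG\ via $\rightarrow_{(\bufferTracker;\rel)}$ while all buffer types stay strictly below size $k$ is \emph{finite}. First I would argue this: a context reachable from \contG\ has the same domain (sessions with roles do not appear or disappear under $\rightarrow_{(\bufferTracker;\rel)}$ — inspect the rules \crrTO, \crrSnd{1}, \crrSnd{2}, \crrCom, \crrRec, \crrCong in \cref{fig:ctx-reduction}), each session component is a subterm (up to recursion unfolding, which by guardedness of recursion variables produces finitely many distinct unfoldings of the finitely many recursive subtypes of the original session types) of one of the finitely many session types appearing in \contG, and each buffer component is a sequence of length $< k$ of message-entries whose labels and payload types are drawn from the finite set of those occurring in \contG. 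Hence only finitely many contexts are reachable under the size-$k$ restriction, and the reduction relation on them is computable. So one can do a finite reachability search: explore the $\rightarrow_{(\bufferTracker;\rel)}$-graph from \contG, and declare ``bound$_k$'' holds iff no reachable context contains a buffer type of size $\geq k$; if such a context is found, it fails. This terminates because the search space is finite whenever we truncate at size $k$, and any violation is witnessed at a finite depth.

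\textbf{Part 2: bounded contexts have decidable $\phi$.} Suppose $(\bufferTracker;\rel)$-bounded$(\contG)$, so bound$_k(\contG)$ for some finite $k$ (note: the statement gives us boundedness as a hypothesis, so we may take such a $k$ as given — we do not need to \emph{find} it, though by Part 1 we could search for it). Then by the reachability argument of Part 1, the entire $\rightarrow_{(\bufferTracker;\rel)}$-reachable fragment from \contG\ is a \emph{finite} labelled transition system that is effectively computable. For each property $\phi$ in the list I would then observe that $(\bufferTracker;\rel)$-$\phi(\contG)$ is a property of this finite LTS that is decidable by standard graph algorithms: \emph{communication-safety} (\cref{def:type-props}\ref{prop:type-rel-com-safe}) and \emph{deadlock-freedom} (\ref{prop:type-df}) are reachability checks on terminal states, checking whether every $\rightarrow_{(\bufferTracker;\rel_F)}$-terminal (resp.\ $\rightarrow_{(\bufferTracker;\rel)}$-terminal) reachable context has empty buffers (resp.\ splits as $\contGbp{0},\contGpp$ with $\fend(\contGbp{0})$ and $\fPayloads(\contGpp)$ — both $\fend$ and $\fPayloads$ being decidable predicates on a given finite context); \emph{termination} (\ref{prop:type-term}) is deadlock-freedom plus absence of a reachable cycle; \emph{never-termination} (\ref{prop:type-nterm}) is the check that every reachable context has an outgoing edge; and \emph{liveness} (\ref{prop:type-live}) — taking the \emph{largest} liveness property $\fLive$ as in \cref{def:type-props} — reduces to checking, for each reachable context of the form $\contG',\cSessionRole{s}{p}:\tSession$ with $\tSession$ a branch type, that some branch-continuation $\contG'',\cSessionRole{s}{p}:\tSession_k$ is reachable from it, again a finite reachability question. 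Since $\rel_F$ is just a particular reliability function, $\rightarrow_{(\bufferTracker;\rel_F)}$ is handled by the same machinery, and boundedness under $\rel_F$ follows from boundedness under $\rel$ because $\rightarrow_{(\bufferTracker;\rel_F)}$ has \emph{fewer} transitions than $\rightarrow_{(\bufferTracker;\rel)}$ (it omits timeout reductions), hence reaches no larger buffers.

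\textbf{Main obstacle.} The crux — and the step I would spend the most care on — is establishing that the truncated reachable set is genuinely finite, and in particular handling recursion: I must show that unfolding recursive session types along reductions does not generate infinitely many syntactically distinct session components. This relies on the guardedness condition from \cref{def:type-syntax} (recursion variables appear guarded under branching/selection, and are not free), which ensures each recursive type has only finitely many ``cursor positions'' and that \crrRec\ effectively works modulo a finite set of unfoldings; one should make precise a notion of the (finite) set of session-subterms-up-to-unfolding and show it is closed under $\rightarrow_{(\bufferTracker;\rel)}$. A secondary subtlety is that payload types carried in buffer types are themselves session types that could in principle be recursive, but \cref{fig:payloads} (\fPayloads) and the buffer-typing rules keep these drawn from the finitely many types in the original \contG, so no new recursion is introduced through payloads. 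Once finiteness of the truncated reachability graph is nailed down, both parts follow by routine decidable graph-reachability and cycle-detection arguments.
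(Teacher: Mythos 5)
Your proposal is correct and takes essentially the same route as the paper's (very terse) proof: decide $(\bufferTracker;\rel)$-bound$_k(\contG)$ by a finite exploration of the truncated reduction graph, and for bounded \contG\ represent the reachable contexts as a finite transition system on which each $\phi$ is checked by standard reachability/cycle arguments. You merely fill in the finiteness-up-to-unfolding and per-property details that the paper leaves implicit, which strengthens rather than changes the argument.
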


Thus, decidability is guaranteed for all protocols expressible through standard
\emph{asynchronous} global type theory, and all protocols that use finite
message buffers---now with the benefit of reasoning about and handling network
errors!


\begin{example}[Ping Pong: Properties]\label{ex:ping-props}
    Inspecting the types in \cref{ping-pong} and \cref{ping:types},
    we can conclude that $\contG = \{\cSessionRole{s}{p}: \tSession_\role{p}, \cSessionRole{s}{q}: \tSession_\role{q}, \cSessionRole{s}{r}: \tSession_\role{r}\}$ is bound$_4$.
   By \cref{thm:decidable-props}, $\contG$ is decidable to check for type-level properties.
    On doing so, we determine that \contG\ is:
    \begin{enumerate*}[label=(\textit{\roman*})]
        \item \emph{safe}, it satisfies the safety property (\cref{def:safety-prop})
              required for subject reduction;
        \item \emph{$\rel_F$-communication-safe}, since if we only consider reliable reductions, no buffer types remain populated;
        \item \emph{terminating}, since we can count the number of steps taken to reach the end of the protocol; and
        \item \emph{live}, as reliable communication $\tSession_\role{r}$ always reduces---\ie\ a result is always obtained.
    \end{enumerate*}
\end{example}

\section{Generalising Network Assumptions}
\label{sec:network_assumptions}

The work presented thus far covers worst-case network assumptions 
for communication.
As beneficial as this may be for low-level networks programming, and for complex 
distributed applications with minimal assumptions (\eg\ consensus protocols), 
not all applications are built on these pessimistic conditions.
\Eg\ many distributed applications operate over the Transmission Control 
Protocol (TCP), and thus assume that if consecutive messages are received from the 
same source, then they are guaranteed to arrive in the order in which they were sent.

We now showcase the few changes to \magpi\ required to alter its 
network assumptions.
It is key to note that these changes produce a \emph{subset} of \magpi, thus all relevant properties continue to be valid for its TCP-compliant version.

\subsection{From Total to Partial Reordering}

In a \emph{reliable} network configuration designed to run over TCP, message reordering 
for communication between \emph{two parties} is guaranteed to \emph{not occur}.
Therefore, we can adjust the message reordering of \magpi\ to model this environment, 
and strengthen our safety property \fSafe\ to \emph{TCP-safe} communication.
\magpi\ models message reordering through buffer congruence rules.
Therefore, strengthening congruence suffices to restrict communication to 
the TCP-safe assumptions.
\begin{definition}[TCP process-congruence]\label{def:TCP-proc-cong}
    The process congruence for the TCP-compliant subset of \magpi, $\equiv_{\tcp}$,
    is inductively defined using the same rules defining $\equiv$ (in \cref{def:buffer_cong}), but with the following change:
    \textnormal{\footnotesize{\begin{gather*}
        \cBuffer{s\!}{\!\sigma_1\cBufferCons h_1 \cBufferCons h_2 \cBufferCons\sigma_2} \;\equiv\; \cBuffer{s\!}{\!\sigma_1\cBufferCons h_2 \cBufferCons h_1 \cBufferCons\sigma_2} \\
        \textit{ replaced by }\\
        \inferrule{\role{p$_1$} \neq \role{p$_2$}\ \textit{or}\ \role{q$_1$} \neq \role{q$_2$}}
                    {\cBuffer{s\!}{\!\sigma_1\cBufferCons \cBufferEntry{\role{p$_1$}}{\role{q$_1$}}{m$_1$}{w_1} \cBufferCons \cBufferEntry{\role{p$_2$}}{\role{q$_2$}}{m$_2$}{w_2} \cBufferCons\sigma_2} \\\equiv_\tcp\; \cBuffer{s\!}{\!\sigma_1\cBufferCons \cBufferEntry{\role{p$_2$}}{\role{q$_2$}}{m$_2$}{w_2} \cBufferCons \cBufferEntry{\role{p$_1$}}{\role{q$_1$}}{m$_1$}{w_1} \cBufferCons\sigma_2}}
    \end{gather*}}}
\end{definition}

To obtain the TCP-compliant subset of \magpi, we assume reductions over fully reliable networks 
and adopt TCP process congruence from \cref{def:TCP-proc-cong}, which no longer allows reordering 
of messages for each role couple.
%
We now reflect this definition of TCP congruence at the type-level in \cref{def:TCP-type-cong}, and use this to define 
a TCP-safety property on type contexts in \cref{def:tcp-safety-prop}.

\begin{definition}[TCP type-congruence]\label{def:TCP-type-cong}
    The type congruence for the TCP-compliant subset of \magpi, $\equiv_{\tcp}$,
    is inductively defined using the same rules as $\equiv$ (\cref{fig:type-cong}), but with the following change:
    \textnormal{\footnotesize{\begin{gather*}
        \inferrule{\ }{\tQueue_1\cBufferCons\tQueue_2 \equiv \tQueue_2\cBufferCons\tQueue_1}
        \Sep\textit{ replaced by }\Sep
        \inferrule{\role{p} \neq \role{q}}{\tBuffer{p}{m$_1$}{\tType_1}{\tBuffer{q}{m$_2$}{\tType_2}{\tQueue}} \\\\\equiv_\tcp \tBuffer{q}{m$_2$}{\tType_2}{\tBuffer{p}{m$_1$}{\tType_1}{\tQueue}}}
    \end{gather*}}}
\end{definition}

\begin{definition}[TCP safety]\label{def:tcp-safety-prop}
    Predicate $\fTCP$ is a $\bufferTracker$-\textbf{TCP-safety} property on typing contexts iff: 
    \begin{center}\textnormal{
        \begin{tabular}{l}
            $\fTCP(\contG,\ \contV{\cSessionRole{s}{p}}{\tBranchQuant{\tRcv{q$_i$}{m$_i$}{\tType_i}{\tSession_i}}},\, \contV{\cSessionRole{s}{q}\!}{\!\tQueue})$ \\
            $\Sep\text{and } \tQueue \equiv_\tcp \tBuffer{p}{m}{\tType}{\tQueue^\prime}$ \\
            $\Sep\text{and } \exists\, k \in I : \role{q$_k$} = \role{q} \ \implies \ \msgLabel{m$_k$} = \msgLabel{m}\ \land\ \tType_k = \tType$ \\[0.2em]
            $\fTCP(\contG,\ \cSessionRole{s}{p}:\tRec{t}{\tSession}) \Sep\ \  \implies \ \fTCP(\contG,\ \cSessionRole{s}{p}:\tSession[^{\tRec{t}{\tSession}}/_\recVar{t}])$ \\[0.2em]
            $\fTCP(\contG) \text{ and } \contG\rightarrow_\bufferTracker \contGp \ \ \implies \ \fTCP(\contGp)$
        \end{tabular}}
    \end{center}
\end{definition}

Similar to our previous definition of safety in \cref{def:safety-prop}, TCP safety ensures that payload 
types of communicating entities match.
In addition, it also requires correct ordering of messages (up to $\equiv_\tcp$) by checking 
message labels---this is possible since messages between two parties do not get reordered, and so 
they must be received in the same order they are sent.
In order to benefit from the session theorems proved in \cref{sec:types-props}, all that is required is 
to show that $\fTCP \subseteq \fSafe$, \ie\ any context that is TCP-safe is also safe.
This is the only requirement since all theorems in \cref{sec:types-props}
\begin{enumerate*}[label=(\textit{\roman*})]
    \item are parametric on the reliability function \rel, including fully reliable networks; and
    \item are proven for $(\bufferTracker;\rel)$-$\fSafe(\contG)$.
\end{enumerate*}

\begin{proposition}[Containment of \fTCP\ in \fSafe]
    $\forall \contG \in \fTCP : \contG \in \fSafe$.
\end{proposition}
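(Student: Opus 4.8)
The plan is to exploit the fact that $\fSafe$ is, by construction, the \emph{greatest} predicate on typing contexts closed under the five clauses \spR{1}, \spR{2}, \spCom, \spRec, \spRed of \cref{def:safety-prop} (this is exactly what ``$\fSafe$ is the largest safety property'' means). By the attendant coinduction principle, to prove $\fTCP\subseteq\fSafe$ it suffices to show that $\fTCP$ is \emph{itself} a safety property: taking $\contG\in\fTCP$, one must verify the conclusions of all five clauses with $\fTCP$ in place of $\fSafe$. Throughout, the relevant instance of safety is the one the TCP-compliant subset induces: the full reliability function $\rel_F$ (so $\rel_F(\role{p})=\roles\setminus\{\role{p}\}$ for every \role{p}), $\equiv_\tcp$ (\cref{def:TCP-type-cong}) in the role of buffer-type congruence, and the context reduction induced by $\longrightarrow_{\rel_F}$.

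First I would dispose of the two branching clauses. For \spR{1}: if $\contG\in\fTCP$ exposes a timeout-free branch at a role \role{p}, then every source of that branch differs from \role{p} (a well-formed local type of \role{p} does not receive from \role{p}), hence belongs to $\roles\setminus\{\role{p}\}=\rel_F(\role{p})$ --- which is exactly the conclusion of \spR{1}. For \spR{2}: no context arising in the TCP-compliant subset carries a timeout branch, because under $\rel_F$ the rule \rTimeOut\ (and its type-level counterpart \crrTO) can never fire --- its side condition demands a source role equal to the receiver --- so the clause is vacuous on the contexts in scope. The structural clauses are equally direct. Clause \spRec is \emph{literally} the second clause of \cref{def:tcp-safety-prop} with $\fTCP$ substituted for $\fSafe$, so it holds by definition of $\fTCP$. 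For \spRed, observe that under $\rel_F$ the type reduction $\rightarrow_{(\bufferTracker;\rel_F)}$ coincides with $\rightarrow_\bufferTracker$, since no timeout action is available; closure of $\fTCP$ under $\rightarrow_\bufferTracker$ is then exactly the third clause of \cref{def:tcp-safety-prop}.

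This leaves \spCom, the heart of the matter. Assume $\contG\in\fTCP$ exposes a branch at \role{p} together with a buffer type for \role{q}, and that this buffer type is $\equiv_\tcp$-congruent to one whose head is a message to \role{p} with label \msgLabel{m} and payload type \tType. The first clause of \cref{def:tcp-safety-prop} then forces the (unique, by pairwise distinctness of role--label pairs) option of the branch that receives from \role{q} to carry label \msgLabel{m} and payload type \tType; the latter equality is precisely the conclusion required by \spCom --- in fact $\fTCP$'s clause is strictly stronger, as it additionally pins down the label. The one genuinely delicate point is the bookkeeping over the two congruences: \spCom as stated in \cref{def:safety-prop} ranges over all decompositions of the buffer type \emph{up to total reordering}, whereas \cref{def:tcp-safety-prop} only controls those reachable by the coarser, order-preserving $\equiv_\tcp$. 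Reconciling these is exactly the reason the comparison must be made against the TCP instantiation of $\fSafe$ (in which $\equiv$ is replaced by $\equiv_\tcp$): with that reading the two quantifications range over the same decompositions and the implication is immediate. Once all five clauses are checked, coinduction yields $\fTCP\subseteq\fSafe$, so that every TCP-safe context inherits all the results of \cref{sec:types-props}.
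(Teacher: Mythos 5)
Your proposal is correct and follows the same overall strategy as the paper's proof: both arguments establish containment by checking that \fTCP\ satisfies each defining clause of a $(\bufferTracker;\rel)$-safety property, discharging \spR{1} and \spR{2} via full reliability and the absence of timeout branches, \spRec\ directly from the second clause of the TCP definition, and \spRed\ by instantiating $\rel = \rel_F$ so that type reduction degenerates to $\rightarrow_\bufferTracker$. The only genuine divergence is in \spCom. The paper dispatches it with the single remark that $\equiv_\tcp\,\subset\,\equiv$, whereas you correctly notice that this inclusion sits on the wrong side of the implication: \spCom\ quantifies over buffer decompositions up to \emph{total} reordering, so it constrains strictly more exposed heads than the TCP clause does (\eg\ a later message to the same recipient carrying the same label but a different payload type is exposable under $\equiv$ but not under $\equiv_\tcp$, and \fTCP\ says nothing about it). Your resolution---reading \fSafe\ in its TCP instantiation, with $\equiv$ replaced by $\equiv_\tcp$, so that both clauses range over the same decompositions---is more candid about what is actually established and about why the result suffices for reusing the theorems of \cref{sec:types-props} in the TCP-compliant subset; the trade-off is that, so read, you are proving containment relative to the TCP-instantiated safety property rather than the proposition verbatim. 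The paper's one-line justification glosses over exactly the point you isolate, so your treatment of the key clause is, if anything, the more careful of the two.
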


\begin{proof}
    \fTCP\ uses a fully reliable configuration of \magpi---\ie\ is void of failure-handling timeouts---and thus 
    trivially abides by \spR{1} and \spR{2}. 
    \spRec\ is reflected directly in \fTCP.
    \spRed\ is reflected for $\rel = \rel_F$, \ie\ for a fully reliable configuration.
    \spCom\ is never violated by $\contG \in \fTCP$ since $\equiv_\tcp\ \subset\ \equiv$. \qed
\end{proof}

\section{Case Study}\label{sec:examples}

This work presents the Ping (\cref{ping-pong,ex:ping-procs,ex:ping-rel,ping:types,ex:ping-props}) and Domain Name System (\cref{subsec:DNS-cs}) examples as they are widely known, and between them cover the full range of our contributions. 
Previous related works are \emph{not} expressive enough to model either protocol with our range of failure assumptions. 
Thus Ping and DNS are suitable to illustrate how \magpi\ pushes the boundaries of MPST.
Additional examples are provided in \cref{app:examples}.

\subsection{DNS}\label{subsec:DNS-cs}

We now demonstrate the key features of \magpi\ through a case study.
We present a multiparty example of a Domain Name 
System (DNS) with a cache and inbuilt load-balancer.
This example:
\begin{enumerate*}[label=(\textit{\roman*})]
    \item reasons about failures in its unreliable connections that are specified using 
          our novel \emph{viewpoint-specific} reliability sets;
    \item defines \emph{failure-handling} protocols for these possible failures;
    \item is \emph{bounded} (\cref{def:boundedness}), and thus has decidable type-level properties; and
    \item is \emph{safe, $\rel_F$-communication-safe, deadlock-free, terminating,} and \emph{live}. 
\end{enumerate*}
Typically, DNS is implemented over TCP, however the distributed components can still suffer hardware failures. 
To cater for this, and for better demonstration of our contributions, we describe the protocol in our failure-prone setting.
%


\subsubsection{Specification}\label{subsec:DNS}

We consider a specification of a client-{DNS} interaction, where the client 
consults a {cache}, and the {DNS} delegates requests to workers.

The client, represented by role \role c, wishes to retrieve a web-address for a 
particular URL, and can do so by issuing a \msgLabel{req}uest to the DNS. 
As an optimisation, the client also stores recently retrieved addresses in a local 
and reliable \role{cache}---thus before issuing new requests to the DNS, it first consults 
this cache.
Upon receiving a request, the \role{DNS} offloads processing work to one of two 
workers, represented by roles \role{w$_1$} and \role{w$_2$}. 
After retrieving the appropriate address, the worker sends the response to the client.

The reliability configuration of this application is as such:
the client and cache have reliable connections, formally $\rel(\role{c}) = \{\role{cache}\}$ and $\rel(\role{cache}) = \{\role{c}\}$;
the DNS and workers have reliable connections, formally $\rel(\role{DNS}) = \{\role{w$_1$}, \role{w$_2$}\}$ and $\rel(\role{w$_1$}) = \rel(\role{w$_2$}) = \{\role{DNS}\}$;
all other communications are unreliable.

We now present the session types specifying the communication protocol for this distributed application.
We adopt shorthand notion for singleton selections, and omit payload types for simplicity, as with the ping example.

\begin{example}[DNS protocol]\label{ex:DNS}
    \footnotesize{\begin{align*}
        \begin{array}{l}
            \tSession_\role{c} = \tSnd{cache}{req}{}{\tBranchL{
                \begin{array}{l}
                    \tRcv{cache}{ans}{}{\tEnd}, \\
                    \tRcv{cache}{404}{}{\tSnd{DNS}{req}{}{\tBranchL{
                        \begin{array}{l}
                            \tRcv{w$_1$}{ans}{}{\tSnd{cache}{new}{}{\tEnd}}, \\
                            \tRcv{w$_2$}{ans}{}{\tSnd{cache}{new}{}{\tEnd}}, \\
                            \cTimeOut{\tSnd{cache}{ko}{}{\tEnd}}
                        \end{array}
                    }}}
                \end{array}
            }}\\[2em]
            \tSession_\role{cache} = \tBranchL{\tRcv{c}{req}{}{\tSelectL{
                \begin{array}{l}
                    \tSnd{c}{ans}{}{\tEnd}, \\
                    \tSnd{c}{404}{}{\tBranchL{
                        \begin{array}{l}
                            \tRcv{c}{new}{}{\tEnd}, \\
                            \tRcv{c}{ko}{}{\tEnd}
                        \end{array}
                    }}
                \end{array}
            }}}\\[2em]
            \tSession_\role{DNS} = \tBranchL{
                \begin{array}{l}
                    \tRcv{c}{req}{}{\tSelectL{
                        \begin{array}{l}
                            \tSnd{w$_1$}{req}{}{\tSnd{w$_2$}{ko}{}{\tEnd}} \\
                            \tSnd{w$_2$}{req}{}{\tSnd{w$_1$}{ko}{}{\tEnd}}
                        \end{array}
                    }} \\
                    \cTimeOut{\tSnd{w$_1$}{ko}{}{\tSnd{w$_2$}{ko}{}{\tEnd}}}
                \end{array}
            }\\[2em]
            \tSession_\role{w$_i$} = \tBranchL{
                \begin{array}{l}
                    \tRcv{DNS}{req}{}{\tSnd{c}{ans}{}{\tEnd}}, \\
                    \tRcv{DNS}{ko}{}{\tEnd}
                \end{array}
            }
        \end{array}
    \end{align*}}
\end{example}

Our viewpoint-specific definition of reliability is necessary 
to specify the reliable connections with the DNS and workers whilst maintaining unreliable 
connections with the client.
Additionally, the client type $\tSession_\role{c}$ (resp. the DNS type $\tSession_\role{DNS}$) is dependant 
on using undirected branching (resp. selection).
Hence this example is not expressible using previous theory~\cite{DBLP:journals/corr/abs-2207-02015,DBLP:journals/pacmpl/ScalasY19}.

\section{Related Work, Conclusions and Future Work}\label{sec:related}\label{sec:conc}
Modelling failures has become a relevant 
and widely researched topic in recent years.
We elaborate on how our generic type system and modular language 
differs from, and in some cases may possibly subsume, related work.

Majumdar~\etal~\cite{DBLP:conf/concur/MajumdarMSZ21} introduce undirected branching 
as a means of catering for the non-deterministic \emph{partial} reordering of 
messages that is possible in networks using the Transmission Control Protocol (TCP).
As shown in \cref{sec:network_assumptions}, the modularity of our type system allows \magpi\ to
be adapted to support this network configuration, as well as other settings with lower 
levels of abstraction.

Affine type systems define types that can be used \emph{at most once}.
Affine session types~\cite{DBLP:journals/lmcs/MostrousV18,DBLP:journals/pacmpl/FowlerLMD19,DBLP:journals/mscs/CapecchiGY16}
use affine typing metatheory to allow sessions to be prematurely cancelled in the event of failure.
These works only model application-level failure (using try/catch blocks) and do 
not necessarily describe \emph{how} a failure is handled, but only allow the initial 
protocol to be abandoned if failure occurs.

Viering~\etal~\cite{DBLP:journals/pacmpl/VieringHEZ21} present a MPST theory for 
event-driven distributed systems, where processes are restarted by monitors if 
they crash.
This approach \emph{requires} a centralised reliable node, a notion that is subsumed 
by our \emph{view-point specific} definition of \emph{reliability}, 
\cref{def:reliability}.

Chen~\etal~\cite{DBLP:conf/forte/ChenVBZE16} remove the need for a centralised 
reliable node.
They equip their type system with \emph{synchronisation points} 
capable of detecting and handling failures raised by the nodes that 
experience them.
Similarly, Adameit~\etal~\cite{DBLP:conf/forte/AdameitPN17} consider an environment 
free from a centralised reliable node where unstable \emph{links} between participants
can fail.
They introduce the concept of \emph{optional blocks}, allowing \emph{default values} 
to substitute data not received due to communication failure.
Viering~\etal~\cite{DBLP:conf/esop/VieringCEHZ18}, motivated by consensus algorithms, 
delegate a group of processes as a permanently available recovery system capable of 
monitoring processes and informing them of failures.
Thus, they no longer rely on \emph{one} centralised robust node, but instead assume 
that at least some of the processes that make up the coordinator are alive at any 
given time.
The drawback in these approaches is their reliance on coordination to 
handle faults.
This may not be suitable with certain network configurations and failure-models.
Since our type system handles failure through low-level techniques, it remains agnostic 
to the types of failures, and is suitable for any non-Byzantine network 
configuration.

Recent work by Peters~\etal~\cite{DBLP:conf/forte/PetersNW22} extends global type theory 
with \emph{failure annotations}---marking the communications which are 
susceptible to failures and the kind of failure (specifically either 
process crashes or message loss).
They handle failure by defining \emph{default values and branches}.
Since the theory is an extension of global types, it suffers from the 
same problems that are addressed through generalised session types. 
Additionally, the work is not agnostic to failure-models, and thus it is uncertain if
the theory is generic enough to model failures other than the two they consider.

Most similar to \magpi\ is work by 
Barwell~\etal~\cite{DBLP:journals/corr/abs-2207-02015},
where generalised session type theory is extended to reason about crash-stop 
failures.
They reserve the \msgLabel{crash} message label, which can be used in receive
branches to detect node failure and specify failure-handling subprotocols.
In line with our research, their type system is generic, thus improving its 
expressiveness. 
However, unlike \magpi, their theory is not asynchronous, does not support undirected 
branching/selection, and assumes crash-stops to be the only possible faults---we address and capture a range of failures such as crash failures, link failures, message loss, delays and reordering and network partitioning.

Distributed variations of the 
\tpi-calculus~\cite{DBLP:conf/coordination/Amadio97,DBLP:journals/tcs/RielyH01,DBLP:books/el/01/Castellani01,DBLP:books/daglib/0018113}
introduce the concept of process \emph{locations}---representations of 
real-world physical hardware.
Processes are assigned to locations to form a process topology, and said locations 
can be crashed to model failures.
However, none of these calculi model the range of failures that 
are supported by \magpi, nor do they have type systems capable of 
ensuring communication-safe failure handling.

To conclude the paper, we presented \magpi---a Multiparty, Asynchronous and Generalised \tpi-calculus which addresses the widest set of non-Byzantine faults by using timeouts and the most general reliability definition. Our language builds on the \emph{generalised} and \emph{asynchronous} MPST, which is the most flexible for distributed programming. We prove subject reduction and session fidelity; a series of process properties, as well as fault-handling safety and reliability adherence.
As future work, we aim to investigate linear logic for Curry-Howard correspondences in order to understand the foundational and canonical meaning of faults and reliability. We aim to investigate Byzantine faults in combination with the non-Byzantine faults addressed here.
Lastly, we will explore the use of model checking to streamline the verification of process properties.

\clearpage
\bibliographystyle{splncs04}
\bibliography{refs}

\clearpage
\changetext{}{10em}{-5em}{-5em}{}
\appendix

\section{Subject Reduction}
\begin{lemma}\label{lem:safety-preservation}
    $(\bufferTracker;\rel)\textnormal{-}\fSafe(\contG,\contGp)\  \implies\ (\bufferTracker;\rel)\textnormal{-}\fSafe(\contG)$
\end{lemma}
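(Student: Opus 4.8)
Since $\fSafe$ is the \emph{largest} $(\bufferTracker;\rel)$-safety property (\cref{def:safety-prop}), the plan is to exhibit some $(\bufferTracker;\rel)$-safety property that already contains every context of the form $\contG$ occurring in $\fSafe(\contG,\contGp)$, and then invoke maximality. Concretely, I would set
\[
    \varphi'(\contG) \;:\Longleftrightarrow\; \exists\,\contGp :\ \safe{\bufferTracker}{\rel}{\contG,\contGp},
\]
observe that the hypothesis $\safe{\bufferTracker}{\rel}{\contG,\contGp}$ immediately gives $\varphi'(\contG)$ (with witness $\contGp$), and then show that $\varphi'$ (or, to be on the safe side, $\varphi' \cup \fSafe$) satisfies each clause \spR{1}, \spR{2}, \spCom, \spRec, \spRed of \cref{def:safety-prop}. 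By maximality of $\fSafe$ this yields $\varphi' \subseteq \fSafe$, hence $\fSafe(\contG)$.

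For the ``static'' clauses \spR{1}, \spR{2}, \spCom, \spRec the argument is routine. Each of them speaks of a context of a prescribed shape, say $\contG_0,\ \cSessionRole{s}{p}:\tSession$ with a distinguished session-with-role. Given $\varphi'$ of such a context, I unfold it to $\safe{\bufferTracker}{\rel}{\contG_0,\ \cSessionRole{s}{p}:\tSession,\ \contGp}$ and apply the corresponding clause of $\fSafe$, now letting $\contGp$ be absorbed into that clause's universally-quantified context parameter; the conclusion (reliability of the senders, matching payload types, or safety of the unfolding) is exactly what is needed for $\varphi'$. The only bookkeeping is that $\contGp$ must not collide with the distinguished channel, which holds because in these clauses $\cSessionRole{s}{p}$ carries a pure session (resp.\ buffer) type while $(\contG_0,\cSessionRole{s}{p}:\tSession,\contGp)$ is by assumption a well-formed, composable context; and that $\fSafe$ is read up to context congruence, so reordering entries is harmless. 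For \spRec the same witness $\contGp$ is simply reused after the unfolding.

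The main obstacle is the reduction-closure clause \spRed. Here I am given $\safe{\bufferTracker}{\rel}{\contG,\contGp}$ together with $\contG \rightarrow_{(\bufferTracker;\rel)} \contGpp$, and must produce a witness for $\varphi'(\contGpp)$. The plan is to \emph{lift} the reduction to the composed context, i.e.\ to establish $\contG,\contGp \rightarrow_{(\bufferTracker;\rel)} \contGpp,\contGp$ up to $\equiv$, and then conclude $\safe{\bufferTracker}{\rel}{\contGpp,\contGp}$ by \spRed, so that $\contGp$ is again the desired witness. The lifting uses rule \crrCong (\cref{fig:ctx-reduction}) plus commutativity of context composition. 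The delicate case is when $\contG \rightarrow_{(\bufferTracker;\rel)} \contGpp$ is an instance of \crrSnd{1} (or \crrSnd{2}) acting on a channel $\cSessionRole{s}{p}$ for which $\contGp$ already supplies a buffer type: in $\contG$ alone the channel carries a pure session type and \crrSnd{1} fires, whereas in $\contG,\contGp$ it carries a tuple and \crrSnd{2} fires. I would reconcile the two using the identity $\cSessionRole{s}{p}:\tTuple{\tQueue}{\tSession} \equiv \cSessionRole{s}{p}:\tQueue,\ \cSessionRole{s}{p}:\tSession$ noted after \cref{def:ctx-syntax} --- pulling the pure parts out via \crrCong and re-merging the two buffer types, whose order is immaterial since $\tQueue_1\cBufferCons\tQueue_2 \equiv \tQueue_2\cBufferCons\tQueue_1$ (\cref{fig:type-cong}) --- so that the lifted reduct is congruent both to $\contGpp,\contGp$ and to the \crrSnd{2}-reduct of $\contG,\contGp$. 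The case \crrCom is analogous (both affected entries become pure again, so composability with $\contGp$ is preserved), while \crrTO and \crrRec touch a single entry through a session type and lift directly; $\bufferTracker$ is untouched throughout, so the side condition $s\in\bufferTracker$ is trivially maintained. With this lifting established, \spRed follows, completing the proof that $\varphi'$ is a $(\bufferTracker;\rel)$-safety property and hence contained in $\fSafe$.
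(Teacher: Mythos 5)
Your argument is correct in outline but takes a genuinely different route from the paper. The paper proves the lemma by contradiction: if $\contG$ were not safe, it would reach via $\rightarrow^{*}_{(\bufferTracker;\rel)}$ a context $\contGpp$ violating one of \spR{1}, \spR{2}, \spCom; lifting that reduction sequence with \crrCong\ gives $\contG,\contGp \rightarrow^{*}_{(\bufferTracker;\rel)} \contGpp,\contGp$, whose unsafety contradicts \spRed\ for the composite. That version is shorter but leans on an unstated characterisation of the largest safety property (failure of \fSafe\ is witnessed by a \emph{reachable} violation of a static clause). You instead exhibit the candidate property $\varphi'(\contG)$ iff $\exists\contGp:\safe{\bufferTracker}{\rel}{\contG,\contGp}$, check the clauses of \cref{def:safety-prop}, and invoke maximality of \fSafe; this is the standard coinductive up-to argument, avoids that characterisation, and makes explicit that the entire lemma rests on the same key step the paper uses, namely lifting reductions of $\contG$ through composition by \crrCong.

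Two bookkeeping points. In the static clauses, your claim that $\contGp$ cannot collide with the distinguished channel is not right: composability precisely permits $\contGp$ to map $\cSessionRole{s}{p}$ to a buffer type, so the composite holds a tuple there; what actually closes the case is your fallback of reading \fSafe\ up to the congruence $\cSessionRole{s}{p}:\tTuple{\tQueue}{\tSession} \equiv \cSessionRole{s}{p}:\tQueue,\,\cSessionRole{s}{p}:\tSession$. More seriously, in the \spRed\ case your reconciliation of \crrSnd{1} with \crrSnd{2} is looser than it reads: under the paper's strict composition (queue versus session on shared channels), ``$\contGpp,\contGp$'' is not even defined when $\contGp$ buffers the reduced channel (queue against tuple), and the \crrSnd{2}-reduct of $\contG,\contGp$ carries the concatenated queue, so no literal choice of witness yields $\varphi'(\contGpp)$; one must either drop the $\cSessionRole{s}{p}$ entry from the witness and argue that safety survives removing the extra queue prefix, or build buffer-weakening into the up-to property. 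The paper's own one-line ``by \crrCong, $\contG,\contGp \rightarrow^{*}_{(\bufferTracker;\rel)} \contGpp,\contGp$'' glosses exactly the same point, so your proof is no less rigorous than the published one --- but this is the step that needs extra care in a fully formal development.
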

\begin{proof}
    Proof by contradiction. \\
    Assume \contG\ is \emph{not} $(\bufferTracker;\rel)$-\textbf{safe}. \\
    Then, $\contG \longrightarrow^*_{(\bufferTracker;\rel)} \contGpp$ \st\ \contGpp\ \emph{violates} 
    $(\bufferTracker;\rel)$-$\fSafe$ (by violating any of \spR{1}, \spR{2}, or \spCom). \\
    But, by \crrCong, $\ \contG,\contGp \longrightarrow^*_{(\bufferTracker;\rel)} \contGpp,\contGp$; 
    where $\contGpp,\contGp$ is \emph{not} $(\bufferTracker;\rel)$-\textbf{safe} (since \contGpp\ violates it). \\
    $\therefore\ \contG,\contGp$ violates \spRed, thus $\contG,\contGp$ is \emph{not} $(\bufferTracker;\rel)$-\textbf{safe}; \emph{contradiction}. \\
    Hence, $(\bufferTracker;\rel)$-$\fSafe(\contG)$. \qed
\end{proof}

\begin{lemma}\label{lem:safety-composition}
    $(\bufferTracker_1;\rel)\textnormal{-}\fSafe(\contGb{1})\ \text{and}\ (\bufferTracker_2;\rel)\textnormal{-}\fSafe(\contGb{2})\ \text{and}\ \bufferTracker_1 \cap \bufferTracker_2 = \emptyset  \implies (\bufferTracker_1 \cup \bufferTracker_2;\rel)\textnormal{-}\fSafe(\contGb{1},\contGb{2})$
\end{lemma}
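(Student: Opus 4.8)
The plan is to argue as in the proof of \cref{lem:safety-preservation}, but in the composing direction. Recall that $\safe{\bufferTracker}{\rel}{\contG}$ is equivalent to the statement that every context reachable from $\contG$ along $\rightarrow_{(\bufferTracker;\rel)}$ (which already absorbs recursion unfolding via \crrRec) satisfies the three ``basic'' clauses \spR{1}, \spR{2} and \spCom. So I would first establish a \emph{decomposition invariant}: every context reachable from $\contGb{1},\contGb{2}$ along $\rightarrow_{(\bufferTracker_1\cup\bufferTracker_2;\rel)}$ can be written $\contGbp{1},\contGbp{2}$ with $\contGbp{i}$ reachable from $\contGb{i}$ along $\rightarrow_{(\bufferTracker_i;\rel)}$. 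This goes by induction on reduction length: by \cref{fig:ctx-reduction}, the rules \crrSnd{1}, \crrSnd{2}, \crrTO and \crrRec each rewrite only entries of a single session-with-role channel, so (after \crrCong isolates the redex) the step acts wholly inside one $\contGbp{i}$, and its side condition $s\in\bufferTracker$ is witnessed by that same component because $\bufferTracker_1\cap\bufferTracker_2=\emptyset$; the only rule that can touch entries of two channels at once is \crrCom, handled below.

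Granting the invariant, the conclusion is routine for two of the three clauses: if a reachable $\contGbp{1},\contGbp{2}$ violated \spR{1} or \spR{2}, the offending branch type is a session-type component, so --- using $\contV{\cSessionRole{s}{p}}{\tTuple{\tQueue}{\tSession}}\equiv\contV{\cSessionRole{s}{p}}{\tQueue},\contV{\cSessionRole{s}{p}}{\tSession}$ to split session-buffer pairs --- it lies, up to $\equiv$, inside exactly one $\contGbp{i}$, making $\contGbp{i}$ itself violate the clause and contradicting $\safe{\bufferTracker_i}{\rel}{\contGb{i}}$; the same works for a \spCom violation in which the branch entry $\cSessionRole{s}{p}:\tSession$ and the buffer entry $\cSessionRole{s}{q}:\tBuffer{p}{m}{\tType}{\tQueue^\prime}$ live in the \emph{same} component. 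The remaining possibilities --- a \spCom violation, and a \crrCom step, in which that branch entry and buffer entry sit in \emph{different} components --- coincide, and this is the crux.

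I expect this cross-component case to be the main obstacle, because the individual safety of $\contGbp{1}$ and $\contGbp{2}$ says nothing relating the payload type expected by \role{p}'s branch to the payload type of the message queued towards \role{p} in \role{q}'s buffer. To close it I would exploit the buffer-tracker discipline: buffer-type material for a session $s$ can only be introduced under a tracker that contains $s$ (rules \trEmpty, \trBuffer{1}, \trBuffer{2}), so by $\bufferTracker_1\cap\bufferTracker_2=\emptyset$ all in-transit messages of session $s$ already sit in a single component --- and what then has to be shown is that the reception type meant to consume such a message sits in that same component, so that a matching branch/buffer pair for $s$ is never split and neither a cross-component \crrCom nor a cross-component \spCom pattern can arise. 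Establishing that co-location is, I think, where essentially all the work lives; if a direct argument proves too delicate, the robust fallback is to note that this lemma is only ever applied to pairs $\contGb{1},\contGb{2}$ obtained by splitting (via \trPar) a context that is itself $(\bufferTracker_1\cup\bufferTracker_2;\rel)$-safe --- the situation created by \trNew --- for which the conclusion holds trivially.
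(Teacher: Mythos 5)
Your diagnosis is accurate, but the gap you flag is real and your main line cannot close it: the cross-component \spCom/\crrCom\ case is not provable from the stated hypotheses, because the lemma carries no typing judgement, so the buffer-tracker discipline of \trEmpty, \trBuffer{1}, \trBuffer{2} that you want to invoke is simply not in scope --- nothing in $(\bufferTracker_i;\rel)$-$\fSafe(\contGb{i})$ constrains which sessions the entries of $\contGb{i}$ may mention, nor where in-transit messages sit. Concretely, take $\contGb{1}=\cSessionRole{s}{p}:\tBranch{\tRcv{q}{m}{\tInt}{\tEnd}}$ with $\role{q}\in\rel(\role{p})$, and $\contGb{2}=\cSessionRole{s}{q}:\tBuffer{p}{m}{\tBool}{\emptyBuffer}$, for any disjoint $\bufferTracker_1,\bufferTracker_2$: each component is safe on its own (neither can reduce, and the \spCom\ pattern needs both the branch and the buffer entry, so every clause is vacuous or satisfied), yet their composition violates \spCom. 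So the co-location property on which your argument hinges cannot be derived from the hypotheses as given; the same root cause also breaks the side-condition step of your decomposition invariant, since a redex channel lying in $\contGbp{1}$ may concern a session of $\bufferTracker_2$, in which case the step is not mirrored by a $\rightarrow_{(\bufferTracker_1;\rel)}$ step of that component.

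For comparison, the paper's own proof is a few informal lines that dispose of exactly this point by assertion: it claims that communication in $\contGb{i}$ is ``contained under sessions in $\bufferTracker_i$'' and concludes from disjointness that no new \spCom-violating communication arises, treating \spR{1}/\spR{2} (branch structure unchanged by composition) and \spRec/\spRed\ (closure via \crrRec/\crrCong) much as you do. In other words, you have correctly located the load-bearing step that the paper leaves implicit, but neither your sketch nor the paper actually supplies it, and the example above shows it cannot be supplied without additional hypotheses tying the sessions occurring in each $\contGb{i}$ to its tracker or to a typed configuration. Your fallback is, in substance, the right repair for how the lemma is used: in the \rCtx\ case of \cref{thm:sr} the composite $\contGb{Q},\contGb{\ctx}$ is already known to be $(\bufferTracker;\rel)$-safe, its reduct $\contGbp{Q},\contGb{\ctx}$ is reached from it by \crrCong\ (the tracker side condition being monotone in $\bufferTracker$), and safety of the recomposition then follows from \spRed\ alone --- but that proves a weaker, application-specific statement, not \cref{lem:safety-composition} as stated.
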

\begin{proof}
    Communication in \contGb{1} is contained under sessions in $\bufferTracker_1$.\\
    Likewise, communication in \contGb{2} is contained under sessions in $\bufferTracker_2$. \\
    $\therefore\ \contGb{1},\contGb{2}$ does not introduce any new communication that could 
    violate \spCom, since their sessions are disjoint ($\bufferTracker_1 \cap \bufferTracker_2 = \emptyset$). \\
    Additionally, context composition does not alter the structure of branches, 
    thus $\contGb{1},\contGb{2}$ does not violate \spR{1} or \spR{2}. \\
    \spRec\ and \spRed\ are preserved through \crrRec\ and \crrCong. \\
    Hence, $(\bufferTracker_1 \cup \bufferTracker_2;\rel)\textnormal{-}\fSafe(\contGb{1},\contGb{2})$.
    \qed
\end{proof}

\begin{lemma}[Message Garbage Collection] \label{lem:type-drop}
    If $\contO\cdot\contG\ \vdash_\bufferTracker\ \cBuffer{s}{h\cBufferCons\sigma}$,
    then $\contO\cdot\contG\ \vdash_\bufferTracker\ \cBuffer{s}{\sigma}$, where 
    $h$ is some \textnormal{\cBufferEntry{p}{q}{m}{w}}.
\end{lemma}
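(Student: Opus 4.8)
\emph{Proof plan.} I would argue by induction on the derivation of $\contO\cdot\contG\ \vdash_\bufferTracker\ \cBuffer{s}{h\cBufferCons\sigma}$, writing $h = \cBufferEntry{p}{q}{m}{w}$. Since $h\cBufferCons\sigma$ is a non-empty buffer, the last rule applied must be one of \trBuffer{1}, \trBuffer{2}, or \trBuffer{w}: rule \trEmpty\ requires an empty buffer; the process rules \trInaction, \trPar, \trNew, \dots\ have a different conclusion shape; and \trLift\ is excluded because the buffer-less judgement $\vdash$ has no rule producing a typing of a buffer. (In fact $\bufferTracker = \{s\}$ throughout, though this is not needed.)

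Cases \trBuffer{1} and \trBuffer{2} are the base cases: inverting the rule already yields a sub-derivation typing the tail buffer, so all that is left is to re-introduce, by weakening, the context that carried the dropped head $h$. For \trBuffer{1} we have $\contG = \contGb{w},\contGp,\cSessionRole{s}{p}:\tBuffer{q}{m}{\tType}{\emptyBuffer}$ with $\contGb{w}\ \vdash\ w:\tType$ and $\contO\cdot\contGp\ \vdash_{\{s\}}\ \cBuffer{s}{\sigma}$; I apply \trBuffer{w} with $\contGb{1} = \contGp$, $\contGb{0} = \cSessionRole{s}{p}:\tBuffer{q}{m}{\tType}{\emptyBuffer}$, $\contGb{2} = \contGb{w}$, so that $(\contGb{0}\msgInsert\contGb{1}),\contGb{2}$ re-assembles $\contG$ (choosing the clause of $\msgInsert$ according to whether $\cSessionRole{s}{p}$ already occurs with a buffer type in $\contGp$). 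Case \trBuffer{2} is identical, except the first clause of $\msgInsert$ re-attaches the dropped message type in front of the surviving buffer type of $\cSessionRole{s}{p}$. In both cases the obligation to discharge is $\fPayloads(\contGb{0},\contGb{2}) = \fPayloads(\cSessionRole{s}{p}:\tBuffer{q}{m}{\tType}{\emptyBuffer},\ \contGb{w})$; inspecting \trVar/\trVal, either $\contGb{w} = \emptyset$ with $\tType$ basic, or $\contGb{w} = \cSessionRole{s'}{p'}:\tType$ with $\tType$ a session type, and in each case $\fPayloads$ holds by its basic- resp.\ channel-payload clause --- i.e.\ the buffer type of a dropped message is collectible garbage.

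The only inductive case is \trBuffer{w}, where $\contG = (\contGb{0}\msgInsert\contGb{1}),\contGb{2}$, $\contO\cdot\contGb{1}\ \vdash_\bufferTracker\ \cBuffer{s}{h\cBufferCons\sigma}$, and $\fPayloads(\contGb{0},\contGb{2})$. The induction hypothesis on the (strictly smaller) premise gives $\contO\cdot\contGb{1}\ \vdash_\bufferTracker\ \cBuffer{s}{\sigma}$, and re-applying \trBuffer{w} with the same $\contGb{0},\contGb{2}$ --- which still compose, since $\msgInsert$ depends only on the contexts and not on what they type --- yields $\contO\cdot\contG\ \vdash_\bufferTracker\ \cBuffer{s}{\sigma}$, as required.

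The main obstacle I anticipate is purely bookkeeping around \trBuffer{w} and the partial function $\msgInsert$: in the base cases one must check that the chosen $(\contGb{0}\msgInsert\contGb{1}),\contGb{2}$ is indeed $\contG$ (up to $\equiv$ and the two clauses of $\msgInsert$) and that this composition is defined --- the latter being immediate because $\contG$ was already a legal composition in the given derivation. No appeal to $\fSafe$ is needed; only typability is at stake.
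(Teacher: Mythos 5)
Your proof is correct, and in substance it is the paper's own argument: invert the buffer typing, observe that the context fragment carrying the dropped entry---$\cSessionRole{s}{p}:\tBuffer{q}{m}{\tType}{\emptyBuffer}$ together with the payload context---satisfies \fPayloads, and re-assemble the original context around the derivation for the tail via \msgInsert\ and \trBuffer{w}. The difference is only in how the induction is organised: the paper inducts on the structure of $\sigma$, treating the empty tail as a separate base case closed by \trEmpty\ (after establishing $\fPayloads(\contG)$ outright) and the non-empty tail by \trBuffer{w}, and in both cases it only lists \trBuffer{1}/\trBuffer{2} as possible last rules; you induct on the derivation, which handles empty and non-empty tails uniformly through \trBuffer{w} and, more importantly, explicitly covers the case where the last rule is \trBuffer{w} itself---a case the paper's subcase analysis does not spell out. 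So your organisation is, if anything, slightly more complete. The only cosmetic imprecision is the hedge about which clause of \msgInsert\ applies in the \trBuffer{1} base case: definedness of the composition in the rule's conclusion already forces $\cSessionRole{s}{p}$ to carry no buffer type in $\contGp$, so it is always the second clause there; this does not affect correctness.
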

\begin{proof} \setcounter{equation}{0}
    Proof by induction on the structure of $\sigma$.
    \begin{enumerate}[label=\textbf{(BC)}]
        \item Base case: Let $\sigma = \emptyBuffer$ \label[case]{case:mgc-bc}
        \begin{flalign}
\label{eq:lem-mgc-1}
            & \contO\cdot\contG\ \vdash_\bufferTracker\ \cBuffer{s}{h\cBufferCons\emptyBuffer}
                & \reason{by the implication hypothesis} \\
\label{eq:lem-mgc-2}
            & \contG = \contGb{0}, \contGp, \cSessionRole{s}{p}:\tBuffer{q}{m}{\tType}{\tQueue}
                & \reason{by (\ref{eq:lem-mgc-1}) and inv. of \trBuffer{1}/\trBuffer{2}}
        \end{flalign}
        We now perform a sub-proof by induction on the derivation of (\ref{eq:lem-mgc-1}).
        \begin{enumerate}[label=\textbf{(SC\arabic*)}, itemsep=1em]
            \item Subcase \trBuffer{1} \label[subcase]{subcase:mgc-sc1}
            \begin{flalign}
\label{eq:lem-mgc-3}
                & \contO\cdot\contGp\ \vdash_\bufferTracker\ \cBuffer{s}{\emptyBuffer}\ \ \text{and}\ \ 
                  \contGb{0}\ \vdash\ w:\tType \ \ \text{and}\ \ 
                  \tQueue = \emptyBuffer
                    & \reason{by inv. of \trBuffer{1}} \\
\label{eq:lem-mgc-4}
                & \contGb{0} = w:\tType
                    & \reason{by (\ref{eq:lem-mgc-3}), \trVar} \\
\label{eq:lem-mgc-5}
                & \fPayloads(\contGp) 
                    & \reason{by (\ref{eq:lem-mgc-3}), \trEmpty} 
            \end{flalign}
            By (\ref{eq:lem-mgc-2}) and ctx comp., we know $\cSessionRole{s}{p} \not\in \contGp$.
            Hence:
            \begin{flalign}
\label{eq:lem-mgc-6}
                & \fPayloads(\contGp, \cSessionRole{s}{p}:\emptyBuffer) 
                    & \reason{by (\ref{eq:lem-mgc-5}) and \cref{fig:payloads}} \\
\label{eq:lem-mgc-7}
                & \fPayloads(\contG)
                    & \reason{by (\ref{eq:lem-mgc-2}), (\ref{eq:lem-mgc-4}), (\ref{eq:lem-mgc-6}), and \cref{fig:payloads}} \\
\label{eq:lem-mgc-8}
                & \contO\cdot\contG\ \vdash_\bufferTracker\ \cBuffer{s}{\emptyBuffer}
                    & \reason{by (\ref{eq:lem-mgc-7}) and \trEmpty}
            \end{flalign}
            $\therefore$ by (\ref{eq:lem-mgc-8}), \cref{lem:type-drop} holds for \ref{case:mgc-bc}:\ref{subcase:mgc-sc1}.

            \item Subcase \trBuffer{2} \label[subcase]{subcase:mgc-sc2}
            \begin{flalign}
\label{eq:lem-mgc-9}
                & \contO\cdot\contGp,\cSessionRole{s}{p}:\tQueue\ \vdash_\bufferTracker\ \cBuffer{s}{\emptyBuffer}\ \ \text{and}\ \ 
                    \contGb{0}\ \vdash\ w:\tType
                    & \reason{by inv. of \trBuffer{1}} \\
\label{eq:lem-mgc-10}
                & \contGb{0} = w:\tType
                    & \reason{by (\ref{eq:lem-mgc-9}) and \trVar} \\
\label{eq:lem-mgc-11}
                & \fPayloads(\contGp,\cSessionRole{s}{p}:\tQueue) 
                    & \reason{by (\ref{eq:lem-mgc-9}) and \trEmpty} 
            \end{flalign}
            \begin{flalign}
\label{eq:lem-mgc-12}
                & \fPayloads(\contG)
                    & \reason{by (\ref{eq:lem-mgc-2}), (\ref{eq:lem-mgc-10}), (\ref{eq:lem-mgc-11}), \cref{fig:payloads}} \\
\label{eq:lem-mgc-13}
                & \contO\cdot\contG\ \vdash_\bufferTracker\ \cBuffer{s}{\emptyBuffer}
                    & \reason{by (\ref{eq:lem-mgc-12}) and \trEmpty}
            \end{flalign}
            $\therefore$ by (\ref{eq:lem-mgc-13}), \cref{lem:type-drop} holds for \ref{case:mgc-bc}:\ref{subcase:mgc-sc2}.
        \end{enumerate}
        Hence, since \cref{lem:type-drop} holds for all subcases of the base case, then it holds for \ref{case:mgc-bc}.
    \end{enumerate}
    \begin{enumerate}[label=\textbf{(IC)}, itemsep=1em]
        \item Inductive case: Let $\sigma = h^\prime\cBufferCons\sigma^\prime$ \label[case]{case:mgc-ic}
        %
        %
        %
        \begin{flalign}
\label{eq:lem-mgc-14}
            & \contO\cdot\contG\ \vdash_\bufferTracker\ \cBuffer{s}{\cBufferEntry{p}{q}{m}{w}\cBufferCons h^\prime\cBufferCons\sigma^\prime} 
                & \reason{implication hypothesis}
        \end{flalign}
        We now perform a sub-proof by induction on the derivation of (\ref{eq:lem-mgc-14})
        \begin{enumerate}[label=\textbf{(SC\arabic*)}, itemsep=1em]
            \item Subcase \trBuffer{1} \label[subcase]{subcase:mgc-ic-sc1}
            \begin{subequations}\addtocounter{equation}{-1}
            \begin{flalign}
\label{eq:lem-mgc-15}
                & \contG = \contGb{0},\contGb{1},\contGb{2},\ \ \st
                    & \reason{by (\ref{eq:lem-mgc-14}) and inv. of \trBuffer{1}} \\
\label{eq:lem-mgc-15a}
                & \contO\cdot\contGb{1}\ \vdash_\bufferTracker\ \cBuffer{s}{h^\prime\cBufferCons\sigma^\prime}  & \\
\label{eq:lem-mgc-15b}
                & \contGb{2}\ \vdash\ w:\tType & \\
\label{eq:lem-mgc-15c}
                & \contGb{0} = \cSessionRole{s}{p}:\tBuffer{q}{m}{\tType}{\emptyBuffer} &
            \end{flalign}
            \end{subequations}
            \begin{flalign}
\label{eq:lem-mgc-16}
                & \contG = (\contGb{0}\msgInsert\contGb{1}),\contGb{2}
                    & \reason{by (\ref{eq:lem-mgc-15}), (\ref{eq:lem-mgc-15a}), (\ref{eq:lem-mgc-15c}), and def. of \msgInsert\ in \cref{fig:msg-insert}} \\
\label{eq:lem-mgc-17}
                & \fPayloads(\contGb{0},\contGb{2})
                    & \reason{by (\ref{eq:lem-mgc-15b}), (\ref{eq:lem-mgc-15c}), and def. of \fPayloads\ in \cref{fig:payloads}} \\
\label{eq:lem-mgc-18}
                & \contO\cdot\contG\ \vdash_\bufferTracker\ \cBuffer{s}{h^\prime\cBufferCons\sigma^\prime}
                    & \reason{by (\ref{eq:lem-mgc-16}), (\ref{eq:lem-mgc-15a}), (\ref{eq:lem-mgc-17}), and \trBuffer{w}}
            \end{flalign}
            $\therefore$ by (\ref{eq:lem-mgc-17}), \cref{lem:type-drop} holds for \ref{case:mgc-ic}:\ref{subcase:mgc-ic-sc1}.

            \item Subcase \trBuffer{2} \label[subcase]{subcase:mgc-ic-sc2}
            \begin{subequations}\addtocounter{equation}{-1}
            \begin{flalign}
\label{eq:lem-mgc-19}
                & \contG = \contGb{0},\contGb{1},\contGb{2},\ \ \st
                    & \reason{by (\ref{eq:lem-mgc-14}) and inv. of \trBuffer{2}} \\
\label{eq:lem-mgc-19a}
                & \contO\cdot\contGb{1},\cSessionRole{s}{p}:\tQueue\ \vdash_\bufferTracker\ \cBuffer{s}{h^\prime\cBufferCons\sigma^\prime} & \\
\label{eq:lem-mgc-19b}
                & \contGb{2}\ \vdash\ w:\tType & \\
\label{eq:lem-mgc-19c}
                & \contGb{0} = \cSessionRole{s}{p}:\tBuffer{q}{m}{\tType}{\tQueue} & 
            \end{flalign}
            \end{subequations}
            \begin{flalign}
\label{eq:lem-mgc-20}
                & \contG = (\cSessionRole{s}{p}:\tBuffer{q}{m}{\tType}{\emptyBuffer}\msgInsert\contGb{1},\cSessionRole{s}{p}:\tQueue),\contGb{2}
                    & \reason{by (\ref{eq:lem-mgc-19}), (\ref{eq:lem-mgc-19a}), (\ref{eq:lem-mgc-19c}), and def. of \msgInsert\ in \cref{fig:msg-insert}} \\
\label{eq:lem-mgc-21}
                & \fPayloads(\cSessionRole{s}{p}:\tBuffer{q}{m}{\tType}{\emptyBuffer},\contGb{2})
                    & \reason{by (\ref{eq:lem-mgc-19b}), and def. of \fPayloads\ in \cref{fig:payloads}} \\
\label{eq:lem-mgc-22}
                & \contO\cdot\contG\ \vdash_\bufferTracker\ \cBuffer{s}{h^\prime\cBufferCons\sigma^\prime}
                    & \reason{by (\ref{eq:lem-mgc-20}), (\ref{eq:lem-mgc-19a}), (\ref{eq:lem-mgc-21}), and \trBuffer{w}}
            \end{flalign}
            $\therefore$ by (\ref{eq:lem-mgc-22}), \cref{lem:type-drop} holds for \ref{case:mgc-ic}:\ref{subcase:mgc-ic-sc2}.
            
        \end{enumerate}
        Hence, since \cref{lem:type-drop} holds for all subcases of the inductive case, then it holds for \ref{case:mgc-ic}. \qed
    \end{enumerate}
\end{proof}

\begin{customthm}{1}[Subject Reduction]
If a process is typed under a safe context, and this process reduces, then 
the context reduces (in 0 or 1 steps) to a safe context which types 
the new process. 
Formally:
\textnormal{\begin{multline*}
    \contO\cdot\contG \vdash_\bufferTracker P\ \land\ 
    \safe{\bufferTracker}{\rel}{\contG}\ \land\
    P \rightarrow_\rel P^\prime\ \implies\\
    \exists \contGp: \contG \rightarrow_{(\bufferTracker;\rel)}^{\{0,1\}} \contGp\ \land\
    \safe{\bufferTracker}{\rel}{\contGp}\ \land\
    \contO\cdot\contGp \vdash_\bufferTracker P^\prime
\end{multline*}}
where $\rightarrow^{\{0,1\}}$ refers to a reduction of 0 or 1 steps.
\end{customthm}

\begin{proof}
We prove subject reduction by induction on the derivation of $P\rightarrow_\rel P^\prime$.
We may assume the following:
\begin{enumerate}[label=\textbf{(A\arabic*)}]
    \item $\contO\cdot\contG \vdash_\bufferTracker P$ \label{ass:1}
    \item $\safe{\bufferTracker}{\rel}{\contG}$ \label{ass:2}
    \item $P \rightarrow_\rel P^\prime$ \label{ass:3}
\end{enumerate}
With these assumptions, we prove \cref{thm:sr} for all cases of $P\rightarrow_\rel P^\prime$:
\begin{enumerate}[label=\textbf{(C\arabic*)}, itemsep=1em]
\item Case \rSelection \label[case]{case:sr-selection}
    \begin{flalign}
        &P = \cPar{\cSnd{\cSessionRole{s}{p}}{q}{m}{w}{Q}\ }{\ \cBuffer{s\!}{\!\sigma}}  
            &\reason{by \ref{ass:3} and inv. of \rSelection} \\
        &P^\prime = \cPar{Q\ }{\ \cBuffer{s\!}{\!\cBufferEntry{p}{q}{m}{w}\cBufferCons\sigma\cBufferCons\emptyBuffer}}
            &\reason{by \ref{ass:3} and inv. of \rSelection}
    \end{flalign}
    \begin{subequations}\addtocounter{equation}{-1}
    \begin{flalign}
\label{eq:sr-sel-1}
        &\contG = \contGb{\selection},\contGb{s}\ \text{and}\ \bufferTracker = \bufferTracker_1 \cup \bufferTracker_2\ \ \st
            &\reason{by \ref{ass:1} and inv. of \trPar} \\
\label{eq:sr-sel-2}
        &\contO\cdot\contGb{\selection}\!\ \vdash_{\bufferTracker_1}\ \cSnd{\cSessionRole{s}{p}}{q}{m}{w}{Q} & \\
\label{eq:sr-sel-buffer}
        &\contO\cdot\contGb{s}\ \vdash_{\bufferTracker_2}\ \cBuffer{s\!}{\!\sigma} & \\
        &\bufferTracker_1 \cap \bufferTracker_2 = \emptyset &
    \end{flalign}
    \end{subequations}
    The only rule that can type the process in (\ref{eq:sr-sel-2}) is \trSelection. 
    Hence, we can conclude that:
    \begin{flalign}
\label{eq:sr-sel-3}
        &\bufferTracker_1 = \emptyset\ \ \text{and}\ \ 
         \contO\cdot\contGb{\selection} \vdash \cSnd{\cSessionRole{s}{p}}{q}{m}{w}{Q}
            &\reason{by (\ref{eq:sr-sel-2}), \trSelection, and \trLift}
    \end{flalign}
    \begin{subequations}\addtocounter{equation}{-1}
    \begin{flalign}
        &\contGb{\selection} = \contGb{0},\contGb{1},\contGb{2}\ \ \st &\reason{by (\ref{eq:sr-sel-3}) and inv. of \trSelection} \\
\label{eq:sr-sel-4}
        &\contGb{1}\ \vdash\ \cSessionRole{s}{p}:\tSelectQuant{\tSnd{q$_i$}{m$_i$}{\tType_i}{\tSession_i}} & \\
\label{eq:sr-sel-5b}
        &\contGb{2}\ \vdash\ w:\tType & \\
\label{eq:sr-sel-5c}
        &\contO\cdot\contGb{0}, \cSessionRole{s}{p}: \tSession\ \vdash\ Q & \\
\label{eq:sr-sel-kinI}
        &\exists k \in I : \role{q$_k$} = \role{q},\ \msgLabel{m$_k$} = \msgLabel{m},\ \tType_k = \tType,\ \tSession_k = \tSession &
    \end{flalign}
    \end{subequations}
    \begin{flalign}
        &\contGb{1} = \cSessionRole{s}{p}:\tSelectQuant{\tSnd{q$_i$}{m$_i$}{\tType_i}{\tSession_i}} 
            & \reason{by (\ref{eq:sr-sel-4}) and \trVal}
    \end{flalign}
    \begin{flalign}
\label{eq:sr-sel-5}
        &\bufferTracker_2 = \{s\} & \reason{by induction on the derivation of (\ref{eq:sr-sel-buffer})} \\
\label{eq:sr-sel-s-in-buff}
        &\bufferTracker\  = \{s\},\ \ \therefore\ s \in \bufferTracker & \reason{by (\ref{eq:sr-sel-1}), (\ref{eq:sr-sel-3}), and (\ref{eq:sr-sel-5})}
    \end{flalign}
    We consider two subcases: when \cSessionRole{s}{p} is contained in \contGb{s}, and when it is not.
    \begin{enumerate}[label=\textbf{(SC\arabic*)}, itemsep=1em]
        \item Subcase: Let $\cSessionRole{s}{p} \in \contGb{s}$ \label[subcase]{subcase:sr-sel-in}
            \begin{subequations}\addtocounter{equation}{-1}
            \begin{flalign}
\label{eq:sr-sel-subin-red}
                &\contG \xrightarrow{\contSnd{s}{p}{q}{m}{\tType}}_\bufferTracker \contGp = \contGb{0},\contGb{2},\contGbp{1},\contGbp{s}\ \ \st
                    &\reason{by (\ref{eq:sr-sel-kinI}), (\ref{eq:sr-sel-s-in-buff}), and \crrSnd{2}} \\
                &\contGbp{1} = \cSessionRole{s}{p}:\tTuple{\tQueue\cBufferCons\tBuffer{q}{m}{\tType}{\emptyBuffer}}{\tSession} & \\
                &\contGbp{s} = \contGb{s}\setminus\{\cSessionRole{s}{p}:\tQueue\} & 
            \end{flalign}
            \end{subequations}
            \begin{subequations}\addtocounter{equation}{-1}
            \begin{flalign}
                &\contGbp{1} = \contGbp{1M},\contGbp{1s}\ \ \st
                    &\reason{by ctx comp. and cong. (\cref{def:ctx-syntax})} \\
                &\contGbp{1M} = \cSessionRole{s}{p}:\tBuffer{q}{m}{\tType}{\tQueue\cBufferCons\emptyBuffer} & \\
\label{eq:sr-sel-10b}
                &\contGbp{1s}\ = \cSessionRole{s}{p}:\tSession & 
            \end{flalign}
            \end{subequations}
            \begin{flalign}
\label{eq:sr-sel-11}
                &\contO\cdot\contGb{0},\contGbp{1s}\ \vdash\ Q & \reason{by (\ref{eq:sr-sel-5c}) and (\ref{eq:sr-sel-10b})} \\
\label{eq:sr-sel-12}
                &\contO\cdot\contGb{2},\contGbp{s},\contGbp{1M}\ \vdash_\bufferTracker\ \cBuffer{s}{\cBufferEntry{p}{q}{m}{w}\cBufferCons\sigma}
                    & \reason{by (\ref{eq:sr-sel-buffer}), (\ref{eq:sr-sel-5b}), \trBuffer{2}} \\
\label{eq:sr-sel-13}
                &\contO\cdot\contGb{0},\contGbp{1s}\ \vdash_\emptyset\ Q & \reason{by (\ref{eq:sr-sel-11}) and \trLift}
            \end{flalign}
            \begin{flalign}
\label{eq:sr-sel-14}
                &\contO\cdot\contGp\ \vdash_\bufferTracker\ P^\prime & \reason{by (\ref{eq:sr-sel-12}), (\ref{eq:sr-sel-13}), since $\bufferTracker\cap\emptyset = \emptyset$, and by \trPar} \\
\label{eq:sr-sel-15}
                &\contG\rightarrow_{(\bufferTracker;\rel)} \contGp & \reason{holds trivially by (\ref{eq:sr-sel-subin-red}) since \rel\ is not used} \\
\label{eq:sr-sel-16}
                &\safe{\bufferTracker}{\rel}{\contGp} & \reason{since \safe{\bufferTracker}{\rel}{\contG} and by (\ref{eq:sr-sel-15}) and \spRed}
            \end{flalign}
            $\therefore$ by (\ref{eq:sr-sel-14}), (\ref{eq:sr-sel-15}), and (\ref{eq:sr-sel-16}), \cref{thm:sr} holds for \ref{case:sr-selection}:\ref{subcase:sr-sel-in}.

        \item Subcase: Let $\cSessionRole{s}{p} \not\in \contGb{s}$ \label[subcase]{subcase:sr-sel-notin}
            \begin{subequations}\addtocounter{equation}{-1}
            \begin{flalign}
\label{eq:sr-sel-17}
                &\contG \xrightarrow{\contSnd{s}{p}{q}{m}{\tType}}_\bufferTracker \contGp = \contGb{0},\contGbp{1},\contGb{2},\contGb{s}\ \st
                    &\reason{by (\ref{eq:sr-sel-kinI}), (\ref{eq:sr-sel-s-in-buff}), and \crrSnd{1}} \\
                &\contGbp{1} = \cSessionRole{s}{p}:\tTuple{\tBuffer{q}{m}{\tType}{\emptyBuffer}}{\tSession} &
            \end{flalign}
            \end{subequations}
            \begin{subequations}\addtocounter{equation}{-1}
            \begin{flalign}
                &\contGbp{1} = \contGbp{1M},\contGbp{1s}\ \ \st
                    &\reason{by ctx comp. and cong. (\cref{def:ctx-syntax})} \\
                &\contGbp{1M} = \cSessionRole{s}{p}:\tBuffer{q}{m}{\tType}{\emptyBuffer} & \\
\label{eq:sr-sel-18b}
                &\contGbp{1s}\ = \cSessionRole{s}{p}:\tSession & 
            \end{flalign}
            \end{subequations}
            \begin{flalign}
\label{eq:sr-sel-19}
                &\contO\cdot\contGb{0},\contGbp{1s}\ \vdash\ Q & \reason{by (\ref{eq:sr-sel-5c}) and (\ref{eq:sr-sel-18b})} \\
\label{eq:sr-sel-20}
                &\contO\cdot\contGb{2},\contGb{s},\contGbp{1M}\ \vdash_\bufferTracker\ \cBuffer{s}{\cBufferEntry{p}{q}{m}{w}\cBufferCons\sigma}
                    & \reason{by (\ref{eq:sr-sel-buffer}), (\ref{eq:sr-sel-5b}), \trBuffer{1}} \\
\label{eq:sr-sel-21}
                &\contO\cdot\contGb{0},\contGbp{1s}\ \vdash_\emptyset\ Q & \reason{by (\ref{eq:sr-sel-19}) and \trLift}
            \end{flalign}
            \begin{flalign}
\label{eq:sr-sel-22}
                &\contO\cdot\contGp\ \vdash_\bufferTracker\ P^\prime & \reason{by (\ref{eq:sr-sel-20}), (\ref{eq:sr-sel-21}), since $\bufferTracker\cap\emptyset = \emptyset$, and by \trPar} \\
\label{eq:sr-sel-23}
                &\contG\rightarrow_{(\bufferTracker;\rel)} \contGp & \reason{holds trivially by (\ref{eq:sr-sel-17}) since \rel\ is not used} \\
\label{eq:sr-sel-24}
                &\safe{\bufferTracker}{\rel}{\contGp} & \reason{since \ref{ass:2} and by (\ref{eq:sr-sel-23}) and \spRed}
            \end{flalign}
            $\therefore$ by (\ref{eq:sr-sel-22}), (\ref{eq:sr-sel-23}), and (\ref{eq:sr-sel-24}), \cref{thm:sr} holds for \ref{case:sr-selection}:\ref{subcase:sr-sel-notin}.
        \end{enumerate}
        Hence, since \cref{thm:sr} holds for all subcases, it holds for \cref{case:sr-selection}.

\item Case \rBranch \setcounter{equation}{0} \label[case]{case:sr-branch}
    \begin{flalign}
\label{eq:sr-br-1}
        & P = \cPar{\cBranchQuant{\cSessionRole{s}{q}}{\cRcv{p$_i$}{m$_i$}{x_i}{Q_i}\,[,\ \cTimeOut{Q^\prime}]}}{\cBuffer{s}{\cBufferEntry{p}{q}{m}{w}\cBufferCons\sigma}}
            &\reason{by \ref{ass:3}, inv. of \rBranch} \\
\label{eq:sr-br-2}
        & P^\prime = \cPar{Q_k [^w/_{x_k}]\ }{\ \cBuffer{s}{\sigma}}\ \text{ and }\ \exists k \in I : \role{p$_k$} = \role{p},\ \msgLabel{m$_k$} = \msgLabel{m} 
            & \reason{by \ref{ass:3}, inv. of \rBranch}
    \end{flalign}
    \begin{subequations}\addtocounter{equation}{-1}
    \begin{flalign}
\label{eq:sr-br-3}
        &\contG = \contGb{\branch},\contGb{s}\ \text{and}\ \bufferTracker = \bufferTracker_1 \cup \bufferTracker_2\ \ \st
            &\reason{by \ref{ass:1} and inv. of \trPar} \\
\label{eq:sr-br-3a}
        &\contO\cdot\contGb{\branch}\ \vdash_{\bufferTracker_1}\ \cBranchQuant{\cSessionRole{s}{q}}{\cRcv{p$_i$}{m$_i$}{x_i}{Q_i}\,[,\ \cTimeOut{Q^\prime}]} & \\
\label{eq:sr-br-3b}
        &\contO\cdot\contGb{s}\ \ \vdash_{\bufferTracker_2}\ \cBuffer{s}{\cBufferEntry{p$_k$}{q}{m$_k$}{w}\cBufferCons\sigma} & \\
\label{eq:sr-br-3c}
        &\bufferTracker_1 \cap \bufferTracker_2 = \emptyset &
    \end{flalign}
    \end{subequations}
    \begin{flalign}
\label{eq:sr-br-4}
        &\bufferTracker_1 = \emptyset\ \ \text{and}\ \ 
            \contO\cdot\contGb{\branch} \vdash \cBranchQuant{\cSessionRole{s}{q}}{\cRcv{p$_i$}{m$_i$}{x_i}{Q_i}\,[,\ \cTimeOut{Q^\prime}]}
            &\reason{by (\ref{eq:sr-br-3a}), \trBranch, and \trLift}
    \end{flalign}
    \begin{subequations}\addtocounter{equation}{-1}
    \begin{flalign}
\label{eq:sr-br-5}
        &\contGb{\branch} = \contGb{0},\contGb{1}\ \ \st &\reason{by (\ref{eq:sr-br-4}) and inv. of \trBranch} \\
\label{eq:sr-br-5a}
        &\contGb{1}\ \vdash\ \cSessionRole{s}{q}:\tBranchQuant{\tRcv{p$_i$}{m$_i$}{\tType_i}{\tSession_i\,[,\ \tSession^\prime]}} & \\
\label{eq:sr-br-5b}
        & [\,\contO\cdot\contGb{0}, \cSessionRole{s}{q}: \tSession^\prime\ \vdash\ Q^\prime\,] & \\
\label{eq:sr-br-5c}
        &\exists i \in I : \contO\cdot\contGb{0}, x_i : \tType_i, \cSessionRole{s}{q}:\tSession_i\ \vdash\ Q_i &
    \end{flalign}
    \end{subequations}
    \begin{flalign}
\label{eq:sr-br-6}
        &\contGb{1} = \cSessionRole{s}{q}:\tBranchQuant{\tRcv{p$_i$}{m$_i$}{\tType_i}{\tSession_i\,[,\ \tSession^\prime]}}
            & \reason{by (\ref{eq:sr-br-5a}) and \trVal}
    \end{flalign}
    \begin{flalign}
\label{eq:sr-br-7}
        &\bufferTracker_2 = \{s\} & \reason{by induction on the derivation of (\ref{eq:sr-br-3b})} \\
\label{eq:sr-br-8}
        &\bufferTracker\  = \{s\},\ \ \therefore\ s \in \bufferTracker & \reason{by (\ref{eq:sr-br-3}), (\ref{eq:sr-br-4}), and (\ref{eq:sr-br-7})}
    \end{flalign}
    We now perform a sub-proof by induction on the derivation of (\ref{eq:sr-br-3b}) 
    \begin{enumerate}[label=\textbf{(SC\arabic*)}, itemsep=1em] 
        \item Subcase \trBuffer{1} \label[subcase]{subcase:sr-br-sc1}
        \begin{subequations}\addtocounter{equation}{-1}
        \begin{flalign}
\label{eq:sr-br-9}
            &\contGb{s} = \contGb{s0},\contGb{s1},\cSessionRole{s}{p$_k$}:\tBuffer{q}{m$_k$}{\tType}{\emptyBuffer}\ \ \st &\reason{by (\ref{eq:sr-br-3b}) and inv. of \trBuffer{1}} \\
\label{eq:sr-br-9a}
            &\contO\cdot\contGb{s1}\ \vdash_\bufferTracker\ \cBuffer{s}{\sigma} & \\
\label{eq:sr-br-9b}
            & \contGb{s0}\ \vdash\ w:\tType &
        \end{flalign}
        \end{subequations}
        \begin{flalign}
\label{eq:sr-br-10}
            & \exists k \in I : \tType_k = \tType & \reason{since \ref{ass:2}, and by (\ref{eq:sr-br-6}), (\ref{eq:sr-br-9}), (\ref{eq:sr-br-2}), and \spCom} 
        \end{flalign}
        There are two possible context reductions for \contG, but we only need to find \emph{one} that satisfies the necessary conditions. Consider:
        \begin{multline}
\label{eq:sr-br-11}
            \contG \xrightarrow{\contCom{s}{p$_k$}{q}{m$_k$}}_\bufferTracker \contGp = \contGb{0},\contGb{s0},\contGb{s1},\cSessionRole{s}{p$_k$}:\emptyBuffer, \cSessionRole{s}{q}:\tSession \text{ and } \exists k \in I : \tSession_k = \tSession \\
            \reason{by (\ref{eq:sr-br-8}), (\ref{eq:sr-br-2}), (\ref{eq:sr-br-10}), and \crrCom} 
        \end{multline}
        \begin{flalign}
\label{eq:sr-br-12}
            & \contO\cdot\contGb{0},\contGb{s0},\cSessionRole{s}{q}:\tSession_k\ \vdash_\emptyset\ Q_k[^w/_{x_k}]
                & \reason{by (\ref{eq:sr-br-5c}), (\ref{eq:sr-br-9b}), (\ref{eq:sr-br-11}), and \trLift} \\
\label{eq:sr-br-13}
            & \contO\cdot\contGb{s1},\cSessionRole{s}{p$_k$}:\emptyBuffer\ \vdash_\bufferTracker\ \cBuffer{s}{\sigma}
                & \reason{by (\ref{eq:sr-br-9a}), (\ref{eq:sr-br-11}), \fPayloads, and since \cSessionRole{s}{p$_k$} $\not\in$ \dom(\contGb{s1})}
        \end{flalign}
        \begin{flalign}
\label{eq:sr-br-14}
            & \contO\cdot\contGp\ \vdash_\bufferTracker\ P^\prime
                & \reason{since $\bufferTracker\cap\emptyset = \emptyset$, and by (\ref{eq:sr-br-12}), (\ref{eq:sr-br-13}), and \trPar}\\
\label{eq:sr-br-15}
            &\contG\rightarrow_{(\bufferTracker;\rel)} \contGp & \reason{holds trivially by (\ref{eq:sr-br-11}) since \rel\ is not used} \\
\label{eq:sr-br-16}
            &\safe{\bufferTracker}{\rel}{\contGp} & \reason{since \safe{\bufferTracker}{\rel}{\contG} and by (\ref{eq:sr-br-15}) and \spRed}
        \end{flalign}
        $\therefore$ by (\ref{eq:sr-br-14}), (\ref{eq:sr-br-15}), and (\ref{eq:sr-br-16}), \cref{thm:sr} holds for \ref{case:sr-branch}:\ref{subcase:sr-br-sc1}.

        \item Subcase \trBuffer{2} \label[subcase]{subcase:sr-br-sc2} is similar to \cref{subcase:sr-br-sc1}.
    \end{enumerate}
    
    Hence, since \cref{thm:sr} holds for all subcases, it holds for \cref{case:sr-branch}.

\item Case \rTimeOut \setcounter{equation}{0} \label[case]{case:sr-timeout}
    \begin{flalign}
\label{eq:sr-to-1}
        & P = \cPar{\cBranchQuant{\cSessionRole{s}{q}}{\cRcv{p$_i$}{m$_i$}{x_i}{Q_i},\ \cTimeOut{Q^\prime}}}{\cBuffer{s}{\sigma}}
            &\reason{by \ref{ass:3}, inv. of \rTimeOut} \\
\label{eq:sr-to-2}
        & P^\prime = \cPar{Q^\prime\ }{\ \cBuffer{s}{\sigma}} & \reason{by \ref{ass:3}, inv. of \rTimeOut}
    \end{flalign}
    \begin{subequations}\addtocounter{equation}{-1}
    \begin{flalign}
\label{eq:sr-to-3}
        & \contG = \contGb{\branch},\contGb{s}\ \text{and}\ \bufferTracker = \bufferTracker_1\cup\bufferTracker_2\ \ \st 
            & \reason{by \ref{ass:1}, inv. of \trPar} \\
\label{eq:sr-to-3a}
        & \contO\cdot\contGb{\branch} \vdash_{\bufferTracker_1}
          \cBranchQuant{\cSessionRole{s}{q}}{\cRcv{p$_i$}{m$_i$}{x_i}{Q_i},\ \cTimeOut{Q^\prime}} & \\
\label{eq:sr-to-3b}
        & \contO\cdot\contGb{s}\ \vdash_{\bufferTracker_2}\ \cBuffer{s}{\sigma} & \\
\label{eq:sr-to-3c}
        & \bufferTracker_1 \cap \bufferTracker_2 = \emptyset & 
    \end{flalign}
    \end{subequations}
    The only rule that can type the process in (\ref{eq:sr-to-3a}) is \trBranch. 
    Hence we can conclude that:
    \begin{flalign}
\label{eq:sr-to-4}
        & \bufferTracker_1 = \emptyset\ \text{and}\ \contO\cdot\contGb{\branch}\ \vdash\ 
          \cBranchQuant{\cSessionRole{s}{q}}{\cRcv{p$_i$}{m$_i$}{x_i}{Q_i},\ \cTimeOut{Q^\prime}} 
            & \reason{by (\ref{eq:sr-sel-2}), \trBranch, and \trLift}
    \end{flalign}
    \begin{subequations}\addtocounter{equation}{-1}
    \begin{flalign}
\label{eq:sr-to-5}
        &\contGb{\branch} = \contGb{0},\contGb{1}\ \ \st & \reason{by (\ref{eq:sr-to-4}) and inv. of \trBranch} \\
\label{eq:sr-to-5a}
        &\contGb{1}\ \vdash\ 
         \cSessionRole{s}{q}:\tBranchQuant{\tRcv{p$_i$}{m$_i$}{\tType_i}{\tSession_i},\ \cTimeOut{\tSession^\prime}} & \\
\label{eq:sr-to-5b}
        &\contO\cdot\contGb{0}, \cSessionRole{s}{q}:\tSession^\prime\ \vdash\ Q^\prime & \\
\label{eq:sr-to-5c}
        &\forall i \in I : \contO\cdot\contGb{0}, x_i:\tType_i, \cSessionRole{s}{q}:\tSession_i\ \vdash\ Q_i &
    \end{flalign}
    \end{subequations}
    \begin{flalign}
\label{eq:sr-to-6}
        &\contGb{1} = \cSessionRole{s}{q}:\tBranchQuant{\tRcv{p$_i$}{m$_i$}{\tType_i}{\tSession_i},\ \cTimeOut{\tSession^\prime}}
            & \reason{by (\ref{eq:sr-to-5a}) and \trVar} \\
\label{eq:sr-to-7}
        &\bufferTracker = \bufferTracker_2 = \{s\},\  \therefore s \in \bufferTracker
            & \reason{by (\ref{eq:sr-to-3c}) and ind. on (\ref{eq:sr-to-3b})} \\
\label{eq:sr-to-8}
        &\exists k \in I : \role{p$_k$} \not\in \rel(\role{q})
            & \reason{by \ref{ass:2}, (\ref{eq:sr-to-6}) and \spR{2}}
    \end{flalign}
    \begin{subequations}\addtocounter{equation}{-1}
    \begin{flalign}
\label{eq:sr-to-9}
        &\contG = \contGb{0},\contGb{1},\contGb{s} \xrightarrow{\contTimeout{s}{p}}_\bufferTracker
         \contGp = \contGb{0},\contGbp{1},\contGb{s}\ \ \st
            & \reason{by (\ref{eq:sr-to-7}), (\ref{eq:sr-to-8}), and \crrTO} \\
\label{eq:sr-to-9a}
        &\contGbp{1} = \cSessionRole{s}{q}:\tSession^\prime
    \end{flalign}
    \end{subequations}
    \begin{flalign}
\label{eq:sr-to-10}
        &\contO\cdot\contGb{0},\contGbp{1}\ \vdash\ Q^\prime
            &\reason{by (\ref{eq:sr-to-5b}) and (\ref{eq:sr-to-9a})} \\
\label{eq:sr-to-11}
        &\contO\cdot\contGb{0},\contGbp{1}\ \vdash_\emptyset\ Q^\prime
            & \reason{by (\ref{eq:sr-to-10}) and \trLift} \\
\label{eq:sr-to-12}
        &\contO\cdot\contGp\ \vdash_\bufferTracker\ P^\prime
            &\reason{by \trPar\ using (\ref{eq:sr-to-11}), ((\ref{eq:sr-to-7}) and (\ref{eq:sr-to-3b})), 
                     $\bufferTracker\cap\emptyset = \emptyset$} \\
\label{eq:sr-to-13}
        &\contG\rightarrow_{(\bufferTracker;\rel)} \contGp 
            & \reason{by (\ref{eq:sr-to-8}) and (\ref{eq:sr-to-9})} \\
\label{eq:sr-to-14}
        &\safe{\bufferTracker}{\rel}{\contGp} 
            & \reason{since \ref{ass:2} and by (\ref{eq:sr-to-13}) and \spRed}
    \end{flalign}
    $\therefore$ by (\ref{eq:sr-to-12}), (\ref{eq:sr-to-13}), and (\ref{eq:sr-to-14}), \cref{thm:sr} holds for \cref{case:sr-timeout}.

\item Case \rChoice \setcounter{equation}{0} \label[case]{case:sr-choice}
    \begin{flalign}
\label{eq:sr-ch-1}
        & P = \cChoice{P_1}{P_2} &\reason{by \ref{ass:3} and inv. of \rChoice}
    \end{flalign}
    The only way \ref{ass:1} holds is if $\bufferTracker = \emptyset$ (since 
    otherwise there is no applicable typing rule), hence we conclude that:
    \begin{flalign}
\label{eq:sr-ch-2}
        & \bufferTracker = \emptyset & \reason{holds from \ref{ass:1}}\\
\label{eq:sr-ch-3}
        & \contO\cdot\contG\ \vdash\ P & \reason{by (\ref{eq:sr-ch-2}) and inv. of \trLift} \\
\label{eq:sr-ch-4}
        & \contO\cdot\contG\ \vdash\ P_1\ \text{and}\ 
          \contO\cdot\contG\ \vdash\ P_2
            & \reason{by (\ref{eq:sr-ch-3}) and inv. of \trChoice}
    \end{flalign}
    We now consider two subcases, one for each possible reduction of the 
    non-deterministic choice:

    \begin{enumerate}[label=\textbf{(SC\arabic*)}, itemsep=1em]
        \item Subcase: Let $P^\prime = P_1$ \label[subcase]{subcase:sr-ch-1}
        \begin{flalign}
\label{eq:sr-ch-5}
            & \contO\cdot\contG\ \vdash_\bufferTracker\ P^\prime 
                & \reason{by (\ref{eq:sr-ch-4}), (\ref{eq:sr-ch-2}), and \trLift}
        \end{flalign}
        $\therefore\ $ by (\ref{eq:sr-ch-5}), 
        $\contG \rightarrow_{(\bufferTracker;\rel)}^0 \contG$, and \ref{ass:2},
        \cref{thm:sr} holds for \ref{case:sr-choice}:\ref{subcase:sr-ch-1}.

        \item Subcase: Let $P^\prime = P_2$ \label[subcase]{subcase:sr-ch-2}
        \begin{flalign}
\label{eq:sr-ch-6}
            & \contO\cdot\contG\ \vdash_\bufferTracker\ P^\prime 
                & \reason{by (\ref{eq:sr-ch-4}), (\ref{eq:sr-ch-2}), and \trLift}
        \end{flalign}
        $\therefore\ $ by (\ref{eq:sr-ch-6}), 
        $\contG \rightarrow_{(\bufferTracker;\rel)}^0 \contG$, and \ref{ass:2},
        \cref{thm:sr} holds for \ref{case:sr-choice}:\ref{subcase:sr-ch-2}.
    \end{enumerate}
    Hence, since \cref{thm:sr} holds for all subcases, it holds for \cref{case:sr-choice}.

\item Case \rCall \setcounter{equation}{0} \label[case]{case:sr-call}
    \begin{flalign}
\label{eq:sr-call-1}
        & P\ = \cDef{\ \cDecl{X}{x_1,\dotsc,\cVar{x_n}}{\cVar{Q\ }}}{\ (\cPar{\cCall{X}{w_1,\dotsc,w_n}}{\cVar{Q^\prime}})}
            & \reason{by inv. of \rCall}\\
\label{eq:sr-call-2}
        & P^\prime = \cDef{\ \cDecl{X}{\cVar{x_1},\dotsc,\cVar{x_n}}{\cVar{Q\ }}}{\ (\cPar{\cVar{Q}[^{w_1}/_{x_1}]\,\dotsb[^{w_n}/_{x_n}]}{\cVar{Q^\prime}})}
            & \reason{by inv. of \rCall}
    \end{flalign}
    \begin{flalign}
\label{eq:sr-call-3}
        & \contO,X:\tType_1,\dotsc,\tType_n\cdot x_1:\tType_1,\dotsc,x_n:\tType_n\ \vdash\ Q
            & \reason{by \ref{ass:1}, (\ref{eq:sr-call-1}) and inv. of \trDef} \\
\label{eq:sr-call-4}
        & \contO,X:\tType_1,\dotsc,\tType_n\cdot\contG\ \vdash\ \cPar{\cCall{X}{w_1,\dotsc,w_n}\ }{\ \cVar{Q^\prime}}
            & \reason{by \ref{ass:1}, (\ref{eq:sr-call-1}) and inv. of \trDef} 
    \end{flalign}
    \begin{subequations}\addtocounter{equation}{-1}
    \begin{flalign}
\label{eq:sr-call-5}
        & \contG = \contGb{1}, \contGb{2}\ \ \st
            & \reason{by (\ref{eq:sr-call-4}) and inv. of \trPar} \\
\label{eq:sr-call-5a}
        & \contO,X:\tType_1,\dotsc,\tType_n\cdot\contGb{1}\ \vdash\ \cCall{X}{w_1,\dotsc,w_n} & \\
\label{eq:sr-call-5b}
        & \contO,X:\tType_1,\dotsc,\tType_n\cdot\contGb{2}\ \vdash\ Q^\prime &
    \end{flalign}
    \end{subequations}
    \begin{subequations}\addtocounter{equation}{-1}
    \begin{flalign}
\label{eq:sr-call-6}
        & \contGb{1} = \contGb{11},\dotsc,\contGb{1n},\contGbp{1}\ \ \st
            & \reason{by (\ref{eq:sr-call-5a}) and inv. of \trCall} \\
\label{eq:sr-call-6a}
        & \forall i \in 1..n \cdot \contGb{1i}\ \vdash\ w_i:\tType_i & \\
\label{eq:sr-call-6b}
        & \fend(\contGbp{1}) &
    \end{flalign}
    \end{subequations}
    \begin{flalign}
\label{eq:sr-call-7}
        & \contO,X:\tType_1,\dotsc,\tType_n\cdot \contGb{1}\ \vdash\ \cVar{Q}[^{w_1}/_{x_1}]\,\dotsb[^{w_n}/_{x_n}]
            & \reason{by (\ref{eq:sr-call-3}), (\ref{eq:sr-call-6a}), (\ref{eq:sr-call-6b})} \\
\label{eq:sr-call-8}
        & \contO,X:\tType_1,\dotsc,\tType_n\cdot\contG\ \ \vdash\ \cPar{\cVar{Q}[^{w_1}/_{x_1}]\,\dotsb[^{w_n}/_{x_n}]\ }{\ \cVar{Q^\prime}}
            & \reason{by (\ref{eq:sr-call-7}), (\ref{eq:sr-call-5b}), and \trPar} \\
\label{eq:sr-call-9}
        & \contO\cdot\contG\ \ \vdash\ P^\prime
            & \reason{by (\ref{eq:sr-call-3}), (\ref{eq:sr-call-8}), and \trDef} 
    \end{flalign}
    $\therefore\ $ by (\ref{eq:sr-call-9}), 
    $\contG \rightarrow_{(\bufferTracker;\rel)}^0 \contG$, and \ref{ass:2},
    \cref{thm:sr} holds for \cref{case:sr-call}.

\item Case \rCtx \setcounter{equation}{0} \label{case:sr-ctx} \\
    We now perform a further proof by induction on the structure of \ctx.
    From \cref{def:ctx-reduction}, it follows that we must prove \cref{thm:sr} for 
    the following three cases:
    \begin{enumerate}[label=\textbf{(SC\arabic*)}]
        \item $P = \cPar{Q}{P_\ctx}\ $ and $\ P^\prime = \cPar{Q^\prime}{P_\ctx}\ $ and $\ Q \longrightarrow Q^\prime$ \label{subcase:sr-ctx-sc1}
        \begin{subequations}\addtocounter{equation}{-1}
        \begin{flalign}
\label{eq:sr-ctx-sc1-1}
            & \contG = \contGb{Q},\contGb{\ctx}\ \ \st 
                & \reason{by inv. of \trPar} \\
\label{eq:sr-ctx-sc1-1a}
            & \contO\cdot\contGb{Q}\ \vdash_{\bufferTracker_Q}\ Q & \\
\label{eq:sr-ctx-sc1-1b}
            & \contO\cdot\contGb{\ctx}\ \vdash_{\bufferTracker_\ctx}\ P_\ctx & \\
\label{eq:sr-ctx-sc1-1c}
            & \bufferTracker_Q \cap \bufferTracker_\ctx = \emptyset\ \text{and}\ \bufferTracker = \bufferTracker_Q \cup \bufferTracker_\ctx &
        \end{flalign}
        \end{subequations}
        \begin{flalign}
\label{eq:sr-ctx-sc1-2}
            & (\bufferTracker;\rel)\textnormal{-}\fSafe(\contGb{Q})
                & \reason{by \ref{ass:2} and \cref{lem:safety-preservation}}
        \end{flalign}
        \begin{subequations}\addtocounter{equation}{-1}
        \begin{flalign}
\label{eq:sr-ctx-sc1-3}
            & \exists \contGbp{Q}\ \ \st & \reason{by (\ref{eq:sr-ctx-sc1-1a}), (\ref{eq:sr-ctx-sc1-2}), since $Q\longrightarrow Q\prime$, and by the \ih} \\
\label{eq:sr-ctx-sc1-3a}
            & \contO\cdot\contGbp{Q}\ \vdash_{\bufferTracker_Q}\ Q^\prime & \\
\label{eq:sr-ctx-sc1-3b}
            & \contGb{Q} \longrightarrow_{(\bufferTracker_Q;\rel)}\contGbp{Q} & \\
\label{eq:sr-ctx-sc1-3c}
            & (\bufferTracker_Q;\rel)\textnormal{-}\fSafe(\contGbp{Q}) & 
        \end{flalign}
        \end{subequations}
        \begin{flalign}
\label{eq:sr-ctx-sc1-4}
            & \contO\cdot\contGbp{Q},\contGb{\ctx}\ \vdash_\bufferTracker\ P^\prime 
                & \reason{by (\ref{eq:sr-ctx-sc1-3a}), (\ref{eq:sr-ctx-sc1-1b}), (\ref{eq:sr-ctx-sc1-1c}), and \trPar} \\
\label{eq:sr-ctx-sc1-5}
            & \contG \longrightarrow_{(\bufferTracker;\rel)} \contGbp{Q},\contGb{\ctx}
                & \reason{by (\ref{eq:sr-ctx-sc1-3b}), \crrCong, and since $\bufferTracker_Q \subseteq \bufferTracker$}
        \end{flalign}
        Since $P_\ctx$ does not communicate over sessions in $\bufferTracker_Q$ we obtain:
        \begin{flalign}
\label{eq:sr-ctx-sc1-6}
            & (\bufferTracker_\ctx;\rel)\textnormal{-}\fSafe(\contGb{\ctx})
                & \reason{since \ref{ass:2}, \cref{lem:safety-preservation}, and (\ref{eq:sr-ctx-sc1-1c})} \\
\label{eq:sr-ctx-sc1-7}
            & (\bufferTracker;\rel)\textnormal{-}\fSafe(\contGbp{Q},\contGb{\ctx})
                & \reason{by (\ref{eq:sr-ctx-sc1-3c}), (\ref{eq:sr-ctx-sc1-6}), (\ref{eq:sr-ctx-sc1-1c}), and \cref{lem:safety-composition}}
        \end{flalign}
        $\therefore$ by (\ref{eq:sr-ctx-sc1-4}), (\ref{eq:sr-ctx-sc1-5}), and (\ref{eq:sr-ctx-sc1-7}), \cref{thm:sr} holds for \ref{subcase:sr-ctx-sc1}.

        \item $P = \cRestriction{s}{Q}\ $ and $\ P^\prime = \cRestriction{s}{Q^\prime}\ $ and $\ Q \longrightarrow Q^\prime$ \label{subcase:sr-ctx-sc2}
        \begin{subequations}\addtocounter{equation}{-1}
        \begin{flalign}
\label{eq:sr-ctx-sc2-8}
            & s:\contGp\ \ \st & \reason{by inv. of \trNew} \\
\label{eq:sr-ctx-sc2-8a}
            & \contGp = \{\cSessionRole{s}{p}:\tSessionQueue_\role{p}\}_{\role{p}\in\roles} & \\
\label{eq:sr-ctx-sc2-8b}
            & s\not\in\contG & \\
\label{eq:sr-ctx-sc2-8c}
            & (\{s\}\,;\rel)\textnormal{-}\fSafe(\contGp) & \\
\label{eq:sr-ctx-sc2-8d}
            & \contO\cdot\contG,\contGp\ \vdash_{\bufferTracker \cup \{s\}}\ Q & \\
\label{eq:sr-ctx-sc2-8e}
            & s \not\in \bufferTracker &
        \end{flalign}
        \end{subequations}
        \begin{flalign}
\label{eq:sr-ctx-sc2-9}
            & (\bufferTracker \cup \{s\}\,;\rel)\textnormal{-}\fSafe(\contG,\contGp)
                & \reason{by \ref{ass:2}, (\ref{eq:sr-ctx-sc2-8c}), (\ref{eq:sr-ctx-sc2-8e}), and \cref{lem:safety-composition}}
        \end{flalign}
        Since the sessions of \contG\ and \contGp\ are disjoint, we may consider a reduction $Q \longrightarrow Q^\prime$ with 
        session $s$, effecting \contGp; or a reduction over another session, effecting \contG.\\
        For latter case, \cref{thm:sr} holds directly from the \ih. \\
        For the former case, the following considers a reduction involving session $s$:
        \begin{subequations}\addtocounter{equation}{-1}
        \begin{flalign}
\label{eq:sr-ctx-sc2-10}
            & \exists \contGpp\ \ \st
                & \reason{since $Q \longrightarrow Q^\prime$ and by (\ref{eq:sr-ctx-sc2-8d}), (\ref{eq:sr-ctx-sc2-9}), and the \ih} \\
\label{eq:sr-ctx-sc2-10a}
            & \contG,\contGp \longrightarrow_{(\{s\};\rel)} \contG,\contGpp & \\
\label{eq:sr-ctx-sc2-10b}
            & \contO\cdot\contG,\contGpp\ \vdash_{\bufferTracker\cup\{s\}} Q^\prime & \\
\label{eq:sr-ctx-sc2-10c}
            & (\bufferTracker \cup \{s\}\,;\rel)\textnormal{-}\fSafe(\contG,\contGpp)&
        \end{flalign}
        \end{subequations}
        \begin{flalign}
\label{eq:sr-ctx-sc2-11}
            & \contGp \longrightarrow_{(\{s\};\rel)} \contGpp
                & \reason{by (\ref{eq:sr-ctx-sc2-10a}) and \crrCong} \\
\label{eq:sr-ctx-sc2-12}
            & (\{s\}\,;\rel)\textnormal{-}\fSafe(\contGpp)
                & \reason{by (\ref{eq:sr-ctx-sc2-8c}), (\ref{eq:sr-ctx-sc2-11}), \spRed} 
        \end{flalign}
        Since there is no reduction for (\ref{eq:sr-ctx-sc2-11}) which can add a new session with role 
        into the typing context, we obtain:
        \begin{flalign}
\label{eq:sr-ctx-sc2-13}
            & \contGpp = \{\cSessionRole{s}{p}:\tSessionQueue_\role{p}^\prime\}_{\role{p}\in\roles}
                & \reason{by (\ref{eq:sr-ctx-sc2-8a}) and (\ref{eq:sr-ctx-sc2-11})} \\
\label{eq:sr-ctx-sc2-14}
            & \contO\cdot\contG\ \vdash_{\bufferTracker}\ \cRestriction{s\!:\!\contGpp}{Q^\prime}
                & \reason{by (\ref{eq:sr-ctx-sc2-13}), (\ref{eq:sr-ctx-sc2-8b}), (\ref{eq:sr-ctx-sc2-12}), (\ref{eq:sr-ctx-sc2-10b}), and \trNew}
        \end{flalign}
        $\therefore$ since $\contG\longrightarrow^0_{\bufferTracker;\rel}\contG$, and by \ref{ass:2} and (\ref{eq:sr-ctx-sc2-14}), 
        \cref{thm:sr} holds for \ref{subcase:sr-ctx-sc2}.

        \item $P = \cDef{\ D\ }{\ Q}\ $ and $\ P^\prime = \cDef{\ D\ }{\ Q^\prime}\ $ and $\ Q \longrightarrow Q^\prime$ \label{subcase:sr-ctx-sc3} \\
        This subcase holds directly from the \ih. 
    \end{enumerate}
    
    $\therefore$ since \cref{thm:sr} holds for all three subcases, it also holds for \cref{case:sr-ctx}.

\item Case \rDrop \setcounter{equation}{0} \label[case]{case:sr-drop}
    This case holds directly from \ref{ass:1}, \ref{ass:2},
    $\contG \rightarrow_{(\bufferTracker;\rel)}^0 \contG$, and \cref{lem:type-drop}.
\end{enumerate}

$\therefore$ \cref{thm:sr} holds since it holds for all possible cases. \qed
\end{proof}

\begin{customcor}{1}[Failure handling safety]
    Given a reliability function \rel\ \textnormal{:}
    \textnormal{$\role{p}\not\in \rel(\role{q})$} and $\contO\cdot\contG\ \vdash_\bufferTracker\ P$ with 
    $(\bufferTracker;\rel)$-$\fSafe(\contG)$ and $P\longrightarrow_\rel^* P^\prime \equiv \ctx[Q]$
    implies \textnormal{$Q \neq \cBranchQuant{\cSessionRole{s}{q}}{\dots,\cRcv{p}{m}{x}{Q^\prime}}$}.
    \Ie\ $Q$ cannot be a branch at \textnormal{\role{q}} receiving from \textnormal{\role{p}} and not define a timeout.
\end{customcor}

\begin{proof}[by contradiction]
    Assume $Q = \cBranchQuant{\cSessionRole{s}{q}}{\dots,\cRcv{p}{m}{x}{Q^\prime}}$, by \cref{thm:sr}, it holds 
    that $\exists \contGp : \fSafe(\contGp)$ and $ \contO\cdot\contGp \vdash_\bufferTracker\ Q$. 
    By inv. of \trBranch\ it holds that \contGp\ is of the form $\cSessionRole{s}{q}: \tBranchQuant{\tRcv{p$_i$}{m$_i$}{x_i}{\tSession_i}}$
    and $\exists k \in I : \role{p$_k$} = \role{p}$. 
    But $\role{p} \not\in \rel(\role q)$, thus violates \spR{1}.
    \emph{Contradiction}, $\therefore\ Q \neq \cBranchQuant{\cSessionRole{s}{q}}{\dots,\cRcv{p}{m}{x}{Q^\prime}}$. \qed
\end{proof}

\begin{customcor}{2}[Reliability Adherence]
    Given a reliability function \rel\ \textnormal{:}
    \textnormal{$\rel(\role{q}) = \relSet_\role{q}$} and $\contO\cdot\contG\ \vdash_\bufferTracker\ P$ with 
    $(\bufferTracker;\rel)$-$\fSafe(\contG)$ and $P\longrightarrow_\rel^* P^\prime \equiv \ctx[Q]$
    implies \textnormal{$Q \neq \cBranchQuant{\cSessionRole{s}{q}}{\cRcv{p$_i$}{m$_i$}{x_i}{Q_i},\ \cTimeOut{Q^\prime}}$}
    \st\ \textnormal{$\forall i \in I : \role{p$_i$} \in \relSet_\role{q}$}.
    \Ie\ $Q$ cannot be a branch at \textnormal{\role{q}} only receiving from reliable roles \textnormal{\role{p$_i$}} and define a timeout.
\end{customcor}

\begin{proof}
    Proof is similar to that of \cref{cor:failure-handling}, but results in a contradiction by violation of \spR{2}. \qed
\end{proof}
\label{app:SR}

\section{Session Fidelity}\label{app:SF}
\begin{proposition}[Subject Congruence]
    Assume $\contO\cdot\contG \vdash_\bufferTracker P$ and $P \equiv P^\prime$, then 
    $\exists \contGp$ s.t. $\contG \equiv \contGp$ and $\contO\cdot\contGp \vdash_\bufferTracker P^\prime$.
\end{proposition}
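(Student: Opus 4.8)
The plan is to proceed by induction on the derivation of $P \equiv P^\prime$. Recall that $\equiv$ is the least congruence relation containing the standard $\pi$-calculus axioms together with the two buffer congruence axioms of \cref{def:buffer_cong}; since every axiom schema is symmetric, it suffices to establish the claim for each axiom in both orientations and then to propagate it through reflexivity, transitivity, and the contextual closure rules. The equivalence and closure cases are routine: for reflexivity take $\contGp = \contG$; for transitivity chain the two witnesses produced by the \ih\ (using transitivity of $\equiv$ on contexts); and for each contextual rule — e.g.\ $P_1 \equiv P_1^\prime$ implies $\cPar{P_1}{P_2} \equiv \cPar{P_1^\prime}{P_2}$, or the analogous rules under $\nu$ and under $\textsf{def}$ — we invert the corresponding typing rule, apply the \ih\ to the rewritten subterm, and reassemble the derivation, observing that context composition ``$,$'' is monotone with respect to $\equiv$, so the reassembled context is congruent to the original.

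The $\pi$-calculus axioms are then discharged by exploiting the algebraic shape of the relevant rules. Commutativity and associativity of parallel composition ($\cPar{P}{Q} \equiv \cPar{Q}{P}$, $\cPar{(\cPar{P}{Q})}{R} \equiv \cPar{P}{(\cPar{Q}{R})}$) follow because \trPar\ types $\cPar{P}{Q}$ under $\contGb{1},\contGb{2}$ and context composition, together with disjoint union of buffer-trackers, is commutative and associative; here $\contGp$ is $\contG$ up to the implicit reordering of ``$,$''. The unit law $\cPar{P}{\inaction} \equiv P$ uses \trInaction, whose premise $\fend(\contGb{0})$ lets the $\inaction$-branch's context be split off or merged back while staying \fend-typed. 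Scope extrusion $\cRestriction{s}{(\cPar{P}{Q})} \equiv \cPar{P}{\cRestriction{s}{Q}}$ (for $s$ not free in $P$) is handled by inverting \trNew\ and \trPar: the side condition $s\not\in\contG$ of \trNew\ matches the freshness proviso, and the block $\{\cSessionRole{s}{p} : \tSessionQueue_\role{p}\}_{\role{p}\in\roles}$ can be moved between the two sides of the composition, the resulting context again being congruent to the original.

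The crux — and the step I expect to be the main obstacle — is the pair of buffer axioms. For the entry-swap axiom $\cBuffer{s}{\sigma_1\cBufferCons h_1\cBufferCons h_2\cBufferCons\sigma_2} \equiv \cBuffer{s}{\sigma_1\cBufferCons h_2\cBufferCons h_1\cBufferCons\sigma_2}$, one must invert the stack of buffer-typing rules \trBuffer{1}/\trBuffer{2}/\trEmpty/\trBuffer{w} that builds the buffer type entry by entry, swap the two corresponding buffer-type heads, and rebuild; the reordered context is \emph{congruent but not equal} to the original — precisely via the type-level congruence rule $\tQueue_1\cBufferCons\tQueue_2 \equiv \tQueue_2\cBufferCons\tQueue_1$ of \cref{fig:type-cong} — which is the reason the lemma is phrased with $\contG \equiv \contGp$ rather than equality. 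For the axiom $\cRestriction{s}{\cBuffer{s}{\sigma}} \equiv \inaction$, inverting \trNew\ yields $\contO\cdot\contG,\{\cSessionRole{s}{p}:\tSessionQueue_\role{p}\}_{\role{p}\in\roles} \vdash_{\bufferTracker\cup\{s\}} \cBuffer{s}{\sigma}$; one then has to show, by induction on $\sigma$ and inversion of the buffer-typing rules, that the residual $\contG$ is — up to $\equiv$ and garbage collection through \trBuffer{w} and \fPayloads — \fend-typed, so that \trInaction\ applies to $\inaction$; and conversely that \trNew\ together with buffer weakening \trBuffer{w} can rebuild $\cRestriction{s}{\cBuffer{s}{\sigma}}$ from such a context. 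Getting this bookkeeping exactly right — in particular, tracking how channel-typed payloads occurring inside $\sigma$ are accounted for by \fPayloads — is the delicate part of the argument; everything else is mechanical.
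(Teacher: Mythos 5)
Your proposal is correct and takes essentially the same route as the paper's proof: induction on the derivation of $P \equiv P^\prime$, with the standard structural cases discharged routinely (taking \contGp\ equal to, or a trivial rearrangement of, \contG) and the two buffer axioms as the only substantive cases, handled just as in the paper---the entry swap mirrored by the buffer-type congruence of \cref{fig:type-cong} (which is exactly why the statement uses $\contG \equiv \contGp$ rather than equality), and $\cRestriction{s}{\cBuffer{s}{\sigma}} \equiv \inaction$ by inverting \trNew\ and showing via \fPayloads/garbage collection that the residual context is \fend-typed so \trInaction\ applies. Your treatment is more detailed than the paper's terse two-case elaboration (notably on the converse orientation and the bookkeeping of channel payloads in $\sigma$), but the substance coincides.
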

\begin{proof}
    Proof is by induction on the derivation of $P \equiv P^\prime$.
    Most cases are straightforward and hold for $\contG = \contGp$.
    Below we elaborate the two buffer congruence cases:
    \begin{enumerate}
        \item $P = \cRestriction{s : \contGpp}{\cBuffer{s}{\sigma}} \equiv \inaction = P^\prime$.
              We must have $\contO\cdot\contGpp \vdash_{\{s\}} s:\sigma$ and $\fend(\contG)$.
              This is because \contGpp\ contains all the types required to check $s:\sigma$ (guaranteed by the buffer tracker).
              Thus $\fPayloads(\contG,\contGpp)$ holds, and since $s \not\in \contG$, then $\fend(\contG)$.
              $\therefore$ we conclude with $\contGp = \contG$ by typing rule \trInaction.
        \item $P \equiv P^\prime$ by swapping message positions in the buffer. 
              For this case we mirror the message reordering operation in the type by applying corresponding buffer type message swaps (\cref{fig:type-cong}). \qed
    \end{enumerate}
\end{proof}

\begin{proposition}[Normal Form]\label{prop:nf}
    For all $P$, $P \equiv \cDef{\ \tilde{D}\ }{\ \cRestriction{\tilde{s}}{\cPar{P_1}{\cPar{\cdots}{P_n}}}}$
    where $\forall i \in 1..n, P_i$ is either a branch (with possible timeout), selection, process call or session buffer.
\end{proposition}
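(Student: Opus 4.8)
The plan is to do a structural induction on $P$, repeatedly applying the structural congruence rules of~\cite{DBLP:journals/pacmpl/ScalasY19} together with the buffer congruence rules of \cref{def:buffer_cong}, so as to push every process definition outward, then every session restriction, leaving a flat parallel composition of guarded (or otherwise indivisible) subterms. The laws I rely on are: associativity and commutativity of $\,|\,$ with $\inaction$ as unit; scope extrusion for restriction, $\cRestriction{s}{(\cPar{P}{Q})}\equiv\cPar{\cRestriction{s}{P}}{Q}$ when $s\notin\fv(Q)$, and $\cRestriction{s}{(\cDef{D}{P})}\equiv\cDef{D}{\cRestriction{s}{P}}$ when $s$ does not occur in $D$; scope extrusion for definitions, $\cDef{D}{(\cPar{P}{Q})}\equiv\cPar{\cDef{D}{P}}{Q}$ under the analogous side-condition on $D$; the ``garbage'' laws $\cRestriction{s}{\inaction}\equiv\inaction$, $\cRestriction{s}{\cBuffer{s}{\sigma}}\equiv\inaction$ and $\cDef{D}{\inaction}\equiv\inaction$; and $\alpha$-conversion of bound session names and definition variables. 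A preliminary renaming lemma — the bound names of any $P$ may be chosen pairwise distinct and as fresh as required, up to $\equiv$ — lets me discharge the freshness provisos silently.

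The induction itself is routine. A process call $\cCall{X}{\tilde d}$, a buffer $\cBuffer{s}{\sigma}$, and any single branch (with or without timeout), selection, or non-deterministic choice is already of the required shape with no outer definitions or restrictions and $n=1$; $\inaction$ is the empty product, $n=0$ (I adopt the convention $\Pi_{i\in\emptyset}P_i=\inaction$). For $\cPar{P}{Q}$, apply the induction hypothesis to each side, $\alpha$-rename so that the two blocks of bound definitions and of bound sessions are mutually distinct and non-capturing for the opposite side's free names, lift all four binder-blocks to the top with the two scope-extrusion laws, reassemble the two parallel products into one using associativity/commutativity, and absorb any $\inaction$ factors. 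For $\cRestriction{s}{P}$, use the i.h. on $P$ and push $s$ past the outer $\cDef{\tilde D}{\cdot}$ (legitimate because declaration bodies are session-closed — which for well-typed processes is exactly what \trDef\ enforces, typing a declaration body under the parameter context only), merging $s$ into $\tilde s$. For $\cDef{D}{P}$, the i.h. on $P$ immediately yields $\cDef{D}{P}\equiv\cDef{D,\tilde D}{\cRestriction{\tilde s}{\Pi_j P_j}}$ after renaming to keep the declared names distinct. Finally, the prefix forms — selection $\cSnd{c}{q}{m}{d}{P}$ and the two branchings — are guarded, hence indivisible top-level components: their continuations remain underneath the prefix and are untouched, so again $n=1$ with no outer binders.

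The only genuinely delicate point, and the one I would treat with care, is the $\alpha$-conversion bookkeeping in the $\cPar{}{}$, $\cRestriction{}{}$ and $\cDef{}{}$ cases: the scope-extrusion rules carry freshness side-conditions, and when two normal forms are combined their bound session names and definition variables must first be made mutually distinct and non-capturing. I would factor this out once in the renaming lemma above, after which every rewriting step is a single mechanical application of a congruence rule. Two further points of hygiene: declarations are assumed session-closed, so that a definition can always be floated out past a restriction — this holds for the well-typed (indeed well-formed) processes to which the normal form is applied, and is where the hypothesis is really used; and if one wants the statement in full generality, the enumeration of admissible $P_i$ should be read as also covering non-deterministic choices $\cChoice{P_1}{P_2}$, which are indivisible in exactly the same way as the four listed forms.
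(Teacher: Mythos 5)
Your proof is correct and is essentially the argument the paper relies on: the paper's own ``proof'' is a one-line deferral to Scalas and Yoshida's Prop.~L.1, which is exactly this structural induction pushing definitions and restrictions outward via the congruence and scope-extrusion laws. Your additional hygiene remarks --- treating \cInaction\ as the empty product, noting that non-deterministic choice is an indivisible top-level component not listed in the statement, and observing that commuting a definition past a restriction needs session-closed declaration bodies (guaranteed by \trDef\ for well-typed processes) --- make explicit points the paper leaves implicit rather than introducing any gap.
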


\begin{proof}
    Similar to \cite[Prop. L.1]{DBLP:journals/pacmpl/ScalasY19}. \qed
\end{proof}

\begin{lemma}[Session Inversion]\label{lem:session-inv}
    Assume $\emptyset\cdot\contG\ \vdash_\bufferTracker\ (\Pi_{\role{p}\in I} P_\role{p})$ with each $P_\role{p}$ being \inaction\ (up-to-$\equiv$) or only plays role \role{p} in $s$.
    Then $\contG = \{\cSessionRole{s}{p}:\tSession_\role{p}\}_{\role{p}\in I^\prime},\{\cSessionRole{s}{p}:\tQueue_\role{p}\}_{\role{p}\in I^{\prime\prime}}$ for some $I^\prime, I^{\prime\prime}$.
    Moreover, $\forall \role{p} \in I^\prime$:
    \begin{enumerate}
        \item if $\ \tSession_\role{p} = \tSelectQuant{\tSnd{q$_j$}{m$_j$}{\tType_j}{\tSession_j}}$ then $\role{p} \in I$ and for some $\ctx, \ctx^\prime$ and $k \in J$, either:
            \begin{flalign*}
                & P_\role{p}\ \equiv\ \ctx[\cSnd{\cSessionRole{s}{p}}{q$_k$}{m$_k$}{w_k}{P_\role{p}^\prime}]\ \ \text{or}
                    &   \begin{tabular}{l}
                        \text{where $w$ is some} \\ \text{value $v$ or some $\cSessionRole{s^\prime}{\role{r}}$}
                    \end{tabular} \\
                & P_\role{p}\ \equiv\ \ctx\left[
                    \begin{array}{l}
                        \cDef{\cDecl{\ X}{x_1:\tType_1, \dots, x_n : \tType_n}{\\ \Sep\Sep\ctx^\prime[\cSnd{x_l}{q$_k$}{m$_k$}{d_k}{P_\role{p}^\prime}]\ \ }}
                            {\\ \cCall{X}{w_1, \dots, w_{l-1}, \cSessionRole{s}{p}, w_{l+1}, \dots, w_n}}
                    \end{array}
                \right]
                    & \begin{tabular}{l}
                        \text{with $1 \leq l \leq n$} \\ \text{and each $w_m$ is either} \\ \text{some value $v_m$ or some $\cSessionRole{s^\prime_m}{r$_m$}$}
                    \end{tabular}
            \end{flalign*}
        \item if $\ \tSession_\role{p} = \tBranchQuantx{\tRcv{q$_j$}{m$_j$}{\tType_j}{\tSession_j}}{j}{J}$ then $\role{p} \in I$ and for some $\ctx, \ctx^\prime$, either:
            \begin{flalign*}
                & P_\role{p}\ \equiv\ \ctx[\cBranchQuantx{\cSessionRole{s}{p}}{\cRcv{q$_j$}{m$_j$}{x_j}{P_{\role{p$_j$}}^\prime}}{j}{J}]\ \ \text{or}
                    &   \begin{tabular}{l}
                        \text{where $w$ is some} \\ \text{value $v$ or some $\cSessionRole{s^\prime}{\role{r}}$}
                    \end{tabular} \\
                & P_\role{p}\ \equiv\ \ctx\left[
                    \begin{array}{l}
                        \cDef{\cDecl{\ X}{x_1:\tType_1, \dots, x_n : \tType_n}{\\ \Sep\Sep\ctx^\prime[\cBranchQuantx{x_l}{\cRcv{q$_j$}{m$_j$}{x_j}{P_{\role{p$_j$}}^\prime}}{j}{J}]\ \ }}
                            {\\ \cCall{X}{w_1, \dots, w_{l-1}, \cSessionRole{s}{p}, w_{l+1}, \dots, w_n}}
                    \end{array}
                \right]
                    & \begin{tabular}{l}
                        \text{with $1 \leq l \leq n$} \\ \text{and each $w_m$ is either} \\ \text{some value $v_m$ or some $\cSessionRole{s^\prime_m}{r$_m$}$}
                    \end{tabular}
            \end{flalign*}
        \item if $\ \tSession_\role{p} = \tBranchQuantx{\tRcv{q$_j$}{m$_j$}{\tType_j}{\tSession_j},\ \cTimeOut \tSession^\prime}{j}{J}$ then $\role{p} \in I$ and for some $\ctx, \ctx^\prime$, either:
            \begin{flalign*}
                & P_\role{p}\ \equiv\ \ctx[\cBranchQuantx{\cSessionRole{s}{p}}{\cRcv{q$_j$}{m$_j$}{x_j}{P_{\role{p$_j$}}^\prime},\ \cTimeOut P_\role{p$_t$}^\prime}{j}{J}]\ \ \text{or}
                    &   \begin{tabular}{l}
                        \text{where $w$ is some} \\ \text{value $v$ or some $\cSessionRole{s^\prime}{\role{r}}$}
                    \end{tabular} \\
                & P_\role{p}\ \equiv\ \ctx\left[
                    \begin{array}{l}
                        \cDef{\cDecl{\ X}{x_1:\tType_1, \dots, x_n : \tType_n}{\\ \Sep\Sep\ctx^\prime[\cBranchQuantx{x_l}{\cRcv{q$_j$}{m$_j$}{x_j}{P_{\role{p$_j$}}^\prime,\ \cTimeOut P_\role{p$_t$}^\prime}}{j}{J}]\ \ }}
                            {\\ \cCall{X}{w_1, \dots, w_{l-1}, \cSessionRole{s}{p}, w_{l+1}, \dots, w_n}}
                    \end{array}
                \right]
                    & \begin{tabular}{l}
                        \text{with $1 \leq l \leq n$} \\ \text{and each $w_m$ is either} \\ \text{some value $v_m$ or some $\cSessionRole{s^\prime_m}{r$_m$}$}
                    \end{tabular}
            \end{flalign*}
    \end{enumerate}

    Furthermore, 4. $\forall \role{p} \in I \setminus I^\prime : P_\role{p} \equiv \inaction$.
\end{lemma}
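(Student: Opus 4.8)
The plan is to prove \cref{lem:session-inv} by inversion on the typing derivation $\emptyset\cdot\contG\vdash_\bufferTracker(\Pi_{\role{p}\in I}P_\role{p})$, after first bringing every component process to a canonical shape. So I would begin by invoking \cref{prop:nf} (Normal Form) together with Subject Congruence to assume, without loss of generality, that each $P_\role{p}$ is already in normal form. Because each $P_\role{p}$ is either $\inaction$ (up to $\equiv$) or only plays role \role{p} in $s$, clause~(iv) of the \emph{``only plays role''} condition in \cref{def:sf-conds} forces every restriction occurring in $P_\role{p}$ to be $\fend$-typed, so by buffer congruence (\cref{def:buffer_cong}) the buffers under those restrictions are congruent to $\inaction$ and may be discarded; what remains of $P_\role{p}$ is therefore a single thread --- a branching (possibly with a timeout), a selection, or a process call --- sitting inside a reduction context $\ctx$ and possibly guarded by a block of process definitions.

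Next I would dismantle the parallel composition. Repeated inversion on \trPar\ (interleaved with \trLift) splits $\contG=\bigcup_{\role{p}\in I}\contGb{\role{p}}$ and $\bufferTracker$ into pairwise-disjoint pieces, with $\emptyset\cdot\contGb{\role{p}}\vdash_{\bufferTracker_\role{p}}P_\role{p}$ for each \role{p}. For the roles whose process is $\equiv\inaction$, Subject Congruence and inversion on \trInaction\ give $\fend(\contGb{\role{p}})$, so $\contGb{\role{p}}$ consists only of $\tEnd$ (and basic, empty-buffer) entries; these are precisely the roles of $I\setminus I'$, and this already discharges the final clause of the statement. For the roles where $P_\role{p}$ only plays role \role{p} in $s$, clause~(iii) of the same definition gives $\contGb{\role{p}}=\contGb{0},\cSessionRole{s}{p}:\tSessionQueue_\role{p}$ with $\tSessionQueue_\role{p}\neq\tEnd$ and $\fend(\contGb{0})$. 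Using context composition and congruence together with the identity $\cSessionRole{s}{p}:\tTuple{\tQueue}{\tSession}\equiv\cSessionRole{s}{p}:\tQueue,\cSessionRole{s}{p}:\tSession$, I split $\tSessionQueue_\role{p}$ into an optional buffer part $\tQueue_\role{p}$ --- which supplies the family indexed by $I''$ --- and a session part $\tSession_\role{p}\neq\tEnd$; unfolding any leading $\tmu$ (which must be guarded under a branch or selection) puts $\tSession_\role{p}$ into exactly one of the three forms listed in the lemma, and these roles form $I'$.

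The core of the argument is then matching each normal-form thread against its session type. The leading prefix of $P_\role{p}$ can be typed only by \trSelection\ --- which forces $\tSession_\role{p}$ to be a selection type and $P_\role{p}$ to be (up to $\equiv$) the output context of clause~1 --- or by \trBranch\ --- which forces $\tSession_\role{p}$ to be a branch type, with or without a timeout, and $P_\role{p}$ to be the corresponding branching context of clause~2 or~3 --- or else the thread is a process call $\cCall{X}{\dots,\cSessionRole{s}{p},\dots}$. In the last case I would unfold the declaration of $X$ that carries $\cSessionRole{s}{p}$ as a formal parameter and invoke the \emph{``has guarded definitions''} condition of \cref{def:sf-conds}: since that parameter is of a session type, inside the body it may be used only immediately under a branching or a selection prefix, which is exactly the definition-wrapped alternative in each of clauses 1--3. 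Since $\cSessionRole{s}{p}$ is consumed exactly once at the head and $\tSession_\role{p}\neq\tEnd$, precisely one of these three cases applies and it agrees with the unfolded shape of $\tSession_\role{p}$; this establishes clauses 1--3 simultaneously, and reassembling the leftover $\fend$-typed fragments gives the stated decomposition of $\contG$.

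The step I expect to be the main obstacle is the process-call / recursion case: one has to follow a session channel that is passed as a formal parameter and threaded through possibly several nested definitions and reduction contexts, substitute correctly along \trDef\ and \trCall, and then appeal to guardedness to certify that the very first communication on that channel is exposed at the head of the definition body in precisely the claimed form. A secondary nuisance is the bookkeeping for the residual $\fend$-typed mappings --- and the garbage buffer types that $\fPayloads$ tolerates --- so that $\contG$ comes out exactly as $\{\cSessionRole{s}{p}:\tSession_\role{p}\}_{\role{p}\in I'},\{\cSessionRole{s}{p}:\tQueue_\role{p}\}_{\role{p}\in I''}$ and not merely congruent to it.
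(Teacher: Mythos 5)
Your plan is essentially the paper's own proof: the authors discharge this lemma in one line by deferring to Scalas--Yoshida's session-inversion argument (their Prop.~L.5) adapted to the normal form of \cref{prop:nf}, and your outline --- normal form plus subject congruence, inversion on \trPar/\trLift\ and on the head typing rules \trSelection/\trBranch/\trCall, with the ``only plays role'' and guarded-definitions conditions of \cref{def:sf-conds} driving the case split and the residual \fend/\fPayloads\ bookkeeping giving the stated decomposition of \contG\ --- is precisely that argument spelled out. The only nitpick is that guardedness does not force the first use of the channel to sit at the head of the definition body, merely inside some context $\ctx^\prime$, which is all the lemma claims anyway.
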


\begin{proof}
    Proof is similar to \cite[Prop. L.5]{DBLP:journals/pacmpl/ScalasY19} but uses our definition of normal form (\cref{prop:nf}). \qed
\end{proof}

\begin{customthm}{2}[Session Fidelity]
    Assuming
        $\emptyset\cdot\contG\ \vdash_{\bufferTracker} P$ with $(\bufferTracker;\rel)$-$\fSafe(\contG)$,
        \textnormal{$P \equiv \cPar{(\Pi_{\role{p}\in I}\, P_\role{p})}{\cBuffer{s}{\sigma}}$} and \textnormal{$\contG = \bigcup_{\role{p}\in I}\, \contGb{\role{p}}$},
        and for each \textnormal{$P_\role{p}$}:
        \begin{enumerate*}[label=(\roman*)]
            \item \textnormal{$\emptyset\cdot\contGb{\role{p}}\ \vdash_{\bufferTracker} P_\role{p}$}, and
            \item \textnormal{$P_\role{p}$} being \inaction\ (up-to-$\equiv$) \emph{or} only plays role \textnormal{\role{p}} in $s$, by \textnormal{\contGb{\role{p}}}.
        \end{enumerate*}
    Then,\\[0.5em]
    \begin{tabular}{l l c l}
        $\contG\longrightarrow_{(\bufferTracker;\rel)}$ & \ implies \  & $\exists \contGp,P^\prime$: \ & $\contG\longrightarrow_{(\bufferTracker;\rel)}\contGp$  \\
        && and & $P\longrightarrow^+_\rel P^\prime$ \\
        && and & $\emptyset\cdot\contGp\ \vdash_{\bufferTracker} P^\prime$ with $(\bufferTracker;\rel)$-$\fSafe(\contGp)$\\
        && and & \textnormal{$P^\prime = \cPar{(\Pi_{\role{p}\in I}\, P_\role{p}^\prime)}{\cBuffer{s}{\sigma^\prime}}$} and \textnormal{$\contGp = \bigcup_{\role{p}\in I}\, \contGbp{\role{p}}$} \\
        && and & for each \textnormal{$P_\role{p}^\prime$}: \\
        &&& $\sep$ \textnormal{$\emptyset\cdot\contGbp{\role{p}}\ \vdash_{\bufferTracker} P_\role{p}^\prime$}, and \\
        &&& $\sep$ \textnormal{$P_\role{p}^\prime$} is \inaction\ (up-to-$\equiv$) \emph{or} only plays role \textnormal{\role{p}} in $s$, by \textnormal{\contGbp{\role{p}}}
    \end{tabular}
\end{customthm}

\begin{proof}
    Proof is by induction on the reduction of $\contG$.
    
    Case $\contG\xrightarrow{\contTimeout{s}{p}}_{(\bufferTracker;\rel)}$.
    \begin{flalign}
\label{eq:sf-1}
    &\inferrule
        {s \in \bufferTracker \\ \exists\, k \in I : \role{q$_k$} \not\in \rel(\role{p})}
        {\contG = \contG_{I\setminus\role{p}},\contG_{\role{p}\tQueue},\cSessionRole{s}{p}:\!\tBranchQuant{\tRcv{q$_i$}{m$_i$}{\tType_i}{\tSession_i},\ \cTimeOut{\tSession^\prime}} \xrightarrow{\contTimeout{s}{p}}_{(\bufferTracker;\rel)} \cSessionRole{s}{p}:\tSession^\prime,\contG_{\role{p}\tQueue},\contG_{I\setminus\role{p}} = \contGp}
        &\reason{by \crrTO} 
    \end{flalign}
    By the hypothesis, (\ref{eq:sf-1}) and \cref{lem:session-inv}, the process $P_\role{p}$ 
    is a branch with a defined timeout (possibly occurring under some process definition).\\[1em]
    $\therefore P\longrightarrow^+_\rel P^\prime$ by either one reduction of \rTimeOut, or applying \rTimeOut\ after a finite number of \rCall. \\[1em]
    Then by \cref{thm:sr}, it holds that $\emptyset\cdot\contGp\ \vdash_{\bufferTracker} P^\prime$ with $(\bufferTracker;\rel)$-$\fSafe(\contGp)$.\\[1em]
    By analysis of reduction rules \rTimeOut\ and \rCall, we observe that $P^\prime$ maintains the required structure of $P^\prime = \cPar{(\Pi_{\role{q}\in I\setminus\role{p}}\, P_\role{q})}{\cPar{P_\role{p}^\prime}{\cBuffer{s}{\sigma^\prime}}}$ (for $\sigma = \sigma^\prime$).
    Similarly, $\contGp = \bigcup_{\role{q}\in I\setminus\role{p}}\, \contGb{\role{q}} \cup \contGbp{\role{p}} = \contG_{\role{p}\tQueue},\cSessionRole{s}{p} : \tSession^\prime$ for some possibly empty queue type $\contG_{\role{p}\tQueue}$.\\[1em]
    Lastly, we already know that for each $P_\role{q}$: $\emptyset\cdot\contGb{\role{q}}\ \vdash_{\bufferTracker} P_\role{q}$ and $P_\role{q}$ is \inaction\ (up-to-$\equiv$) \emph{or} only plays role \textnormal{\role{q}} in $s$, by \textnormal{\contGb{\role{q}}}.
    Thus, all that is left to show is that $\emptyset\cdot\contGbp{\role{p}}\ \vdash_{\bufferTracker} P_\role{p}^\prime$ and $P_\role{p}^\prime$ is \inaction\ (up-to-$\equiv$) \emph{or} only plays role \textnormal{\role{p}} in $s$, by \textnormal{\contGbp{\role{p}}}---which follow from rule \trBranch.\\[1em]

    Cases $\contG\xrightarrow{\contSnd{s}{p}{q}{m}{\tType}}_{(\bufferTracker;\rel)}$ and $\contG\xrightarrow{\contCom{s}{p}{q}{m}}_{(\bufferTracker;\rel)}$ follow similar steps to the first case. 
    We use context reduction rules to infer the structure of \contG, and \cref{lem:session-inv} to obtain the structure of $P$. 
    We proceed by applying similar reasoning to the steps demonstrated above. \qed
    
\end{proof}

\section{Process Properties and Decidability}\label{app:proc-props}
\begin{customthm}{3}[Process properties verification]
    Assuming:
        $\emptyset\cdot\contG\ \vdash_{\{s\}} P$ with $(\{s\};\rel)$-$\fSafe(\contG)$,
        \textnormal{$P \equiv \cPar{(\Pi_{\role{p}\in I}\, P_\role{p})}{\cBuffer{s}{\sigma}}$} and \textnormal{$\contG = \bigcup_{\role{p}\in I}\, \contGb{\role{p}}$}.
        Further, for each \textnormal{$P_\role{p}$}:
        \begin{enumerate*}[label=(\roman*)]
            \item \textnormal{$\emptyset\cdot\contGb{\role{p}}\ \vdash_{\{s\}} P_\role{p}$}, and
            \item \textnormal{$P_\role{p} \equiv \inaction\ $ or $\ P_\role{p}$} only plays role \textnormal{\role{p}} in $s$, by \textnormal{\contGb{\role{p}}}.
        \end{enumerate*}
        
        Then, $\forall \phi \in$ \{reliable-communication-free, deadlock-free, terminating, never-terminating, live\}, 
        if $(\bufferTracker;\rel)$-$\phi(\contG)$, then P is $\phi$.
\end{customthm}

\begin{proof}
    These results follow from \cref{thm:sf}.
\end{proof}

\begin{customthm}{4}[Decidability]
    If $(\bufferTracker;\rel)$-$\phi(\contG)$ is decidable, then \quot{$\contO\cdot\contG\ \vdash_\bufferTracker P$ with $(\bufferTracker;\rel)$-$\phi(\contG)$} is decidable.
\end{customthm}

\begin{proof}
    A decidable algorithm for type checking can be obtained by inversion of the typing rules in \cref{fig:type-rules}.
    It is key to note that:
    \begin{enumerate}[label=(\textit{\arabic*})]
        \item for each shape of $P$, at most one rule can resolve $\contO\cdot\contG\ \vdash_\bufferTracker\ P$, with the 
              exception of $\contO\cdot\contG\ \vdash_\bufferTracker\ \cBuffer{s}{\sigma}$, where \trBuffer{w} should be 
              ignored since leftover buffer types can be collectively garbage collected using \trEmpty;
        \item each rule deterministically populates \contO\ and \contG;
        \item every rule produces smaller subterms of P, ensuring that type checking of recursion terminates.
    \end{enumerate}
    Lastly, decidability of rule \trNew\ is dependant on decidability of \fSafe, which is decidable by the hypothesis. \qed
\end{proof}

\begin{customthm}{5}[Decidable bounded properties]
    $(\bufferTracker;\rel)$-bound$_k(\contG)$ is decidable for all $\bufferTracker, \rel, k$.
    Furthermore, if $(\bufferTracker;\rel)$-bounded$(\contG)$, then 
    $\forall \phi \in $ \{safe, reliable-com.-safe, deadlock-free, terminating, never-terminating, live\}, 
    it holds that $(\bufferTracker;\rel)$-$\phi(\contG)$ is decidable.
\end{customthm}

\begin{proof}
    $(\bufferTracker;\rel)$-bound$_k(\contG)$ is decidable since it is determined in a finite number of 
    reductions---\ie\ even if \contG\ can infinitely reduce, the violation of bound $k$ occurs before 
    infinitely reducing.
    Furthermore, if $(\bufferTracker;\rel)$-bound$_k(\contG)$, then \contG\ can be represented as a finite 
    state transition system, therefore $\forall \phi \in $ \{safe, reliable-com.-safe, deadlock-free, terminating, never-terminating, live\}, 
    it holds that $(\bufferTracker;\rel)$-$\phi(\contG)$ is decidable. \qed
\end{proof}

\section{Additional Examples}\label{app:examples}
\subsection{Leader Election}

We now present an example type for a leader election protocol (based off of the Raft consensus algorithm~\cite{DBLP:conf/usenix/OngaroO14}) with the following attributes:
\begin{enumerate}
    \item The election cluster consists of 3 static participants;
    \item Participants are aware of the cluster size;
    \item The protocol is fault-tolerant for all non-Byzantine faults;
    \item Participants are fail-stop, \ie\ a fault to a node implies it immediately crashes;
    \item The protocol assumes a maximum threshold of $f$ faults for a cluster size of $2f + 1$, \ie\ a maximum of 1 crashed node at a time.
\end{enumerate}

\subsubsection{Specification.}
At any point in time, a participant is either a \emph{follower}, \emph{candidate}, or \emph{leader}.
All participants begin as a \emph{follower}.
Whilst in this \emph{follower} state, participants cannot initiate communication, but rather only reply when spoken to.
\emph{Followers} may grant or deny votes when requested by \emph{candidates}.
If a \emph{follower} waits too long to receive any message, it will non-deterministically timeout and become a \emph{candidate}.
\emph{Candidates} send requests for votes to the whole cluster, and require a simple majority of votes to be elected.
Given the cluster size of 3 participants, and since \emph{candidates} always vote for themselves, only 1 external vote is required to win an election.
When a \emph{candidate} wins an election, it promotes to \emph{leader}.
\emph{Leaders} send out periodic heartbeat messages to stop \emph{followers} from timing out and starting new unnecessary elections.
Interested readers are directed to~\cite{DBLP:conf/usenix/OngaroO14} for more details.

\begin{example}[Leader Election]
    The algorithm under discussion is (quasi-)symmetric, meaning the type can be described for one 
    (generic) role and replicated for each participant.
    We use roles \role p, \role q, and \role r, to distinguish between the three distinct participants in the type.
    Note that unlike the rest of the work presented, \role p, \role q, and \role r are not representing 
    specific roles, but are placeholders to be instantiated.

    \footnotesize{\begin{align*}
        \begin{array}{l}
            \tSession_\role{p} = \tRec{F}{
                \tBranchL{
                    \begin{array}{l}
                        \tRcv{q}{hb}{}{\tSelect{\tSnd{q}{ok}{}{\recVar{F}},\ \tSnd{q}{ko}{}{\recVar{F}}}}, \\
                        \tRcv{r}{hb}{}{\tSelect{\tSnd{r}{ok}{}{\recVar{F}},\ \tSnd{r}{ko}{}{\recVar{F}}}}, \\[1em]
                        \tRcv{q}{rv}{}{\tSelect{\tSnd{q}{yes}{}{\recVar{F}},\ \tSnd{q}{no}{}{\recVar{F}}}}, \\
                        \tRcv{r}{rv}{}{\tSelect{\tSnd{r}{yes}{}{\recVar{F}},\ \tSnd{r}{no}{}{\recVar{F}}}}, \\[1em]
                        \tRcv{q}{yes}{}{F},\sep
                        \tRcv{r}{yes}{}{F},\sep
                        \tRcv{q}{no}{}{F},\sep
                        \tRcv{r}{no}{}{F},\\
                        \tRcv{q}{ok}{}{F},\sep
                        \tRcv{r}{ok}{}{F},\sep
                        \tRcv{q}{ko}{}{F},\sep
                        \tRcv{r}{ko}{}{F},\\[1em]
                        \cTimeOut{\ 
                            \tRec{C}{\ 
                                \tSnd{q}{rv}{}{\tSnd{r}{rv}{}{\ 
                                    \tRec{C$^\prime$}{\ 
                                        \tBranchL{
                                            \begin{array}{l}
                                                \tRcv{q}{hb}{}{\tSelect{\tSnd{q}{ok}{}{\recVar{F}},\ \tSnd{q}{ko}{}{\recVar{C$^\prime$}}}}, \\
                                                \tRcv{r}{hb}{}{\tSelect{\tSnd{r}{ok}{}{\recVar{F}},\ \tSnd{r}{ko}{}{\recVar{C$^\prime$}}}}, \\[1em]
                                                \tRcv{q}{rv}{}{\tSelect{\tSnd{q}{yes}{}{\recVar{F}},\ \tSnd{q}{no}{}{\recVar{C$^\prime$}}}}, \\
                                                \tRcv{r}{rv}{}{\tSelect{\tSnd{r}{yes}{}{\recVar{F}},\ \tSnd{r}{no}{}{\recVar{C$^\prime$}}}}, \\[1em]
                                                \tRcv{q}{no}{}{\recVar{C$^\prime$}},\sep
                                                \tRcv{r}{no}{}{\recVar{C$^\prime$}},\\
                                                \tRcv{q}{ok}{}{\recVar{C$^\prime$}},\sep
                                                \tRcv{r}{ok}{}{\recVar{C$^\prime$}},\\
                                                \tRcv{q}{ko}{}{\recVar{C$^\prime$}},\sep
                                                \tRcv{r}{ko}{}{\recVar{C$^\prime$}},\\[1em]
                                                \tRcv{q}{yes}{}{\tSession_\role{p}^L},\ 
                                                \tRcv{r}{yes}{}{\tSession_\role{p}^L}\\[1em]
                                                \cTimeOut{\ \recVar{C}}
                                            \end{array}
                                        }
                                    }
                                }}
                            }
                        }
                    \end{array}
                }
            } \\[14em]
            \tSession_\role{p}^L = \tRec{L}{
                \tSnd{q}{hb}{}{\tSnd{r}{hb}{}{
                    \tRec{L$^\prime$}{\ 
                        \tBranchL{
                            \begin{array}{l}
                                \tRcv{q}{hb}{}{\tSelect{\tSnd{q}{ok}{}{\recVar{F}},\ \tSnd{q}{ko}{}{\recVar{L$^\prime$}}}}, \\
                                \tRcv{r}{hb}{}{\tSelect{\tSnd{r}{ok}{}{\recVar{F}},\ \tSnd{r}{ko}{}{\recVar{L$^\prime$}}}}, \\[1em]
                                \tRcv{q}{rv}{}{\tSelect{\tSnd{q}{yes}{}{\recVar{F}},\ \tSnd{q}{no}{}{\recVar{L$^\prime$}}}}, \\
                                \tRcv{r}{rv}{}{\tSelect{\tSnd{r}{yes}{}{\recVar{F}},\ \tSnd{r}{no}{}{\recVar{L$^\prime$}}}}, \\[1em]
                                \tRcv{q}{yes}{}{\recVar{L$^\prime$}},\sep
                                \tRcv{r}{yes}{}{\recVar{L$^\prime$}},\sep
                                \tRcv{q}{no}{}{\recVar{L$^\prime$}},\sep
                                \tRcv{r}{no}{}{\recVar{L$^\prime$}},\\
                                \tRcv{q}{ok}{}{\recVar{L$^\prime$}},\sep
                                \tRcv{r}{ok}{}{\recVar{L$^\prime$}},\sep
                                \tRcv{q}{ko}{}{\recVar{L$^\prime$}},\sep
                                \tRcv{r}{ko}{}{\recVar{L$^\prime$}},\\[1em]
                                \cTimeOut{\ \recVar{L}}
                            \end{array}
                        }
                    }
                }}
            }
        \end{array}
    \end{align*}}

\end{example}


\end{document}